\documentclass[12pt]{article}

\usepackage{multibib}
\newcites{Supp}{REFERENCES}
\usepackage{multirow}
\usepackage{mathtools}
\usepackage{babel}
\usepackage{algorithm}
\usepackage[noend]{algpseudocode}
\usepackage[font=small,labelfont=bf]{caption}
\usepackage{subcaption}
\usepackage[authoryear]{natbib}
\usepackage{graphics}
\usepackage{amsmath,amsthm,amssymb,graphicx,epsfig}
\usepackage[utf8]{inputenc}
\usepackage{amsfonts,mathrsfs}
\usepackage{array,paralist,indentfirst,ctable,xspace}
\usepackage{paralist,indentfirst,ctable}
\usepackage{url}

\usepackage[pdftex]{hyperref}
\usepackage{soul}
% % to make \hl{} work with citations.
% \soulregister\cite7
% \soulregister\citep7
% \soulregister\ref7
% \soulregister\eqref7
% \soulregister\pageref7

% % undo highlighting
% \renewcommand{\hl}{}

% % \renewcommand\qedsymbol{$\blacksquare$}

\newcommand{\bc}{\begin{center}}
  \newcommand{\ec}{\end{center}}

\newtheorem{theorem}{Theorem}[section]
\newtheorem{lemma}[theorem]{Lemma}
\newtheorem{proposition}[theorem]{Proposition}

\theoremstyle{definition} \newtheorem{definition}{Definition}[section]
\theoremstyle{definition} 
\theoremstyle{remark} 
\newtheorem{assumption}{Assumption}

\newcommand{\R}{\ensuremath{\mathbb{R}}}

\newcommand{\ud}{\ensuremath{\mathrm{d}}}

\newcommand*\mc[1]{\multicolumn{4}{c}{\textbf{#1}}}

\DeclareMathOperator*{\argmin}{\arg\min}

% to ensure consistency
\newcommand{\bm}[1]{\ensuremath{\boldsymbol{#1}}}

% %%%%%%%%%%%%%%%%%%%%%%%%%%%%%%%%%%%%%%%%%%%%%%%%%%%%%%%%%%%%
% % Jiatong's macros/libraries
% %%%%%%%%%%%%%%%%%%%%%%%%%%%%%%%%%%%%%%%%%%%%%%%%%%%%%%%%%%%%
% \newcommand{\widesim}[2][1.5]{
%   \mathrel{\overset{#2}{\scalebox{#1}[1]{$\sim$}}}
% }
% \DeclareFontFamily{OMX}{yhex}{}
% \DeclareFontShape{OMX}{yhex}{m}{n}{<->yhcmex10}{}
% \DeclareSymbolFont{yhlargesymbols}{OMX}{yhex}{m}{n}
% \DeclareMathAccent{\wideparen}{\mathord}{yhlargesymbols}{"F3}

% \DeclareRobustCommand{\rchi}{{\mathpalette\irchi\relax}}
% \newcommand{\irchi}[2]{\raisebox{\depth}{$#1\chi$}} % inner command, used by \rchi

%\pdfminorversion=4
% NOTE: To produce blinded version, replace "0" with "1" below.
\newcommand{\blind}{0}
\newcommand{\showrevisions}{0}
\newif\ifshowrevisions
\showrevisionstrue   % Set the flag to true to display revised text in blue

\newcommand{\revised}[1]{%
    \if1\showrevisions%
        \textcolor{blue}{#1}%
    \else%
        #1%
    \fi%
}

% DON'T change margins - should be 1 inch all around.
\addtolength{\oddsidemargin}{-.5in}%
\addtolength{\evensidemargin}{-.5in}%
\addtolength{\textwidth}{1in}%
\addtolength{\textheight}{1.3in}%
\addtolength{\topmargin}{-.8in}%

\begin{document}

\def\spacingset#1{\renewcommand{\baselinestretch}%
{#1}\small\normalsize} \spacingset{1}

%%%%%%%%%%%%%%%%%%%%%%%%%%%%%%%%%%%%%%%%%%%%%%%%%%%%%%%%%%%%%%%%%%%%%%%%%%%%%%

\if0\blind
{
  \title{\bf Efficient Multidimensional Functional Data Analysis Using Marginal Product Basis Systems}
\author{William Consagra\\
  %Department of Biostatistics and Computational Biology,
  %University of Rochester \\
  Brigham and Women’s Hospital, Harvard Medical School \\
  Arun Venkataraman \\
  Department of Physics and Astronomy,
  University of Rochester \\
  Xing Qiu \\
  Department of Biostatistics and Computational Biology,
  University of Rochester
}
  \maketitle
} \fi

\if1\blind
{
  \bigskip
  \bigskip
  \bigskip
  \begin{center}
    {\LARGE\bf Efficient Multidimensional Functional Data Analysis Using Marginal Product Basis Systems}
\end{center}
  \medskip
} \fi

\bigskip
\begin{abstract}
In areas ranging from neuroimaging to climate science, advances in data storage and sensor technology have led to a proliferation in multidimensional functional datasets. A common approach to analyzing functional data is to first map the discretely observed functional samples into continuous representations, and then perform downstream statistical analysis on these smooth representations. It is well known that many of the traditional approaches used for 1D functional data representation are plagued by the curse of dimensionality and quickly become intractable as the dimension of the domain increases. In this paper, we propose a computational framework for learning continuous representations from a sample of multidimensional functional data that is immune to several manifestations of the curse. The representations are constructed using a set of separable basis functions that are defined to be optimally adapted to the data. We show that the resulting estimation problem can be solved efficiently by the tensor decomposition of a carefully defined reduction transformation of the observed data. Roughness-based regularization is incorporated using a class of differential operator-based penalties. Relevant theoretical properties are also discussed. The advantages of our method over competing methods are thoroughly demonstrated in simulations. We conclude with a real data application of our method to a clinical diffusion MRI dataset.
\end{abstract}
\noindent%
{\it Keywords:}  functional data; basis representation; tensor decomposition; functional principal component analysis
\vfill
\thispagestyle{empty}

\newpage
\spacingset{1.5} % DON'T change the spacing!
\pagenumbering{arabic} 

\section{Introduction}
Functional data analysis (FDA) is a subfield of statistics concerned with the analysis of samples of functions. In most applications, the functional sample is not observed directly, rather, at some discrete number of domain points $\boldsymbol{x}_{ij}\in\mathcal{M}\subset\mathbb{R}^{D}$ according to 
$$
Y_{ij} = U_i(\boldsymbol{x}_{ij}) + \epsilon_{ij}; \quad i=1,...,N; j=1,...,M_i
$$
where $N$ is the sample size, $M_i$ the number of domain points for the $i$'th sample, $U_i\sim U$ is a random function and $\epsilon_{ij}$ is additive noise. In many FDA workflows, the analyst needs to perform the initial step of estimating a smooth function $\widehat{U}_i$ from each subject's discretely observed noisy data, i.e. the ``smoothing first, then estimation'' approach \citep{chen2007}, which we refer to here as \textit{functional representation}. Various downstream analyses are then performed using these reconstructed smooth functions. In this work, we are interested in the function representation problem for \revised{\textit{multidimensional}} functional data, i.e., random functions defined on multidimensional \revised{($D>1$)} domains. \revised{Specifically,} we consider the situation where the sample is observed on a common $D$-dimensional grid, i.e. $\boldsymbol{x}_{ij}$ are the same for all subjects, a setting frequently encountered in various imaging applications. 
\par 
When $D=1$, functional representation can be accomplished using standard nonparametric approaches, e.g., local kernel regression or basis expansion \citep{ramsay2005,Hsing2015TheoreticalFO}. For $D > 1$, these approaches suffer from the curse of dimensionality: the number of observations required to obtain a desired mean-squared error \citep{stone1980} and/or the number of model parameters (e.g. the number of basis functions for a tensor product basis) grow exponentially in $D$ \citep{wasserman2010}. In this scenario, semiparametric regression approaches 
%\citep{ruppert2006} 
provide a route to tractable estimation, though the associated structural assumptions are often overly restrictive for real word data. Hence, developing a general framework for multidimensional functional data analysis demands a different approach to representation.
\par 
It is well known that the optimal (minimum mean integrated squared error) low-rank representation for functional data can be formed using the eigenfunctions of the covariance operator of the process, and thus these form an attractive basis system for parsimonious modeling. Unfortunately, estimating the eigenfunctions, i.e., performing functional principal component analysis (FPCA) \citep{silverman1996}, for a general $D$-dimensional random function requires the nonparametric estimation of the $2D$-dimensional covariance function, denoted $C(\boldsymbol{x},\boldsymbol{y})$. Some general techniques have recently been proposed \citep{chen2017,yingxing2019,wang2020}, though the curse of dimensionality is even more problematic in this situation, due to both the dimension doubling effect and the symmetry and positive semi-definite constraints. To alleviate the computational difficulties associated with estimating a generic covariance function of a multidimensional process, a common tactic is to assume some notion of separability for $C$ \citep{muller2017,lynch2018}. These works are mostly developed for the case when the domain naturally decomposes into a product of two spaces, e.g., space and time, though extending the theory to general $D$-dimensional domains is feasible. Computationally speaking, it is a different story. For instance, when $D > 2$, the marginal product FPCA in \cite{muller2017} requires multidimensional numerical integration or nonparametric smoothing in order to estimate the marginal covariance functions, re-introducing a manifestation of the curse of dimensionality.
\par
In what follows, we propose a framework for multidimensional function representation based on learning the optimal \emph{marginal product basis} (MPB), i.e., a collection of independent multiplicatively separable functions, that avoids both direct estimation of, and explicit structural assumptions on, $C$. 
%The MPB structure has been used to facilitate efficient procedures for a variety of related tasks in multidimensional function approximation from scattered data \citep{beylkin2009}, data observed on a grid \citep{grasedyck2013}, for estimating the fixed effects in functional ANOVA \citep{huang2009} and in the context of reduced basis methods and dictionary learning \citep{nouy2017}.
Critically, the number of parameters needed to estimate the MPB is \textit{linear} in $D$, and therefore this structure is effective for combating the curse of dimensionality. We prove that the optimal MPB defines a representation space that can be considered nearly optimal for a particular rank, with an inefficiency cost that becomes negligible for large ranks. 
%Under departures from separability, a similar near optimally result was shown for the marginal FPCA from \cite{muller2017}, but simulation results reported in Section~\ref{sec:simulation_studies} indicate our method is superior in practice.
To estimate the optimal MPB from the observed data, we identify an isometric embedding which allows the reparameterization of the observed data tensor into a lower dimensional space, permitting the derivation of a fast algorithm which scales favorably with huge datasets. This is in contrast to alternative methods for optimal basis construction for multidimensional functional data which rely on a smoothed decomposition of the raw data tensor \citep{huang2009, allen2013a, allen2019}, and as a result become computationally problematic for densely observed functions. Additionally, our approach enables a ``fully functional'' treatment of the estimation problem, in which the continuous basis functions are estimated directly, as opposed to the discrete factors estimated by the tensor decomposition approaches. Working directly with the continuous representations allows analytic computation of 
partial derivatives and inner products, thus facilitating efficient and stable two-stage algorithms for a variety of subsequent FDA tasks of interest. When working directly with the discrete data, such operations can be potentially numerically unstable (finite difference derivative approximation) or ill-defined (inner product between functions at different resolutions). 
%Code for our estimation algorithm and classes implementing the continuous representations have been made available in a python package: \url{https://github.com/Will-Consagra/eMFDA}
\par 
\revised{It is worth noting that the multidimensional set-up considered here has received significantly less attention than \textit{multivariate} functional data. Multivariate FDA involves observing multiple, potentially correlated functions for each of the $N$ samples, with most approaches focusing on the case where the domain of each univariate functional data is one-dimensional \citep{li2020}. Approaches to analyzing \textit{multivariate multidimensional} functional data typically first require a functional representation for each univariate multidimensional domain, which are then used for the multivariate analysis \citep{happ2018}. As a result, this work can be seamlessly integrated with such methods.} 
\par
The rest of the paper is organized as follows. In Section~\ref{sec:model_theory}, we formulate the optimal MPB system and discuss relevant theoretical properties. 
In Section~\ref{sec:marginal_product_function_estimation}, we derive an efficient estimation procedure and discuss the incorporation of roughness-based regularization using differential operators \revised{as well as hyperparameter selection}. In Section~\ref{sec:mfpca_proc}, we illustrate how to utilize the MPB to derive a fast two-stage multidimensional FPCA. Section~\ref{sec:simulation_studies} compares the proposed method with competing methods in simulation studies. In Section~\ref{sec:real_data_analysis}, we analyze a set of magnetic resonance imaging data from a traumatic brain injury study. Section~\ref{sec:discussion} offers concluding remarks and potential future directions. Proofs for all theorems can be found in Supplemental Materials. \revised{Code for our algorithms and scripts to reproduce all of the results in the simulation section have been made publicly available in a python package: \url{https://github.com/Will-Consagra/eMFDA}}.

\section{Marginal Product Models}\label{sec:model_theory}
\subsection{Background and Model Description}
Let $U$ be a multidimensional random function with real-valued square integrable realizations, i.e. $U_i \sim U \in \mathcal{H} := \mathbb{L}^{2}(\mathcal{M})$. We assume the domain can be decomposed as $\mathcal{M} = \mathcal{M}_{1} \times \dots \times \mathcal{M}_{D}$, so for $\bm{x} = (x_{1}, \dots, x_{D})' \in \mathcal{M}$; each $x_{d}$, for $d=1,\dots, D$, is a member in the marginal domain $\mathcal{M}_{d}$, which is assumed to be a compact subset of Euclidean space $\mathbb{R}^{p_{d}}$. Without loss of generality, we assume the Lebesgue measure of $\mathcal{M}$ is 1. We make the following regularity assumptions on $U$:
\begin{assumption}\label{asm:random_function_assumptions}
 (a) $\mathbb{E}[U] = 0$, (b) $\mathbb{E}[\int_{\mathcal{M}}U^2(\bm{x})\ud \bm{x}] < \infty$, and (c) $U$ is \textit{mean-square continuous}.
\end{assumption}
The mean zero assumption on $U$ is made for convenience of presentation, and the mean square integrability and mean-square continuity assumptions are a standard requirement \citep{Hsing2015TheoreticalFO}. Under Assumption~\ref{asm:random_function_assumptions}, we are guaranteed that the covariance function $C(\bm{x}, \bm{y}) := \mathbb{E} \left[U(\boldsymbol{x})U(\boldsymbol{y}) \right]$ is continuous on $\mathcal{M}\times\mathcal{M}$. By Mercer's theorem, this covariance function has an eigen-decomposition
$
C(\boldsymbol{x}, \boldsymbol{y}) = \overset{\infty}{\underset{k=1}{\sum}}\rho_{k} \psi_{k}(\boldsymbol{x})\psi_{k}(\boldsymbol{y})
$,
where $\{\psi_{k}\}_{k=1}^\infty$ forms a complete orthonormal sequence of eigenfunctions in $\mathcal{H}$ and $\{\rho_k\}_{k=1}^\infty$ is a non-increasing sequence of real, non-negative eigenvalues. Additionally, by the Karhunen-Lo\'eve theorem, with probability one we have the decomposition
$U(\boldsymbol{x}) = \overset{\infty}{\underset{k=1}{\sum}} Z_k \psi_k(\boldsymbol{x})$,
where $Z_k = \langle  U, \psi_k \rangle_{\mathcal{H}}$, which are mean zero random variables with $\mathbb{E}\left[Z_kZ_j\right] = \rho_k\mathbb{I}\{k=j\}$.
\par 
Let $H_{d} := \mathbb{L}^2(\mathcal{M}_d)$, so that $\mathcal{H} := \mathbb{L}^{2}(\mathcal{M}) = \bigotimes_{d=1}^{D} H_{d}$, the tensor product of $D$ member spaces. We assume that there exists a complete basis system, $\bm{\phi}_{d} := \{\phi_{d,j}\}_{j=1}^\infty$, for each marginal function space $H_{d}$. Denote their rank-$m_{d}$ truncations as $\boldsymbol{\phi}_{m_{d},d} = (\phi_{d,1},...,\phi_{d,m_{d}})^\prime$; $H_{m_{d},d} := \mathrm{span}(\boldsymbol{\phi}_{m_{d},d})$; and $\mathcal{H}_{\bm{m}} := \bigotimes_{d=1}^{D} H_{m_{d},d}$, where $\bm{m} = (m_{1}, \dots, m_{D})'$ are the marginal ranks. By construction,
$$
    \bm{\tau}_{\bm{m}} := \bigotimes_{d=1}^{D} \bm{\phi}_{m_{d},d} = \Big\{ \tau_{j_{1},\dots,j_{D}}(\bm{x}) = \prod_{d=1}^{D}\phi_{d,j_{d}}(x_{d}), j_{d} = 1,\dots, m_{d} \Big\}
$$
is the complete tensor product bases (TPB) for $\mathcal{H}_{\bm{m}}$.
\begin{definition}[Marginal product structure]
  \label{def:MPS}
  $\zeta \in \mathcal{H}$ is called a rank-1 marginal product function if it is \emph{multiplicatively separable}, and $u \in \mathcal{H}$ is called a rank-$K$ marginal product function if it is a linear combination of $K$ independent rank-1 marginal product functions:
  \begin{equation}
    \label{eq:rankK-mpf}
    u(\boldsymbol{x}) = \sum_{k=1}^{K} b_{k} \zeta_{k}(\mathbf{x}) = \sum_{k=1}^{K} b_{k} \prod_{d=1}^{D} \xi_{k,d}(x_d), \qquad \xi_{k,d} \in H_{d}, \quad b_{k} \in \R.
  \end{equation}
We denote the collection of rank-1 marginal product functions with marginal ranks $\boldsymbol{m}$:
\begin{equation}
  \label{eq:MPF1}
  \begin{aligned}
      \mathcal{L}_{\bm{m}} &:= \Big\{ \zeta(\bm{x}): \zeta(\bm{x}) = \prod_{d=1}^{D} \xi_{d}(x_{d}), \; \xi_{d} \in H_{m_{d},d}, \; \|\xi_{d}\|_{H_{d}}=1 \Big\}
  \end{aligned}
\end{equation}
\end{definition}
In this work, we propose to estimate the optimal basis set of $K$ elements from $\mathcal{L}_{\bm{m}}$ for representing realizations of $U$, a notion formalized as follows:
\begin{definition}[Optimal Rank-$K$ MPB]
  \label{def:best_K_MPF}
    Define the set of functions
    \begin{equation}\label{eq:indp_approx_space_finite_dim}
        \mathcal{V}_{K,\boldsymbol{m}} := \Big\{\boldsymbol{\zeta}=(\zeta_1,...,\zeta_K)^\prime: \zeta_k\in\mathcal{L}_{\boldsymbol{m}}, \zeta_1,...,\zeta_K\text{ linearly independent}\Big\},
    \end{equation}
    and define the associated optimal rank $K$ MPB, denoted $K$-oMPB, as 
    \begin{equation}\label{eqn:best_K_MPF_finite_dim}
        \boldsymbol{\zeta}^{*}_{\boldsymbol{m}} = \underset{\boldsymbol{\zeta} \in \mathcal{V}_{K,\boldsymbol{m}}}{\text{arg\,inf}} \quad  \mathbb{E}\Big\| U - P_{\boldsymbol{\zeta}}(U)\Big\|_{\mathcal{H}}^2.
    \end{equation}
  where $P_{\boldsymbol{\zeta}}$ is the projection operator onto
  $\text{span}(\boldsymbol{\zeta})$.
\end{definition}
Given a random sample of $N$ realizations $U_i\sim U$, define the corresponding empirical estimate of \eqref{eqn:best_K_MPF_finite_dim} as
\begin{equation}\label{eqn:best_K_MPF_finite_dim_empirical}
    \Breve{\boldsymbol{\zeta}}_{\boldsymbol{m},N}^{*} = \underset{\boldsymbol{\zeta} \in \mathcal{V}_{K,\boldsymbol{m}}}{\text{arg\,inf}} \quad  \frac{1}{N}\sum_{i=1}^N\Big\| U_i - P_{\boldsymbol{\zeta}}(U_i)\Big\|_{\mathcal{H}}^2
\end{equation}
\subsection{Approximation Properties}\label{ssec:approx_theory}
We now characterize the expected asymptotic approximation power of $\Breve{\boldsymbol{\zeta}}_{\boldsymbol{m},N}^{*}$. %The following definitions describe objects required for the statement of the result.
%\begin{definition}
  Let $\mathcal{A}_k$ be the $D$-mode tensor with elements $\mathcal{A}_k(j_{1},...,j_{D})$ defined by
  $$
  P_{\mathcal{H}_{\boldsymbol{m}}}(\psi_k) = \sum_{j_{1}=1}^{m_{1}}\cdots\sum_{j_{D}=1}^{m_{D}}\mathcal{A}_k(j_{1},...,j_{D})\phi_{1,j_{1}}\cdots\phi_{D,j_{D}},
  $$
  where $P_{\mathcal{H}_{\boldsymbol{m}}}$ is the projection operator onto $\mathcal{H}_{\boldsymbol{m}}$. Denote $\mathcal{A}^{(K)}$ as the tensor obtained by the $D+1$ mode stacking of $\mathcal{A}_1,...,\mathcal{A}_K$.
%\end{definition}
%\begin{definition}
Let $\boldsymbol{J}_{\boldsymbol{\phi}_{d}}\in\mathbb{R}^{m_{d}\times m_{d}}$ be the matrix of pairwise $\mathcal{H}_{m_{d},d}$ inner products of $\boldsymbol{\phi}_{m_{d},d}$. Define the inner-product space 
  $(\bigotimes_{d=1}^D\mathbb{R}^{m_{d}}\otimes\mathbb{R}^{K}, \langle\cdot,\cdot\rangle_{\tilde{F},C})$ where
  $$
  \begin{aligned}
   \langle\mathcal{T}_1,\mathcal{T}_2\rangle_{\tilde{F}, C} &= \sum_{k=1}^K\rho_k\langle\mathcal{T}_1(:,...,:,k),\mathcal{T}_2(:,...,:,k)\times_{1}\boldsymbol{J}_{\boldsymbol{\phi}_{1}}\cdots\times_{D}\boldsymbol{J}_{\boldsymbol{\phi}_{D}} \rangle_{F} \\
   %&= \sum_{k=1}^K\sum_{i_{1}=1}^{m_{1}}\cdots\sum_{i_{D}=1}^{m_{D}}\sum_{j_{1}=1}^{m_{1}}\cdots\sum_{j_{D}=1}^{m_{D}} \rho_k\mathcal{T}_1(i_1, ..., i_D,k)  \mathcal{T}_2(j_1, ..., j_D,k)\prod_{d=1}^D\boldsymbol{J}_{\boldsymbol{\phi}_{d}}(i_d,j_d)
  \end{aligned}
  $$
  for tensors $\mathcal{T}_1,\mathcal{T}_2\in \bigotimes_{d=1}^D\mathbb{R}^{m}\otimes\mathbb{R}^{K}$, where $\times_{d}$ denotes the tensor $d$-mode multiplication.
%\end{definition}
\begin{theorem}[Generalization Error]\label{thm:generalization_error}
  With slight abuse of notation, denote $P_{\Breve{\boldsymbol{\zeta}}_{\boldsymbol{m},N}^{*}}$ as the projection operator onto $\text{span}(\Breve{\boldsymbol{\zeta}}_{\boldsymbol{m},N}^{*})$ and let the function $w_{\bm{\tau}_{\boldsymbol{m}}}(\boldsymbol{m})$ be the $\mathbb{L}^2(\mathcal{M})$ convergence rate of the TPB system $\bm{\tau}_{\boldsymbol{m}}$ (Definition
  %~\ref{defn:marginal_convergence_rate} 
  S1.1
  in the supplemental materials). Under Assumptions~\ref{asm:random_function_assumptions} and 
  S2, S3 
  in the supplementary material,
  \begin{equation}\label{eqn:covergence_rate_general}
    \begin{aligned}
      \mathbb{E}\left\| U - P_{\Breve{\boldsymbol{\zeta}}_{\boldsymbol{m},N}^{*}}(U)\right\|_{\mathcal{H}}^2 \le & \sum_{k=K+1}^\infty\rho_k + \left\| \mathcal{A}^{(K)} - \widehat{\mathcal{A}}^{(K)}_K\right\|_{\tilde{F},C}^2 
      &+ O(w_{\bm{\tau}_{\boldsymbol{m}}}(\boldsymbol{m})) +  O_{p}(N^{-1/2})
    \end{aligned}
  \end{equation}
  where the expectation is taken with respect to a new realization of $U$ and $\widehat{\mathcal{A}}^{(K)}_K$ is the rank $K$ canonical polyadic decomposition of the coefficient tensor $\mathcal{A}^{(K)}$ under the $\left\| \cdot \right\|_{\tilde{F},C}$-norm.
\end{theorem}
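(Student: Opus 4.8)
The plan is to split the generalization error into the four additive contributions appearing in \eqref{eqn:covergence_rate_general}: the eigenvalue tail $\sum_{k>K}\rho_k$ (the irreducible cost of a rank-$K$ representation), the tensor-approximation defect $\|\mathcal{A}^{(K)} - \widehat{\mathcal{A}}^{(K)}_K\|_{\tilde{F},C}^2$ (the price paid for enforcing separability), the truncation bias $O(w_{\bm{\tau}_{\boldsymbol{m}}}(\boldsymbol{m}))$ (for working inside $\mathcal{H}_{\boldsymbol{m}}$ rather than $\mathcal{H}$), and the estimation error $O_p(N^{-1/2})$ (for using the empirical minimizer $\Breve{\boldsymbol{\zeta}}_{\boldsymbol{m},N}^{*}$ in place of the population minimizer $\boldsymbol{\zeta}^{*}_{\boldsymbol{m}}$). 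The first three are obtained by analyzing the population risk at its minimizer, and the last by a uniform deviation argument.

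First I would use the Karhunen-Lo\'eve structure to turn the stochastic objective into a deterministic weighted sum over eigenfunctions. For any fixed $\boldsymbol{\zeta}\in\mathcal{V}_{K,\boldsymbol{m}}$, linearity of $P_{\boldsymbol{\zeta}}$ together with $U = \sum_k Z_k\psi_k$ and $\mathbb{E}[Z_kZ_j] = \rho_k\mathbb{I}\{k=j\}$ makes the cross terms vanish, giving $\mathbb{E}\|U - P_{\boldsymbol{\zeta}}(U)\|_{\mathcal{H}}^2 = \sum_{k=1}^\infty\rho_k\|\psi_k - P_{\boldsymbol{\zeta}}(\psi_k)\|_{\mathcal{H}}^2$. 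I would split this at $k=K$: for $k>K$ the crude bound $\|\psi_k - P_{\boldsymbol{\zeta}}(\psi_k)\|_{\mathcal{H}}\le\|\psi_k\|_{\mathcal{H}}=1$ yields the uniform (in $\boldsymbol{\zeta}$) tail term $\sum_{k>K}\rho_k$. For $k\le K$ I would insert $P_{\mathcal{H}_{\boldsymbol{m}}}\psi_k$; since $\text{span}(\boldsymbol{\zeta})\subseteq\mathcal{H}_{\boldsymbol{m}}$ forces $P_{\boldsymbol{\zeta}}=P_{\boldsymbol{\zeta}}P_{\mathcal{H}_{\boldsymbol{m}}}$, the Pythagorean theorem gives $\|\psi_k - P_{\boldsymbol{\zeta}}\psi_k\|_{\mathcal{H}}^2 = \|\psi_k - P_{\mathcal{H}_{\boldsymbol{m}}}\psi_k\|_{\mathcal{H}}^2 + \|P_{\mathcal{H}_{\boldsymbol{m}}}\psi_k - P_{\boldsymbol{\zeta}}\psi_k\|_{\mathcal{H}}^2$. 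The first piece, summed against $\rho_k$ over all $k$, is the $\mathbb{L}^2(\mathcal{M})$ truncation error of the TPB system and is $O(w_{\bm{\tau}_{\boldsymbol{m}}}(\boldsymbol{m}))$ by the definition of its convergence rate (Definition~\ref{defn:marginal_convergence_rate}); being constant in $\boldsymbol{\zeta}$, it passes through the minimization unchanged.

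The crux is the remaining $\boldsymbol{\zeta}$-dependent piece $\sum_{k\le K}\rho_k\|P_{\mathcal{H}_{\boldsymbol{m}}}\psi_k - P_{\boldsymbol{\zeta}}\psi_k\|_{\mathcal{H}}^2$, which I would transport to the coefficient-tensor space. The map sending $f\in\mathcal{H}_{\boldsymbol{m}}$ to its TPB coefficient tensor is, because $\langle\prod_d\phi_{d,i_d},\prod_d\phi_{d,j_d}\rangle_{\mathcal{H}}=\prod_d\boldsymbol{J}_{\boldsymbol{\phi}_{d}}(i_d,j_d)$, an isometry onto $(\bigotimes_d\mathbb{R}^{m_d},\,\text{Gram-weighted norm})$; stacking over $k$ with the weights $\rho_k$ reproduces exactly $\langle\cdot,\cdot\rangle_{\tilde{F},C}$, and $P_{\mathcal{H}_{\boldsymbol{m}}}\psi_k$ corresponds to $\mathcal{A}_k$. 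Each $\zeta_r\in\mathcal{L}_{\boldsymbol{m}}$ is separable, so its coefficients form a rank-one tensor, and $P_{\boldsymbol{\zeta}}$ corresponds to the $\tilde{F},C$-orthogonal projection of $\mathcal{A}^{(K)}$ onto the span of these $K$ rank-one tensors. Minimizing over $\boldsymbol{\zeta}$ is therefore the same as minimizing $\|\mathcal{A}^{(K)} - \widehat{\mathcal{A}}^{(K)}\|_{\tilde{F},C}^2$ over rank-$K$ tensors $\widehat{\mathcal{A}}^{(K)}$, whose infimum is attained (in the sense used in the statement) by the rank-$K$ canonical polyadic decomposition $\widehat{\mathcal{A}}^{(K)}_K$. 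Because the tail bound and the truncation term are both free of $\boldsymbol{\zeta}$, minimizing the whole risk gives $\mathbb{E}\|U - P_{\boldsymbol{\zeta}^{*}_{\boldsymbol{m}}}(U)\|_{\mathcal{H}}^2 \le \sum_{k>K}\rho_k + \|\mathcal{A}^{(K)} - \widehat{\mathcal{A}}^{(K)}_K\|_{\tilde{F},C}^2 + O(w_{\bm{\tau}_{\boldsymbol{m}}}(\boldsymbol{m}))$.

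Finally I would replace $\boldsymbol{\zeta}^{*}_{\boldsymbol{m}}$ by the empirical minimizer via a standard M-estimation argument on the finite-dimensional manifold $\mathcal{V}_{K,\boldsymbol{m}}$. Writing $R$ and $R_N$ for the population and empirical risks and using $R_N(\Breve{\boldsymbol{\zeta}}_{\boldsymbol{m},N}^{*})\le R_N(\boldsymbol{\zeta}^{*}_{\boldsymbol{m}})$, the excess risk is controlled by $R(\Breve{\boldsymbol{\zeta}}_{\boldsymbol{m},N}^{*}) - R(\boldsymbol{\zeta}^{*}_{\boldsymbol{m}})\le 2\sup_{\boldsymbol{\zeta}\in\mathcal{V}_{K,\boldsymbol{m}}}|R_N(\boldsymbol{\zeta}) - R(\boldsymbol{\zeta})|$, and a uniform law of large numbers at rate $N^{-1/2}$ under the supplementary regularity conditions S2--S3 supplies the $O_p(N^{-1/2})$ term, completing the bound. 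I expect the main obstacle to be precisely this empirical-process step: obtaining a genuine $\sqrt{N}$ rate requires verifying that the loss class $\{U\mapsto\|U - P_{\boldsymbol{\zeta}}(U)\|_{\mathcal{H}}^2:\boldsymbol{\zeta}\in\mathcal{V}_{K,\boldsymbol{m}}\}$ has a well-behaved envelope and controlled complexity over $\mathcal{V}_{K,\boldsymbol{m}}$ (which carries the non-uniqueness and scaling ambiguities inherent to rank-one parametrizations), and this is where S2--S3 must do their work. By contrast, the tensor identification, though the conceptual heart of the result, is a clean consequence of the $\langle\cdot,\cdot\rangle_{\tilde{F},C}$ isometry once that isometry is set up.
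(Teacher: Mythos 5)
Your proposal is correct and, for the deterministic part of the bound, follows essentially the same path as the paper: isolate the eigenvalue tail via the rank-$K$ eigenfunction projection, peel off the $\mathcal{H}_{\boldsymbol{m}}$-truncation error as $O(w_{\bm{\tau}_{\boldsymbol{m}}}(\boldsymbol{m}))$, and transport the remaining $\boldsymbol{\zeta}$-dependent piece through the Gram-weighted isometry $\iota$ so that minimizing over $\mathcal{V}_{K,\boldsymbol{m}}$ becomes the rank-$K$ CP approximation of $\mathcal{A}^{(K)}$ in $\|\cdot\|_{\tilde{F},C}$ (this is the content of Lemma~\ref{lem:generalization_error} and the Term$_1$/Term$_2$ computation in the paper's proof). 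Your organization is in fact slightly cleaner: by working eigenfunction-by-eigenfunction with an exact Pythagorean identity (valid because $\psi_k-P_{\mathcal{H}_{\boldsymbol{m}}}\psi_k\perp\mathcal{H}_{\boldsymbol{m}}$ while $P_{\mathcal{H}_{\boldsymbol{m}}}\psi_k-P_{\boldsymbol{\zeta}}\psi_k\in\mathcal{H}_{\boldsymbol{m}}$), you avoid the paper's repeated use of a triangle-type inequality for \emph{squared} norms, which as written costs a constant factor it does not acknowledge. The genuine divergence is in the $O_p(N^{-1/2})$ term. The paper first proves a parametric rate $\text{vec}(\Breve{\boldsymbol{C}}_N)-\text{vec}(\boldsymbol{C}^*)=O_p(N^{-1/2})$ (Theorem~\ref{thm:convergence_rate}, via the Lipschitz envelope of Lemma~\ref{lem:lipshitz} and Corollary 5.53 of van der Vaart) and then pushes this through the Lipschitz dependence of $P_{\boldsymbol{C}}$ on $\boldsymbol{C}$; you instead bound the excess population risk directly by $2\sup_{\boldsymbol{\zeta}}|R_N(\boldsymbol{\zeta})-R(\boldsymbol{\zeta})|$ and invoke a uniform CLT for the loss class. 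Both routes rest on the same ingredients (compactness of $\boldsymbol{\Theta}_{K,\boldsymbol{m}}$ after resolving the scaling/permutation ambiguities, the square-integrable Lipschitz envelope supplied by the moment conditions in Assumption S2, and the well-separated minimum guaranteed by Assumption S3); your route is more direct since it never needs parameter consistency, while the paper's yields the stronger parametric convergence statement as a by-product. You correctly flag the empirical-process step as the place where S2--S3 carry the load.
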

Theorem~\ref{thm:generalization_error} bounds the expected generalization error of $\Breve{\boldsymbol{\zeta}}_{\boldsymbol{m},N}^{*}$ by the sum of four terms. The first term is the tail sum of the eigenvalues, which is the expected generalization error of the optimal rank $K$ basis system (the eigenfunctions). The third term is the irreducible bias from the finite truncation of the marginal ranks $\boldsymbol{m}$. The fourth term reflects the finite sample statistical approximation error and can be established using convergence results from the theory of M-estimators. The second term shows that the inefficiency cost incurred by representing the function realizations with $\Breve{\boldsymbol{\zeta}}_{\boldsymbol{m},N}^{*}$, as opposed to the eigenfunctions, is driven by the low-rank structure of $\mathcal{A}^{(K)}$. This term is unknown in practice but vanishes for large enough $K$ (often, $K\ll \prod_{d=1}^Dm_d)$ though this value will depend on the particular tensor product space $\bm{\tau}$, marginal ranks $\boldsymbol{m}$ and covariance function $C$. Please visit Section S1 of the Supplemental Materials for further technical discussion on these and related theoretical matters. We now turn our attention to the development of computationally efficient algorithms to estimate $\Breve{\boldsymbol{\zeta}}_{\boldsymbol{m},N}^{*}$ in practice. 

\section{Estimation}\label{sec:marginal_product_function_estimation}
\subsection{Discrete Observation Model}
We consider the case in which the $U_i$ are observed with noise at each discrete location of a multidimensional grid $\mathcal{X}\subset\mathcal{M}$, where 
$$
\mathcal{X} = (x_{11}, x_{12}, ..., x_{1n_{1}})^\prime\times (x_{21}, x_{22}, ..., x_{2n_{2}})^\prime  \times\cdot\cdot\cdot\times (x_{D1}, x_{D2}, ..., x_{Dn_{D}})^\prime,
$$
and each vector of marginal grid points $\boldsymbol{x}_d  := (x_{d1}, x_{d2}, ..., x_{dn_{d}})\in \mathcal{M}_d$, according to the canonical observation model
\begin{equation}\label{eqn:statistical_model}
  \mathcal{Y}(i_1, i_2, ..., i_D, i) = U_i(x_{1,i_1}, x_{2,i_2}, ..., x_{D, i_D}) + \mathcal{E}(i_1, i_2, ..., i_D, i)
\end{equation}
for $i_d = 1, 2, ..., n_d$, $d = 1, 2, ..., D$, $i=1,2,...,N$, where $\mathcal{Y}$ is a $D+1$-mode tensor with dimensions $(n_1, n_2, ..., n_D, N)$,  $\mathbb{E}[\text{vec}(\mathcal{E})] = \boldsymbol{0}$ and $\text{Var}[\text{vec}(\mathcal{E})] = \sigma^2\boldsymbol{I}$. The discretized counterpart to \eqref{eqn:best_K_MPF_finite_dim_empirical} is given by 
\begin{equation}\label{eqn:best_K_MPF_finite_dim_empirical_discrete}
          \widehat{\boldsymbol{\zeta}}_{N,\boldsymbol{m}}^{*} := \quad \underset{\boldsymbol{\zeta} \in \mathcal{V}_{K,\boldsymbol{m}}}{\text{arg\,inf}} \min_{\boldsymbol{B}\in\mathbb{R}^{N\times K}}\quad  \frac{1}{N}\sum_{i=1}^N\left\| \mathcal{Y}_i - \sum_{k=1}^K \boldsymbol{B}_{i,k}\bigotimes_{d=1}^D\boldsymbol{\xi}_{d,k}\right\|_{F}^2.
\end{equation}
where $\mathcal{Y}_i\in\mathbb{R}^{n_{1}\times\cdots\times n_{D}}$ is the observed data tensor for the $i$th realization, $\boldsymbol{\xi}_{d,k} \in \mathbb{R}^{n_{d}}$ is the evaluation of $\xi_{k,d}$ on $\boldsymbol{x}_d$ and $\boldsymbol{B}$ is the matrix of coefficients for the $\zeta_{k}$'s for each of the $N$ samples. 
%One approach to solving \eqref{eqn:best_K_MPF_finite_dim_empirical_discrete} is to estimate the $\boldsymbol{\xi}_{d,k}$ using the (regularized) tensor decomposition of $\mathcal{Y}$, e.g. by the FCP-TPA method developed in \cite{allen2013a}, and then perform a basis expansion over $\boldsymbol{\phi}_{m_{d},d}$, i.e., a decompose and then represent approach. Unfortunately, this approach scales poorly for large tensors, as the number of unknown parameters in the decomposition increases unboundedly as $\mathcal{X}$ becomes arbitrarily dense. In this section, we propose an alternative approach which is able to both estimate the basis functions directly and control the dimension of the optimization problem. 

\subsection{A Convenient Reparameterization}\label{sssec:mpf_representation}
First we note that for $\boldsymbol{\zeta}\in\mathcal{V}_{K,\boldsymbol{m}}$, we have the representation $\xi_{k,d}(x_d) = \sum_{j=1}^{m_{d}} c_{d,k,j}\phi_{d,j}(x_d)$. Consequently  $\widehat{\boldsymbol{\zeta}}_{N,\boldsymbol{m}}^{*}$ is equivalently defined by the solutions to the following optimization problem
\begin{equation}\label{eqn:empirical_optimal_mpf_problem_finite_dim_reparam}
      (\widehat{\boldsymbol{C}}_1,...,\widehat{\boldsymbol{C}}_D) := \underset{(\boldsymbol{C}_1,...,\boldsymbol{C}_D)}{\text{arg\,inf}} \min_{\boldsymbol{B}}\frac{1}{N}\sum_{i=1}^N\left\| \mathcal{Y}_i - \sum_{k=1}^K \boldsymbol{B}_{i,k}\bigotimes_{d=1}^D\boldsymbol{\Phi}_d\boldsymbol{c}_{d,k}\right\|_{F}^2
\end{equation}
where $\boldsymbol{\Phi}_{d} \in \mathbb{R}^{n_{d}\times m_{d}}$ is the evaluation of $\boldsymbol{\phi}_{m_{d},d}$ on the marginal grid $\boldsymbol{x}_d$, i.e. $\boldsymbol{\Phi}_{d, i_{d} j_{d}} := \phi_{d,j_{d}}(x_{d,i_{d}})$, $\boldsymbol{c}_{d,k}\in\mathbb{R}^{m_{d}}$ is the coefficient of $\xi_{k,d}$, and $\boldsymbol{C}_d$ is the matrix whose columns are the $\boldsymbol{c}_{d,k}$. \revised{Note that the unit $\|\cdot\|_{H_{d}}$ norm constraint in the definition of $\mathcal{V}_{K,\boldsymbol{m}}$ is translated to a norm constraint on the columns of the $\boldsymbol{C}_{d}$'s that must be incorporated for identifiability reasons. For clarity of presentation, we defer additional discussion of this and related identifiability matters to Section %~\ref{sec:theory_and_proofs}
S1 of the supplemental materials.} 
\par 
Denote the SVD of the basis evaluation matrices $\boldsymbol{\Phi}_{d} = \boldsymbol{U}_{d}\boldsymbol{D}_{d}\boldsymbol{V}_{d}'$. In general, we have $n_{d} > m_{d}$, so $\boldsymbol{U}_{d} \in \mathbb{R}^{n_{d}\times m_{d}}$ is a semi-orthogonal matrix; $\boldsymbol{D}_{d} \in \mathbb{R}^{m_{d}\times m_{d}}$ is an invertible diagonal matrix; and $\boldsymbol{V}_{d} \in \mathbb{R}^{m_{d}\times m_{d}}$ is an orthogonal matrix. For any $\boldsymbol{\zeta}\in\mathcal{V}_{K,\boldsymbol{m}}$, the evaluation of the MPB functions $\boldsymbol{\xi}_{d}=(\xi_{d1}, ..., \xi_{dK})^\prime$ on $\boldsymbol{x}_d$, is represented as
$$
\boldsymbol{\Xi}_{d} = \boldsymbol{\Phi}_{d} \boldsymbol{C}_{d} = \boldsymbol{U}_{d} \boldsymbol{D}_{d}\boldsymbol{V}_{d}' \boldsymbol{C}_{d} = \boldsymbol{U}_{d} \tilde{\boldsymbol{C}}_{d}, \qquad \boldsymbol{\tilde{C}}_{d} := \boldsymbol{D}_{d}\boldsymbol{V}_{d}' \boldsymbol{C}_{d}.
$$
\par 
The following theorem proves the equivalence between the solution of Equation~\eqref{eqn:empirical_optimal_mpf_problem_finite_dim_reparam} and the rank-$K$ canonical polyadic decomposition (CPD) of an appropriately defined transformation of the observed data tensor $\mathcal{Y}$.
\par
\begin{theorem}[Functional Tensor Decomposition Theorem]\label{thm:td_estimator_equivalence}
  Define $\widehat{\mathcal{G}} := \mathcal{Y} \times_{1} \boldsymbol{U}_{1}^\prime \times_{2} \boldsymbol{U}_{2}^\prime \cdots \times_{D} \boldsymbol{U}_{D}^\prime$, which is a $(D+1)$-mode tensor with dimensions $(m_1, m_2, ..., m_D, N)$, and denote its rank-$K$ decomposition by $\widehat{\mathcal{G}}_{K}(\boldsymbol{B}, \boldsymbol{\tilde{C}}) = \overset{K}{\underset{k=1}{\sum}} \Big[\overset{D}{\underset{d=1}{\bigotimes}} \boldsymbol{\tilde{c}}_{d,k} \Big]\otimes \boldsymbol{b}_{k}$, with factor matrices  $\boldsymbol{B}\in\mathbb{R}^{N\times K}$ and $\boldsymbol{\tilde{C}} = [\boldsymbol{\tilde{C}}_1, \dots, \boldsymbol{\tilde{C}}_D]$, $\boldsymbol{\tilde{C}}_d \in \mathbb{R}^{m_d\times K}$. $\boldsymbol{\tilde{c}}_{d,k}$ and $\boldsymbol{b}_{k}$ are the $k$th column of $\boldsymbol{\tilde{C}}_d$ and $\boldsymbol{B}$, respectively.  The optimization problem \eqref{eqn:empirical_optimal_mpf_problem_finite_dim_reparam} has the following solutions: 
  $\widehat{\boldsymbol{B}}=\boldsymbol{B}$ and $\widehat{\boldsymbol{C}}_d=\boldsymbol{V}_{d}\boldsymbol{D}_{d}^{-1}\boldsymbol{\tilde{C}}_d$, for $d=1, \dots ,D$.
\end{theorem}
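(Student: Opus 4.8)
\emph{Proof idea.} The plan is to substitute the SVD $\boldsymbol{\Phi}_d = \boldsymbol{U}_d\boldsymbol{D}_d\boldsymbol{V}_d'$ into the criterion of \eqref{eqn:empirical_optimal_mpf_problem_finite_dim_reparam}, show it splits into a term constant in the optimization variables plus the rank-$K$ CPD reconstruction error of $\widehat{\mathcal{G}}$, and then read off the stated formulas via a change of variables. Concretely, first I would rewrite each rank-one summand: using $\boldsymbol{\Phi}_d\boldsymbol{c}_{d,k} = \boldsymbol{U}_d\boldsymbol{D}_d\boldsymbol{V}_d'\boldsymbol{c}_{d,k} = \boldsymbol{U}_d\tilde{\boldsymbol{c}}_{d,k}$ with $\tilde{\boldsymbol{c}}_{d,k} := \boldsymbol{D}_d\boldsymbol{V}_d'\boldsymbol{c}_{d,k}$, the $i$th approximant becomes $\sum_{k=1}^K \boldsymbol{B}_{i,k}\bigotimes_{d=1}^D \boldsymbol{U}_d\tilde{\boldsymbol{c}}_{d,k} = \mathcal{S}_i\times_1\boldsymbol{U}_1\cdots\times_D\boldsymbol{U}_D$, where $\mathcal{S}_i := \sum_{k=1}^K \boldsymbol{B}_{i,k}\bigotimes_{d=1}^D\tilde{\boldsymbol{c}}_{d,k}$ is exactly the $i$th mode-$(D{+}1)$ slice of $\widehat{\mathcal{G}}_K(\boldsymbol{B},\tilde{\boldsymbol{C}})$. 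In particular every admissible approximant lies in the product subspace $\mathcal{R} := \mathrm{col}(\boldsymbol{U}_1)\otimes\cdots\otimes\mathrm{col}(\boldsymbol{U}_D)$.

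Next I would invoke an orthogonal (Pythagorean) splitting. Let $P_d := \boldsymbol{U}_d\boldsymbol{U}_d'$ be the orthogonal projector onto $\mathrm{col}(\boldsymbol{U}_d)$ and let $P$ act by $P\mathcal{T} := \mathcal{T}\times_1 P_1\cdots\times_D P_D$, the orthogonal projection onto $\mathcal{R}$. Since the approximant lies in $\mathcal{R}$,
\begin{equation*}
  \Big\| \mathcal{Y}_i - \mathcal{S}_i\times_1\boldsymbol{U}_1\cdots\times_D\boldsymbol{U}_D\Big\|_F^2 = \big\| \mathcal{Y}_i - P\mathcal{Y}_i\big\|_F^2 + \big\| P\mathcal{Y}_i - \mathcal{S}_i\times_1\boldsymbol{U}_1\cdots\times_D\boldsymbol{U}_D\big\|_F^2,
\end{equation*}
the cross term vanishing by orthogonality of $\mathcal{R}$ and its complement under the Frobenius inner product. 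The first term involves neither $\boldsymbol{B}$ nor the $\boldsymbol{C}_d$, hence is an additive constant.

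For the second term I would use $P\mathcal{Y}_i = \widehat{\mathcal{G}}_i\times_1\boldsymbol{U}_1\cdots\times_D\boldsymbol{U}_D$, where $\widehat{\mathcal{G}}_i = \mathcal{Y}_i\times_1\boldsymbol{U}_1'\cdots\times_D\boldsymbol{U}_D'$ is the $i$th slice of $\widehat{\mathcal{G}}$ (via the composition rule: $\times_d\boldsymbol{U}_d'$ followed by $\times_d\boldsymbol{U}_d$ equals $\times_d P_d$). The bracket then equals $(\widehat{\mathcal{G}}_i - \mathcal{S}_i)\times_1\boldsymbol{U}_1\cdots\times_D\boldsymbol{U}_D$, and since $\boldsymbol{U}_d'\boldsymbol{U}_d = \boldsymbol{I}_{m_d}$ mode-wise multiplication by the $\boldsymbol{U}_d$ preserves the Frobenius norm, so the bracket has $\|\,\cdot\,\|_F$ equal to $\|\widehat{\mathcal{G}}_i - \mathcal{S}_i\|_F$ (immediate after vectorization, since $(\boldsymbol{U}_D\otimes\cdots\otimes\boldsymbol{U}_1)'(\boldsymbol{U}_D\otimes\cdots\otimes\boldsymbol{U}_1) = \boldsymbol{I}$). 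Summing over $i$ shows the criterion equals $\tfrac{1}{N}\|\mathcal{Y}-P\mathcal{Y}\|_F^2 + \tfrac{1}{N}\|\widehat{\mathcal{G}} - \widehat{\mathcal{G}}_K(\boldsymbol{B},\tilde{\boldsymbol{C}})\|_F^2$, i.e., up to the constant, exactly the rank-$K$ CPD error of $\widehat{\mathcal{G}}$.

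Finally, the map $\boldsymbol{C}_d\mapsto\tilde{\boldsymbol{C}}_d = \boldsymbol{D}_d\boldsymbol{V}_d'\boldsymbol{C}_d$ is a linear bijection on $\R^{m_d\times K}$ ($\boldsymbol{D}_d$ invertible diagonal, $\boldsymbol{V}_d$ orthogonal) and $\boldsymbol{B}$ is unchanged, so minimizing the original criterion over $(\boldsymbol{C}_1,\dots,\boldsymbol{C}_D,\boldsymbol{B})$ is equivalent to minimizing the CPD error over $(\tilde{\boldsymbol{C}}_1,\dots,\tilde{\boldsymbol{C}}_D,\boldsymbol{B})$. Reading off the CPD minimizer yields $\widehat{\boldsymbol{B}} = \boldsymbol{B}$ and, inverting the change of variables, $\widehat{\boldsymbol{C}}_d = \boldsymbol{V}_d\boldsymbol{D}_d^{-1}\tilde{\boldsymbol{C}}_d$. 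I expect the main obstacle to be bookkeeping rather than conceptual: one must verify that $P\mathcal{T} = \mathcal{T}\times_1 P_1\cdots\times_D P_D$ really is the orthogonal projector onto $\mathcal{R}$ (so the Pythagorean cross term vanishes), and check that the linear-independence constraint defining $\mathcal{V}_{K,\boldsymbol{m}}$ corresponds, under the bijection, to the genuine rank-$K$ (non-degeneracy) condition on the CPD factors, so that the feasible sets on the two sides match.
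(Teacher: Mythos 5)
Your proposal is correct and follows essentially the same route as the paper's proof: substitute the SVD $\boldsymbol{\Phi}_d=\boldsymbol{U}_d\boldsymbol{D}_d\boldsymbol{V}_d'$, pull the $\boldsymbol{U}_d$'s outside the sum via $d$-mode multiplication, reduce to the rank-$K$ CPD of $\widehat{\mathcal{G}}$, and invert the change of variables $\tilde{\boldsymbol{C}}_d=\boldsymbol{D}_d\boldsymbol{V}_d'\boldsymbol{C}_d$. If anything, your Pythagorean splitting is more careful than the paper's last step, which invokes ``invariance of the Frobenius norm under orthogonal transformation'' as a literal equality even though the $\boldsymbol{U}_d$ are only semi-orthogonal; the two objectives in fact differ by the variable-independent constant $\|\mathcal{Y}-P\mathcal{Y}\|_F^2$, which is precisely what your decomposition makes explicit and which suffices for the equivalence of minimizers.
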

%\begin{proof}
%Refer to Section~\ref{sec:theory_and_proofs} of the Supplemental Materials.
%\end{proof}
Theorem~\ref{thm:td_estimator_equivalence} reveals that estimating the $K$-oMPB is equivalent to the CPD of the compressed $\widehat{\mathcal{G}}$ tensor. As the dimensionality of $\widehat{\mathcal{G}}$ is controlled by the marginal ranks $m_{d}$, as opposed to the number of marginal grid points $n_d$, this defines a practical reduction transformation which permits user control of the dimensionality of the optimization problem.

\subsection{Regularization in the Transformed Space}\label{sssec:regularization}
In order to ameliorate the influence of noise and discretization and to promote smoothness in the estimated $K$-oMPB basis, we incorporate a regularization term to the objective function in Equation~\eqref{eqn:empirical_optimal_mpf_problem_finite_dim_reparam} of the form: 
$$
\sum_{k=1}^K\mathrm{Pen}(\zeta_k) = \sum_{k=1}^K\int_{\mathcal{M}} \sum_{d=1}^D \lambda_{d} L^2_d(\xi_{k,d})
$$
for some $\lambda_d > 0$, where $L_d:\mathbb{W}^{\alpha_{d},2}(\mathcal{M}_d)\rightarrow\mathbb{L}^2(\mathcal{M}_d)$ is a linear (partial) differential operator, and $\mathbb{W}^{\alpha_{d},2}(\mathcal{M}_d)$ is the Sobolev space over the $d$th marginal domain, with order $\alpha_d$ defined appropriately.
\begin{proposition}\label{prop:convex_penalty_prop}
\revised{Let $\boldsymbol{T}_d:=\boldsymbol{D}_d^{-1}\boldsymbol{V}_d^{\prime}\boldsymbol{R}_d\boldsymbol{V}_d\boldsymbol{D}_d^{-1}$, with $\boldsymbol{R}_d(i,j) = \int_{\mathcal{M}_{d}}L_d(\phi_{d,i})L_d(\phi_{d,j})$, then $\sum_{k=1}^K\mathrm{Pen}(\zeta_k)= \sum_{d=1}^D\lambda_d\mathrm{tr}(\boldsymbol{\tilde{C}}^\prime_d\boldsymbol{T}_{d}\boldsymbol{\tilde{C}}_d)$.}
\end{proposition}
As a consequence, such penalties are quadratic in the transformed coordinate matrices $\boldsymbol{\tilde{C}}_d$ and therefore convex. This permits the derivation of efficient numerical algorithms to estimate the optimal MPB functions, which will be discussed in Section~\ref{ssec:algorithm}.
\par
Penalization of the coefficient matrix $\boldsymbol{B}$ is incorporated using the penalty function denoted $l(\boldsymbol{B})$. We assume that $l(\boldsymbol{B})$ is convex, which is a requirement to guarantee the convergence of the algorithm in Section~\ref{ssec:algorithm}, but otherwise leave it's form unspecified. \revised{For example, lasso-type penalties can be integrated to promote sparsity in the MPB functional representation for interpretability or feature extraction.}
%for interpretability or feature extraction.
%For example, if sparsity is desirable, lasso-type penalties can be seamlessly integrated. 
Using the results from Theorem~\ref{thm:td_estimator_equivalence} and Proposition~\ref{prop:convex_penalty_prop}, the solution to the regularized augmentation of Equation~\eqref{eqn:empirical_optimal_mpf_problem_finite_dim_reparam} is a linear transformation of the solution to
\begin{equation}\label{eqn:Common_Basis_Optimizer_Transform_Tilde_with_regularization}
  \argmin_{\boldsymbol{\tilde{C}_1}, ..., \boldsymbol{\tilde{C}_D}, \boldsymbol{B}}\left\| \mathcal{\widehat{G}}- \sum_{k=1}^{K} \bigotimes_{d=1}^{D} \boldsymbol{\tilde{c}}_{d,k} \otimes \boldsymbol{b}_{k} \right\|_{F}^2 + \sum_{d=1}^D\lambda_{d}\mathrm{tr}(\boldsymbol{\tilde{C}}^\prime_d\boldsymbol{T}_{d}\boldsymbol{\tilde{C}}_d) + \lambda_{D+1}l(\boldsymbol{B}).
\end{equation}

\subsection{Algorithm}\label{ssec:algorithm}
In general, it can be shown that the optimization problem~\eqref{eqn:Common_Basis_Optimizer_Transform_Tilde_with_regularization} is non-convex and NP-hard \citep{hillar2013}. To derive a computationally tractable approximation algorithm, we propose a block coordinate descent based approach in which, for the $(r+1)$'th iteration, the variables are updated according to the sequence of conditional minimization problems
\begin{equation}\label{eqn:BCD}
  \tilde{\boldsymbol{C}}_{d}^{(r+1)} = \min_{\boldsymbol{X}}g(\tilde{\boldsymbol{C}}^{(r+1)}_1, ...,\tilde{\boldsymbol{C}}^{(r+1)}_{d-1}, \boldsymbol{X}, \tilde{\boldsymbol{C}}^{(r)}_{d+1}, ..., \tilde{\boldsymbol{C}}^{(r)}_{D}, \boldsymbol{B}^{(r)}), 
\end{equation}
for $d=1,\dots,D$ and likewise for $\boldsymbol{B}^{(r+1)}$, where  $g$ denotes the objective function from \eqref{eqn:Common_Basis_Optimizer_Transform_Tilde_with_regularization}.
\par
Using the properties of the $d$-mode matricization, we can write the conditional minimization problem defining the update of $\boldsymbol{\tilde{C}}_d$ as
\begin{equation}\label{eqn:tilde_d_update}
  \boldsymbol{\tilde{C}}_d^{(r+1)} = \min_{\boldsymbol{\tilde{C}}_{d}} \|\boldsymbol{G}_{(d)}^\prime - \boldsymbol{\tilde{C}}_d \boldsymbol{W}_d^{(r)\prime}\|_{F}^2 + \lambda_d\mathrm{tr}(\boldsymbol{\tilde{C}}^\prime_d\boldsymbol{T}_{d}\boldsymbol{\tilde{C}}_d).
\end{equation}
The update for $\boldsymbol{B}$ is given by
\begin{equation}\label{eqn:s_update}
  \boldsymbol{B}^{(r+1)} = \min_{\boldsymbol{B}}\|\boldsymbol{G}_{(D+1)} - \boldsymbol{W}^{(r)}_{D+1}\boldsymbol{B}^\prime\|_{F}^2 +\lambda_{D+1}l(\boldsymbol{B}),
\end{equation}
where $\boldsymbol{W}_d^{(r)} = (\bigodot_{j< d}^D \boldsymbol{\tilde{C}}_j^{(r+1)}\bigodot_{j > d}^D \boldsymbol{\tilde{C}}_j^{(r)})\odot \boldsymbol{B}^{(r)}$ for $d=1,...,D$, $\boldsymbol{W}_{D+1}^{(r)} = \bigodot_{d=1}^D \boldsymbol{\tilde{C}}_d^{(r+1)}$ and $\boldsymbol{G}_{(d)}$ is the $d$-mode unfolding of tensor $\mathcal{\widehat{G}}$. Here $\odot$ is the Khatri–Rao product. From here on the superscript $r$ denoting iteration is dropped for clarity.
\par
In fact, the solution to the subproblem \eqref{eqn:tilde_d_update} is equivalent to the solution to
\begin{equation}\label{eqn:sylvester_equation_formulation}
  \boldsymbol{\tilde{C}}_d\boldsymbol{W}_d^{\prime}\boldsymbol{W}_d + \lambda_d\boldsymbol{T}_d\boldsymbol{\tilde{C}}_d = \boldsymbol{W}_d^{\prime}\boldsymbol{G}_{(d)}.
\end{equation}
This equivalence can be verified by noting that  \eqref{eqn:sylvester_equation_formulation} defines the gradient equations of \eqref{eqn:tilde_d_update} and that the solution is globally optimum due to convexity. Equation \eqref{eqn:sylvester_equation_formulation} is known as the Sylvester equation and has a unique solution under very mild conditions (specifically $\boldsymbol{W}_d^\prime\boldsymbol{W}_d$ and $\lambda_d\boldsymbol{T}_d$ must have no common eigenvalues). Efficient algorithms for solving the Sylvester equation  \citep{bartels1972} are readily available in most common numerical computing languages.
\par
Notice that by introducing the auxiliary variable $\boldsymbol{Z} = \boldsymbol{B}^\prime$, the subproblem \eqref{eqn:s_update} can be written in separable form as
\begin{equation}\label{eqn:S_update_ADMM}
  \begin{aligned}
    & \min_{\boldsymbol{B}, \boldsymbol{Z}} \|\boldsymbol{G}_{(D+1)} - \boldsymbol{W}_{D+1}\boldsymbol{Z}\|_{F}^2+ \lambda_{D+1}l(\boldsymbol{B}) \\
    & \text{subject to } \boldsymbol{B} - \boldsymbol{Z}^\prime = \boldsymbol{0}.
  \end{aligned}
\end{equation}
A numerical approximation to problems of the form \eqref{eqn:S_update_ADMM} can be found using an alternating direction method of multipliers (ADMM) algorithm \citep{boyd2011}. 
The ADMM scheme consists of the iterates
\begin{equation}\label{eqn:S_admm_update}
  \boldsymbol{B}_{\text{update}}  \leftarrow \min_{\boldsymbol{B}}\left( \lambda_{D+1}l(\boldsymbol{B}) + \gamma \|\boldsymbol{B} - \boldsymbol{Z}^\prime +\boldsymbol{A}^{*}\|_{F}^2\right)
\end{equation}
\begin{equation}\label{eqn:aux_admm_update}
  \boldsymbol{Z}_{\text{update}}  \leftarrow \min_{\boldsymbol{Z}}\left( \|\boldsymbol{G}_{(D+1)} - \boldsymbol{W}_{D+1}\boldsymbol{Z}\|_{F}^2 + \gamma \|\boldsymbol{B} - \boldsymbol{Z}^\prime +\boldsymbol{A}^{*}\|_{F}^2\right)
\end{equation}
\begin{equation}\label{eqn:dual_admm_update}
  \boldsymbol{A}^{*}_{\text{update}} \leftarrow \boldsymbol{A}^{*} +  \boldsymbol{B} - \boldsymbol{Z}^\prime
\end{equation}
for some choice of $\gamma > 0$, where $\boldsymbol{A}^{*}$ is the scaled dual variable associated with the constraint. Since $l$ is assumed to be convex, the ADMM iterates are guaranteed to converge.
\par
The update \eqref{eqn:aux_admm_update} is a matrix ridge regression and has analytic solution given by
\begin{equation}\label{eqn:aux_admm_update_closed_form}
  \boldsymbol{Z}_{\text{update}}  = [\boldsymbol{W}^{\prime}_{D+1}\boldsymbol{W}_{D+1} + \gamma\boldsymbol{I}]^{-1}[\boldsymbol{W}^{\prime}_{D+1}\boldsymbol{G}_{(D+1)} + \gamma (\boldsymbol{B}+\boldsymbol{A}^{*})^\prime].
\end{equation}
The update \eqref{eqn:S_admm_update} defines the so-called \textit{proximal operator} of $l$ and is uniquely minimized. The exact solution will depend on the form of $l$, but it can be shown that many reasonable choices permit an analytic result. For example, if $l(\cdot) = \|\cdot\|_1$, the update is given by the element-wise soft thresholding operator applied to matrix $\boldsymbol{Z}^\prime - \boldsymbol{A}^{*}$.
\par
Algorithm %\ref{alg:ss_bcd}
1 \revised{in the Supplemental Material} provides pseudocode for estimating the $K$-oMPB utilizing the block coordinate descent scheme, \revised{referred to from here on as \texttt{MARGARITA} (\underline{MARG}inal-product b\underline{A}sis \underline{R}epresentation w\underline{I}th \underline{T}ensor \underline{A}nalysis).} The convergence of \revised{\texttt{MARGARITA}} to a stationary point can be guaranteed if each of the sub-problems is convex and has a unique solution \citep{bertsekas1997}. The former property is satisfied by our construction, while the latter is difficult to verify in practice but can be enforced with minor augmentations. In particular, adding an additional proximal regularization of the form $\frac{\mu_d^{(r)}}{2}\left\|\boldsymbol{X} - \boldsymbol{\tilde{C}}_d^{(r)}\right\|_{F}^2$, for $\mu_{d}^{(r)} > 0$ to \eqref{eqn:BCD} guarantees strong convexity, and hence convergence.
\par
We conclude this section with several remarks on practical implementation. Forming the matrix products $\boldsymbol{W}_{d}^\prime\boldsymbol{W}_{d}$ and $\boldsymbol{W}_d^\prime\boldsymbol{G}_{(d)}$ can become computationally expensive when $D$ and/or $m_d$ become sufficiently large. To avoid this computational bottleneck, the former can be calculated efficiently by leveraging the identity $[\bigodot_i A_i]^\prime[\bigodot_i A_i] = \bigcirc_{i}A_i^\prime A_i$, where $\circ$ is the Hadamard product. Algorithms for efficient computation of the latter have been developed, see \cite{phan2013}. Following the suggestion of \cite{sidiropoulos2016}, we found success setting $\gamma = \|\boldsymbol{W}_{D+1}^\prime\boldsymbol{W}_{D+1}\|_{F}/K$. Finally, if $l(\cdot)=\|\cdot\|_{F}^2$, then it is easy to show that \eqref{eqn:s_update} has a closed form solution and thus the ADMM scheme need not be invoked for this special case. 
\subsection{Hyperparameter Selection}\label{ssec:hyperparameters}
\revised{A distinct advantage of our methodology is its flexibility in allowing the user to incorporate different notions of smoothness, via linear differential operator $L_d$, different choices of marginal basis systems and alternative coefficient penalty methods. In some cases, these can be selected using a-priori knowledge of the problem of interest, though for many applications a data-driven approach to hyperparameter selection may be of interest or of necessity. While our method provides the flexibility of setting $\lambda_d$ independently for all $d=1,...,D+1$, using a data-driven method to select all these parameters is computationally infeasible for even moderately large $D$. Therefore, absent a-priori knowledge of different behavior in different dimensions, we suggest setting $\lambda_d = \lambda_f$ for $d=1,...,D$ and selecting the parameters $(\lambda_{f}, \lambda_{D+1})^\prime$ by minimizing the $n$-fold cross-validation error over a 2-dimensional grid. Pseudocode for this scheme is provided in Algorithm 2 %\ref{alg:cv_margarita} 
in the Supplemental Materials. Our method also requires the specification of both marginal ranks $\boldsymbol{m}$ and a global rank $K$. We propose the following two proportion of variance explained measures 
$$
PVG(K) = \|\mathcal{G} -\sum_{k=1}^K\boldsymbol{b}_k\otimes(\bigotimes_{d=1}^D\boldsymbol{\tilde{c}_{d,k}})\|_{F}^2/\|\mathcal{G}\|_{F}^2, \quad \text{PVM}(\boldsymbol{m}):=\|\mathcal{Y}\bigtimes_{d=1}^D\boldsymbol{U}_d^\prime\|_{F}^2/\|\mathcal{Y}\|_{F}^2,
$$
which can be used along with an elbow type criteria for selection. A detailed elaboration, justification and numerical evaluation for the proposed hyperparameter selection criteria 
is provided in Section S3
%\ref{sec:hyper_param_selection} 
of the supplemental text.}

\section{Multidimensional Penalized FPCA}\label{sec:mfpca_proc}

In this section, we demonstrate how to leverage the MPB structure to define a fast multidimensional FPCA which avoids the curse of dimensionality, incurring only trivial additional computational expense beyond \revised{\texttt{MARGARITA}}. Consider the method for FPCA proposed in \cite{silverman1996}, in which the $j$th eigenfunction $\psi_j$ is defined as the function maximizing the penalized sample variance with modified orthogonality constraints
\begin{equation}\label{eqn:penalized_sample_variance}
  \begin{gathered}
    \hat{\psi}_{j} = \max_{\psi \in \mathbb{W}^{\alpha, 2}(\mathcal{M})} \frac{\overset{N}{\underset{i=1}{\sum}}\text{Var}(\langle \psi, U_{i}\rangle_{\mathcal{H}})}{\langle\psi,\psi\rangle_{\lambda}} \\
    \textrm{s.t.} \quad  \|\psi\|_{\mathcal{H}}^2 = 1, \qquad \langle \psi, \psi_k \rangle_{\lambda} = 0, \text{ for } k =1, 2, ..., j-1.
  \end{gathered}
\end{equation}
Here $\langle \psi, \psi_k \rangle_{\lambda} := \langle \psi, \psi_k\rangle_{\mathcal{H}} + \lambda \langle L(\psi_j), L(\psi_k)\rangle_{\mathcal{H}}$ and  $L:\mathbb{W}^{\alpha, 2}(\mathcal{M})\rightarrow \mathcal{H}$ is an $\alpha$'th order linear differential operator quantifying the global roughness. For simplicity,  hereafter we define $L:=\Delta_{\mathcal{M}}$, the Laplacian operator on $\mathcal{M}$, though other linear differential operators can be incorporated effortlessly. In our set-up, $L$ facilitates the optional incorporation of a flexible global notion of smoothness in addition to the marginally independent regularization in Equation~\eqref{eqn:Common_Basis_Optimizer_Transform_Tilde_with_regularization}, e.g. penalizing mixed partial derivatives.
\par
In the 1-dimensional case, the optimization problem \eqref{eqn:penalized_sample_variance} is solved using a two-stage approach: first computing $\widehat{U}_i$ through expansion over some suitable basis system and then looking for solutions $\hat{\psi}_j$ in the span of that set of basis functions. Analogously, we can first represent the realizations with the $K$-oMPB: $\widehat{U}_i(\boldsymbol{x})=\boldsymbol{b}_i^\prime\boldsymbol{\zeta}^{*}_{\boldsymbol{m}}(\boldsymbol{x})$, and then solve Equation~\eqref{eqn:penalized_sample_variance} with the additional constraint $\psi_j \in \text{span}(\boldsymbol{\zeta}^{*}_{\boldsymbol{m}})$, i.e. $\psi_j(\boldsymbol{x}) = \boldsymbol{s}_j^{\prime}\boldsymbol{\zeta}^{*}_{\boldsymbol{m}}(\boldsymbol{x})$ for some $\boldsymbol{s}_j\in\mathbb{R}^{K}$. Under this setup, the optimization problem~\eqref{eqn:penalized_sample_variance} is equivalent to
\begin{equation}\label{eqn:penalized_sample_variance_finite_dim}
  \begin{gathered}
    \boldsymbol{s}_{j} = \max_{\boldsymbol{s}} \frac{\boldsymbol{s}^\prime \boldsymbol{J}_{\boldsymbol{\zeta}^{*}_{\boldsymbol{m}}}\boldsymbol{\Sigma}_{\boldsymbol{b}}\boldsymbol{J}_{\boldsymbol{\zeta}^{*}_{\boldsymbol{m}}}\boldsymbol{s}}{\boldsymbol{s}^\prime \boldsymbol{J}_{\boldsymbol{\zeta}^{*}_{\boldsymbol{m}}}\boldsymbol{s} + \lambda \boldsymbol{s}^\prime \boldsymbol{R}_{\boldsymbol{\zeta}^{*}_{\boldsymbol{m}}}\boldsymbol{s}} \\
    \textrm{s.t.} \quad \boldsymbol{s}^\prime \boldsymbol{J}_{\boldsymbol{\zeta}^{*}_{\boldsymbol{m}}}\boldsymbol{s} = 1, \quad  \boldsymbol{s}^\prime[\boldsymbol{J}_{\boldsymbol{\zeta}^{*}_{\boldsymbol{m}}} + \lambda \boldsymbol{R}_{\boldsymbol{\zeta}^{*}_{\boldsymbol{m}}}]\boldsymbol{s}_{k} = 0, \text{ for } k =1, 2, ..., j-1.
  \end{gathered}
\end{equation}
Here $\boldsymbol{\Sigma}_{\boldsymbol{b}} = \text{Cov}(\boldsymbol{b})$ and $\boldsymbol{J}_{\boldsymbol{\zeta}^{*}_{\boldsymbol{m}}}$, $\boldsymbol{R}_{\boldsymbol{\zeta}^{*}_{\boldsymbol{m}}}$ are symmetric PSD matrices with elements $[\boldsymbol{J}_{\boldsymbol{\zeta}^{*}_{\boldsymbol{m}}}]_{ij} = \langle \zeta^{*}_i, \zeta^{*}_j\rangle_{\mathcal{H}}$ and $[\boldsymbol{R}_{\boldsymbol{\zeta}^{*}_{\boldsymbol{m}}}]_{ij} = \langle \Delta_{\mathcal{M}}(\zeta^{*}_i), \Delta_{\mathcal{M}}(\zeta^{*}_j) \rangle_{\mathcal{H}}$, respectively. The objective function in Equation~\eqref{eqn:penalized_sample_variance_finite_dim} is a generalized Rayleigh quotient and it can be shown that the solutions for $j=1,...,K^{\ddagger}$ are equivalently defined by the first $K^{\ddagger}$ solutions to the generalized eigenvalue problem
\begin{equation}\label{eqn:generalized_eigenvalue_problem}
  \boldsymbol{J}_{\boldsymbol{\zeta}^{*}_{\boldsymbol{m}}}\boldsymbol{\Sigma}_{\boldsymbol{b}}\boldsymbol{J}_{\boldsymbol{\zeta}^{*}_{\boldsymbol{m}}}\boldsymbol{s}_{j} = \nu_j[\boldsymbol{J}_{\boldsymbol{\zeta}^{*}_{\boldsymbol{m}}} + \lambda\boldsymbol{R}_{\boldsymbol{\zeta}^{*}_{\boldsymbol{m}}}]\boldsymbol{s}_{j},
\end{equation}
hence, the vector of estimated eigenfunctions is $\boldsymbol{\hat{\psi}}(\boldsymbol{x}) := (\boldsymbol{s}_{1}^\prime\boldsymbol{\zeta}^{*}_{\boldsymbol{m}}(\boldsymbol{x}), ..., \boldsymbol{s}_{K^{\ddagger}}^\prime\boldsymbol{\zeta}^{*}_{\boldsymbol{m}}(\boldsymbol{x}))^\prime$. 
\par 
There are a variety of algorithms to solve the generalized eigenvalue problem \eqref{eqn:generalized_eigenvalue_problem}. In practice, we use Algorithm 9.4.2 in \cite{ramsay2005}, which requires the computation of the marginal inner product matrices:
$
  \boldsymbol{J}_{\phi_{d}}(i,j) = \langle \phi_{d,i}, \phi_{d,j}\rangle_{\mathcal{H}_d}$,
  $\boldsymbol{R}_{\phi_{d}}(i,j) = \langle \Delta_{\mathcal{M}_d}(\phi_{d,i}), \Delta_{\mathcal{M}_d}(\phi_{d,j})\rangle_{\mathcal{H}_d}$, and 
  $\boldsymbol{E}_{\phi_{d}}(i,j) = \langle \phi_{d,i}, \Delta_{\mathcal{M}_d}(\phi_{d,j})\rangle_{\mathcal{H}_d}$.
Given the $\boldsymbol{C}_d$'s, simple derivations show that 
the marginal product structure of the $\zeta_k$'s permits fast analytic computation of $\boldsymbol{J}_{\boldsymbol{\zeta}^{*}_{\boldsymbol{m}}}$ and $\boldsymbol{R}_{\boldsymbol{\zeta}^{*}_{\boldsymbol{m}}}$ based on the element-wise formulas
\begin{equation}\label{eqn:assemble_Jxi}
  \boldsymbol{J}_{\boldsymbol{\zeta}^{*}_{\boldsymbol{m}}}(i,j) = \prod_{d=1}^D \boldsymbol{c}_{d,i}^\prime\boldsymbol{J}_{\phi_{d}}\boldsymbol{c}_{d,j}
\end{equation}
\begin{equation}\label{eqn:assemble_Rxi}
    \boldsymbol{R}_{\boldsymbol{\zeta}^{*}_{\boldsymbol{m}}}(i,j) =  \sum_{d=1}^D(\prod_{b\neq d}^D \boldsymbol{c}^\prime_{b,i}\boldsymbol{J}_{\phi_{b}}\boldsymbol{c}_{b,j})\boldsymbol{c}_{d,i}^\prime\boldsymbol{R}_{\phi_{d}}\boldsymbol{c}_{d,j}  + \sum_{\underset{a\neq d}{a, d}}(\prod_{\underset{b \neq d}{b \neq a}}^D\boldsymbol{c}^\prime_{b,i}\boldsymbol{J}_{\phi_{b}}\boldsymbol{c}_{b,j})(\boldsymbol{c}_{d,i}^\prime\boldsymbol{E}_{\phi_{d}}\boldsymbol{c}_{d,j})(\boldsymbol{c}^\prime_{a,i}\boldsymbol{E}_{\phi_{a}}\boldsymbol{c}_{a,j}).
\end{equation}
Notably, due to the marginal product structure of $\boldsymbol{\zeta}^{*}_{\boldsymbol{m}}$, the $D$-dimensional integrals and partial derivatives required for the computation of $\boldsymbol{J}_{\boldsymbol{\zeta}^{*}_{\boldsymbol{m}}}$ and $\boldsymbol{R}_{\boldsymbol{\zeta}^{*}_{\boldsymbol{m}}}$ decompose into simple sums and products of integrals and partial derivatives over the marginal spaces. In contrast, computing such quantities for an arbitrary $D$-dimensional function is computationally prohibitive for moderately large $D$. This highlights an important practical advantage of working with the marginal product structure: it facilitates efficient computation of $D$-dimensional integrals and partial derivatives which can serve as primitives for developing fast two-stage algorithms for more complex FDA procedures.
%One of the primary advantages of working with the marginal product structure is that it facilitates fast computation of $D$-dimensional integrals and partial derivatives, from which fast algorithms for more complex FDA procedures can be built. 
%This highlights an important practical advantage of working with the marginal product structure, as it allows us to circumvent the potentially enormous computational cost of performing numerical integration/differentiation of an arbitrary $D$-dimensional function.
\par 
In practice, we form estimates $\widehat{\boldsymbol{C}}_d$ using \revised{\texttt{MARGARITA}} and then estimate the inner product matrices $\widehat{\boldsymbol{J}}_{\boldsymbol{\zeta}^{*}_{\boldsymbol{m}}}$ and $\widehat{\boldsymbol{R}}_{\boldsymbol{\zeta}^{*}_{\boldsymbol{m}}}$ by plugging  $\widehat{\boldsymbol{C}}_d$ into \eqref{eqn:assemble_Jxi} and \eqref{eqn:assemble_Rxi}, respectively. Standard FDA techniques for rank and penalty parameter selection can be adopted to select $K^{\ddagger}$ and $\lambda$.

\section{Simulation Study}\label{sec:simulation_studies}

\subsection{Representing Random Marginal Product Functions}\label{ssec:representing_random_sample}
In this section, we compare three methods for constructing the functional representation of a random sample generated from a marginal product functional model: 1) a TPB system estimated by the sandwich smoother \citep{xiao2013}, 2) the FCP-TPA algorithm \citep{allen2013a}, and 3) the $K$-oMPB estimated using \texttt{MARGARITA}. The two competitors are widely used for multidimensional function representation, see Section S4 %\ref{sec:competing_methods} 
of the Supplementary Text for more details.
\par
The random function in our simulation is defined by the marginal product form:
$
U(\boldsymbol{x}) = \sum_{k=1}^{K^{t}} A_k^{t}\prod_{d=1}^D \left(\boldsymbol{c}^{t}_{d,k}\right)' \boldsymbol{\phi}_j^{t}(x_d).
$
Here $\boldsymbol{\phi}^t_j$ is the period-1 Fourier basis, $\boldsymbol{c}^{t}_{d,k}$ is the $k$th column vector of $\boldsymbol{C}^{t}_d$, the fixed marginal factor matrix such that each element is an $i.i.d.$ sample from $\mathcal{N}(0, 0.3^2)$; and $(A_1^t, ..., A_K^t)^\prime \sim \mathcal{N}(\boldsymbol{0}, \boldsymbol{\Sigma}_{A}^t)$. The covariance matrix is constructed as $\boldsymbol{\Sigma}_{A}^t = \boldsymbol{O}\boldsymbol{D}\boldsymbol{O}^\prime$, where $\boldsymbol{O}$ is a random $K^t\times K^t$ orthogonal matrix, and $\boldsymbol{D}$ is a diagonal matrix with $\boldsymbol{D}_{kk} = \text{exp}(-0.7k)$ for $k=1,\dots, K^{t}$. We took the function domain to be the unit cube $\mathcal{M} = [0,1]^3$. We fixed the true marginal basis dimensions to be $m_d^{t}=11$ for all $d$ and considered true ranks $K_t=10 \text{ and } 20$.
\par
For both ranks, all combinations of the following sampling settings are considered. High vs low SNR; obtained by taking of $\sigma^2$ to be $0.5$ or $10$, small vs. large domain sample size; $n_d=30$ or $50$ for all $d$, respectively, and small vs. large subject sample size; where $N$ is taken to be 5 or 50, respectively. For each of these settings, 100 replications are simulated according to Model~\eqref{eqn:statistical_model}. The performance of the fitting methods are assessed by computing the mean integrated squared error (MISE) for each replication $r$:
$\text{MISE}^{(r)} = \sum_{i=1}^{N^{(r)}} \int_{[0,1]^3} \left[ U^{(r)}_i(\boldsymbol{x}) - \widehat{U}^{(r)}_i(\boldsymbol{x}) \right]^2 \ud\boldsymbol{x},$
where $\widehat{U}^{(r)}_i$ is an estimate of $U^{(r)}_i$ from the $r$th simulated dataset. Denote the Monte Carlo average of the MISE as $\text{moMISE} = 100^{-1}\sum_{r=1}^{100} \text{MISE}^{(r)}$. \revised{For fitting, the second order derivative was used to define the marginal roughness penalties and a ridge penalty was used for regularization on the coefficients. Cubic b-splines were used as the marginal basis system.}
\par
\revised{
%Overall, the \texttt{MARGARITA} fits had substantially lower moMISE compared to the TPB fits with comparable degrees of freedom. Further discussion and illustration of these results for each simulation setting and model parameterization are provided in Section~\ref{sec:addition_sims} of the Supplementary Text. For the remainder of this section, we focus on comparing FCP-TPA and \texttt{MARGARITA}. 
We begin by investigating the performance as a function of rank for each combination of $m_d\in\{15,25\}$ and $K_{\text{fit}}\in\{8, 15, 25\}$. A fair comparison between the TPB and \texttt{MARGARITA} should be based on enforcing
(roughly) equivalently sized parameter spaces, i.e. total number of degrees of freedom, so for TPB we take the smallest integer $m_d^{(TPB)}$ such that $\prod_{d=1}^D m_d^{(TPB)} \ge K_{\text{fit}}\sum_{d=1}^D m_d$ for comparison. To isolate the effects of the ranks, for each simulated dataset the models are estimated over a grid of smoothing parameters and the performance of the model with the lowest MISE is recorded. Section S5 %\ref{sec:addition_sims} 
in the Supplementary Text presents a comprehensive comparison of the moMISE for each simulation setting and model parameterization. The results demonstrate that \texttt{MARGARITA} outperforms its competitors consistently, particularly in comparison to TPB fits that have similar degrees of freedom.}

\begin{figure}[t]
  %\centering
  \includegraphics[scale=0.37]{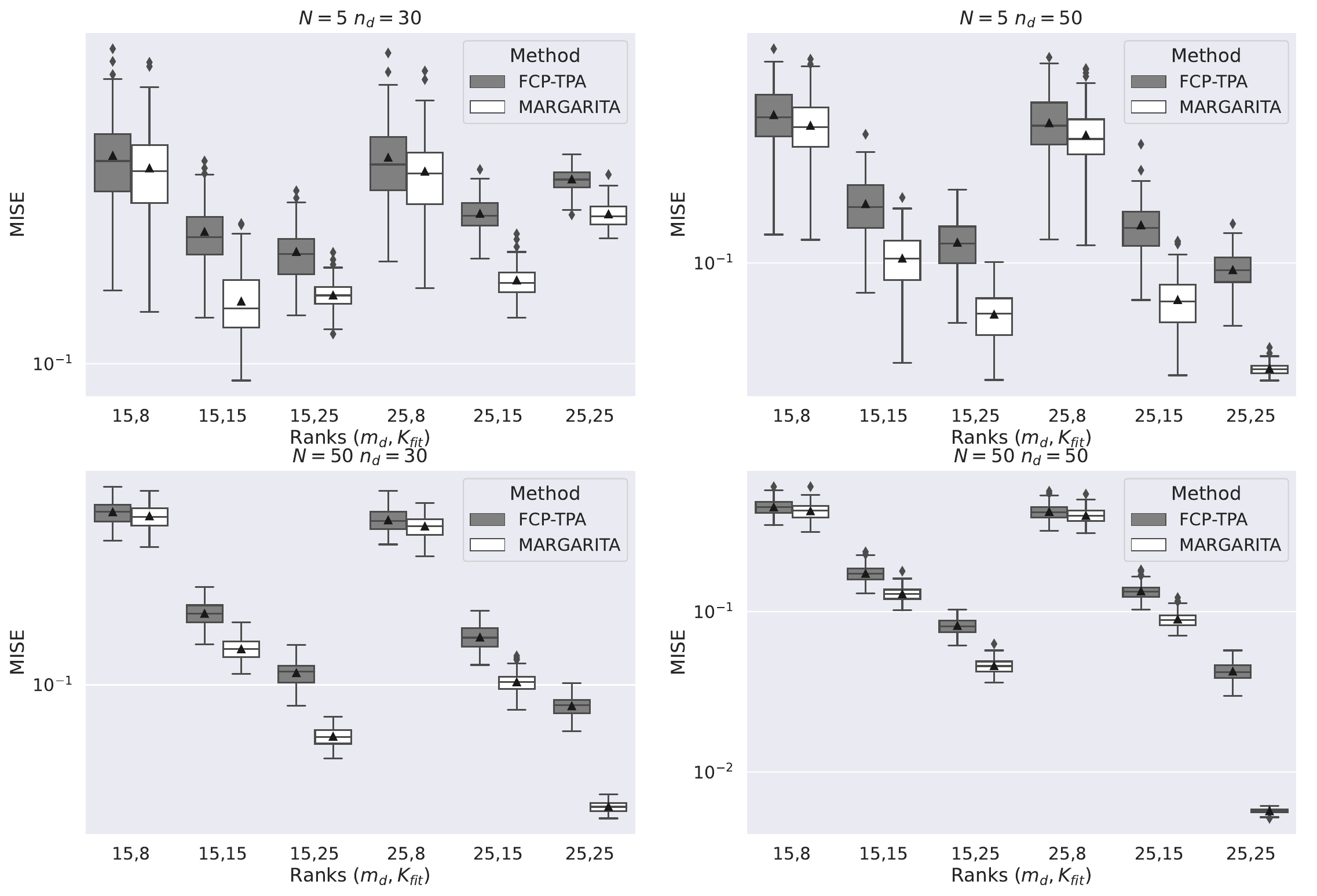}
  \caption{MISE of the fits resulting from both FCP-TPA (gray) and \texttt{MARGARITA} (white). moMISE is denoted by a triangle. The Y-axis is plotted on log-scale for clarity.}
  \label{fig:mpf_mise_comparison}
\end{figure}
\par 
\revised{Figure \ref{fig:mpf_mise_comparison} presents a comparison of FCP-TPA and \texttt{MARGARITA} for each combination of $m_d$ and $K_{fit}$, for several combinations of $N$ and $n_d$, with $K_t=20$ and $\sigma^2=10$. While we see that in all cases \texttt{MARGARITA} results in fits with lower moMISE than FCP-TPA, we also observe that the ranks of the model have a significant impact on the performance. In practical settings, it is often necessary to automate the selection of these ranks as well as the smoothing parameters. Therefore, we compared the automated hyperparameter selection strategies for \texttt{MARGARITA} to the competitors automated smoothing approaches. Specifically, for the TPB method, we selected the smoothing parameters by minimizing the GCV criterion from \cite{xiao2013}. For FCP-TPA, we implemented a $D$-dimensional extension of the nested cross validation method from \cite{huang2009}, as suggested by the authors in \cite{allen2013a}. For our method, we selected penalty parameters using the cross validation scheme outlined in Algorithm 2 %\ref{alg:cv_margarita} 
of the Supplemental Text. To focus our analysis, we consider the large domain large sample case ($n_{d}=50, N=50)$, with true rank $K_t=20$ for both low and high SNRs ($\sigma^2=10,\sigma^2=0.5$) for 100 replications.}
\begin{table}
\centering
\begin{tabular}{lrrr}
\toprule
Method &  FCP-TPA &  MARGARITA &     TPB \\
\midrule
High SNR &   $0.0729 \pm 0.0009$ &     $0.0418 \pm 0.0004$ &  $0.5927 \pm  0.0060$ \\
Low SNR  &   $0.0886 \pm 0.0009$ &     $0.0458 \pm 0.0004$ &  $0.6681 \pm 0.0061$\\
\bottomrule
\end{tabular}
\caption{Monte Carlo average MISE for the $n_{d}=50, N=50,K_t=20$ regime for both high and low SNRs. Each methods proposed automatic penalty parameter selection method was used for estimation.}
\label{tab:automated_hyperparam_mise_comparison}
\end{table}
\par 
\revised{We use an elbow criteria to select the marginal ranks and set a threshold of $\text{PVG}(K)\ge 99.5\%$ for global rank selection. The Monte-Carlo averages and standard error of these quantities are plotted for a range of $m$ and $K$ in the top right panels of Figures S1 
and S2, respectively, found in the Supplemental Materials. We consistently identify a clear elbow at $\text{PVM}(m)=15$ for both SNRs, which is in line with the results in Figure~\ref{fig:mpf_mise_comparison} showing a significant increase in performance for $m_d=15$ compared to $m_d=8$, while the performance boost from $m_d=15$ to $m_d=25$ is less pronounced. A $K_{fit}=25$ is consistently selected across simulations. These ranks are fixed for subsequent comparison of the performance of smoothing parameter selection. Table~\ref{tab:automated_hyperparam_mise_comparison} records the moMISE and accompanying standard errors for all methods, showing that \texttt{MARGARITA}'s automatic hyperparameter augmentation outperforms the competitors and is robust to noise. Furthermore, comparing these results to the corresponding results in the bottom right panel of Figure~\ref{fig:mpf_mise_comparison}, we observe that our automated hyperparameter selection estimates models with similar performance to the ones obtained by selecting the oracle best fits over the hyperparameter grid.}
\par
\revised{Due to the super high-dimensional settings encountered, computational efficiency is as important a consideration as estimation performance in multidimensional FDA. Figure S4 in the Supplemental Text compares the computational time of FCP-TPA and \texttt{MARGARITA} for different simulation settings. We find that while both algorithms are comparable in computational speed for small $N$ and small $n_d$, \texttt{MARGARITA} outperforms FCP-TPA as $N$ and $n_d$ increase. This trend is expected, since increasing $n_{d}$ does not increase the dimension of the optimization problem \eqref{eqn:Common_Basis_Optimizer_Transform_Tilde_with_regularization}, while the factors estimated with FCP-TPA are of dimension $n_{d}$, and thus the computational performance of the method can be expected to degrade as $n_{d}$ increases and ultimately become infeasible in the fine grid limit.}
%\revised{In the super high-dimensional settings encountered in multidimensional FDA, the computational efficiency of a method becomes just as important a consideration as it's estimation performance.} Figure~\ref{fig:mpf_computational_time_comparison} compares the computational time of FCP-TPA and \texttt{MARGARITA} for several simulation settings. For the small $N$ small $n_d$ case, the computational speed is comparable between the two algorithms. As both $N$ and $n_d$ increase, \texttt{MARGARITA} begins to comparatively fare better, culminating in significantly faster performance in the $N=50$, $n_d=50$ case. This trend is expected, since increasing $n_{d}$ does not increase the dimension of the optimization problem \eqref{eqn:Common_Basis_Optimizer_Transform_Tilde_with_regularization} at the heart of Algorithm \ref{alg:ss_bcd}. On the other hand, the factors estimated with FCP-TPA are of dimension $n_{d}$, and thus the computational performance of the method can be expected to degrade as $n_{d}$ increases \textcolor{blue}{ and ultimately become infeasible in the fine grid limit.}
\subsection{Generalization Performance}\label{ssec:sim_generalization_error}
\begin{figure}
    \centering
    \includegraphics[scale=0.47]{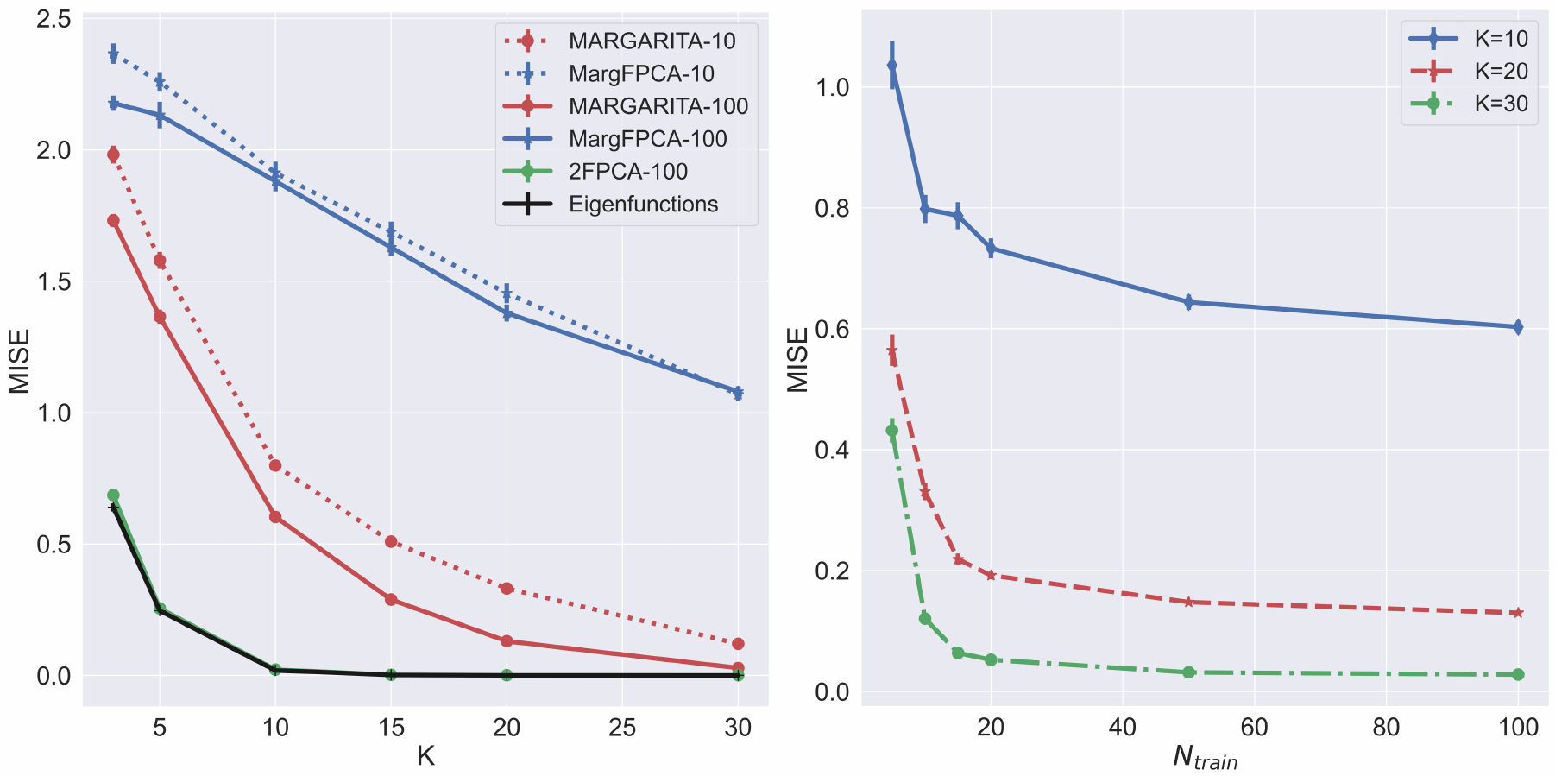}
    \captionof{figure}{(left) Comparison of the generalization performance as a function of $K$ for both \texttt{MARGARITA} (red) and \texttt{MargFPCA} (blue), for $N_{train}=10$ (dotted) $N_{train}=100$ (solid), as well as the 2-stage estimated (green) and true (black) eigenfunctions. (right) Generalization error of \texttt{MARGARITA} as a function $N_{train}$ for several $K$.}
    \label{fig:MARGARITA_vs_MARG_FPCA}
\end{figure}
In this section, we consider the generalization performance of \texttt{MARGARITA}, that is, how efficiently the $K$-oMPB estimated from a training sample of size $N_{train}$ represents new realizations from the same distribution. The results of Section~\ref{sec:model_theory} indicate that we can expect near optimal generalization performance, with an inefficiency due to a ``separability cost'' that vanishes for increasing $K$. We compare our a method to the marginal product FPCA procedure proposed in \cite{muller2017}, referred to here as \texttt{MargFPCA}, which provides a similar near optimality result. In brief, \texttt{MargFPCA} constructs the marginal basis functions by applying FPCA to smoothed estimates of the marginal covariance functions. 
\par 
The development of \texttt{MargFPCA} focuses on the $D=2$ case, so in this study we let the functional domain be $\mathcal{M}=[0,1]^2$ \revised{and evaluate the generalization error of both \texttt{MARGARITA} and \texttt{MargFPCA} as a function of  $N_{train}$ and $K_{fit}$. We define random function $U$ to be a non-stationary, non-separable anisotropic Gaussian process which is observed over an equispaced $200\times 200$ grid on $\mathcal{M}$. For each combination of $N_{train}$ and rank $K_{fit}$, both \texttt{MARGARITA} and \texttt{MargFPCA} are used to construct the representations for each of 50 realizations from an independent test set using least squares basis expansion. Each experimental set-up is repeated for $25$ replications. Additional details on the definition of $U$ and other simulation settings can be found in Supplemental Section S5.2}.  
\par 
Figure~\ref{fig:MARGARITA_vs_MARG_FPCA} (left plot) displays the average MISE on the test set, i.e. the generalization error, as a function of $K$ for both \texttt{MARGARITA} (red) and \texttt{MargFPCA} (blue). The dotted and solid lines correspond to $N_{train}=10$ and $N_{train}=100$, respectively. For both training sample sizes, we observe that our method both uniformly outperforms \texttt{MargFPCA} for all ranks considered and displays much faster convergence in $K$. The green line shows the generalization performance of the eigenfunctions estimated using the two-stage FPCA procedure outlined in Section~\ref{sec:mfpca_proc} for $N_{train}=100$, with an initial \texttt{MARGARITA} of rank 60. The performance is nearly identical with that of the true eigenfunctions (black). \revised{Table S4 and Figure S5 in the Supplemental Material evaluate the two-stage estimates of first three eigenfunctions, showing accurate recovery as $N$ increases.} The right plot of Figure~\ref{fig:MARGARITA_vs_MARG_FPCA} gives the average MISE as a function of $N_{train}$ for several $K$. Recalling that \texttt{MARGARITA} is only guaranteed to converge to a local solution, these result indicate that, at least in some cases, the local (computable) solution still exhibits good convergence properties. Results for more ranks and training sample sizes are recorded in Section S5 of the Supplemental Materials and yield similar conclusions. 

\section{Real Data Analysis}\label{sec:real_data_analysis}

The white matter (WM) of the human brain consists of large collections of myelinated nueral fibers that permit fast communication between different regions of the brain. Diffusion magnetic resonance imaging (dMRI) is a non-invasive imaging technique which uses spatially localized measurements of the diffusion of water molecules to probe the WM microstructure. At each 3-dimensional voxel in the brain, the diffusion image can be used to compute scalar summaries of local diffusion, e.g. fractional anistropy (FA) or mean diffusivity. The resulting data can be organized as a mode-$3$ tensor. For this application, we consider a dataset consisting of the brain images of 50 subjects in an age matched balanced case-control traumatic brain injury (TBI) study. Previous studies have shown the potential for using FA to identify white matter abnormalities associated with TBI and post concussive syndrome %\citep{kraus2007,asselin2020spatial}.
\citep{kraus2007}.
Typically, voxel-based analysis are performed for group-wise analysis of FA using Tract-Based Spatial Statistics (TBSS) \citep{smith2006}, though such analysis are often not able to establish significant group differences \citep{khong2016}, partially due to low power resulting from the large voxel-based multiple testing problem. Due to the continuity of the diffusion process, the FA tensor can be considered as discrete noisy observations of an underlying multidimensional random field, hence we may adopt the statistical model in Equation~\eqref{eqn:statistical_model}. In this analysis, we focus on a functional approach to predict disease status and identify regions in the WM which differ significantly between TBI and control. For details on the study design, MRI scanning protocol, and dMRI preprocessing, please visit Section S6 in the Supplementary Material.
\par 
The voxel grid is of size $115\times 140 \times 120$. Point-wise estimates of the mean function at each voxel are obtained using the sample mean tensor, which is then used to center the data. Equispaced cubic b-splines of ranks $m_1=57, m_2=70, m_3=60$, selected using a $90\%$ threshold on the quantity described in Section~\ref{ssec:hyperparameters}, are used as marginal basis systems. Marginal roughness is penalized by the second order derivative and coefficients were regularized with a ridge penalty, with penalty parameters $\lambda_d=10^{-10}$ for $d=1,2,3$ and $\lambda_4=10^{-8}$. A rank $K=500$ model is estimated from the mean centered data tensor using \texttt{MARGARITA}. FPCA is then performed on the represented data using the fast two-stage approach outlined in Section~\ref{sec:mfpca_proc}. The first $45$ eigenfunctions, denoted collectively as $\boldsymbol{\psi}$, explain $\approx 99\%$ of the represented variance and are used in constructing the final continuous representations of data. A lasso penalized logistic regression classifier is trained to predict disease status using the subject coefficient vectors obtained by their representation over $\boldsymbol{\psi}$. The resulting classification performance is evaluated using leave-one-out cross validation (LOOCV). To localize group differences to particular eigenfunctions, univariate permutation test are performed on the coefficients and the resulting p-values are corrected to maintain a false discovery rate (FDR) $\le 5\%$ using \cite{benjamini1995}. Finally, data-driven regions of interest (ROIs) are defined as spatial volumes where the values of the significant eigenfunctions are ``extreme'', i.e. outside the $0.5\%$ and $99.5\%$ quantiles.
\par
The LOOCV accuracy, precision and recall are 0.96, 1.0, and 0.92, respectively, indicating substantial discriminatory power of the learned basis functions. Additionally, the testing procedure identified significant group differences in the coefficients of three eigenfunctions. For comparison, we applied TBSS to this data and no significant group differences were identified. Figure~\ref{fig:data_driven_ROI} shows two cross sections of the brain, with the data-driven ROIs corresponding to the identified eigenfunctions displayed in blue, red and green. The ROIs in Figure~\ref{fig:data_driven_ROI} (a) are found within areas of the middle cerebellar peduncle (MCP) and, in Figure~\ref{fig:data_driven_ROI} (b), in areas along the superior longitudinal fasciculus (SLF) fiber bundle. \cite{wang2016altered} found increased FA in the MCP is associated with increased cognitive impairment. \cite{xiong2014white} found decreased FA in the SLF in patients with TBI. We note that both of these studies were completed in acute cases of TBI, whereas our data represents a more chronic state of TBI, often called post-concussive syndrome. That being said, these tracts are thought to be altered because of the nature of biophysical forces suffered in TBI. In all TBI, there is rotation of the head around the neck, which causes shearing and stretching of the brain stem tracts. In addition, the longer tracts in the brain, including the SLF, are subject to shearing forces on left to right rotation of the head around the neck. In fact, \cite{post2013examination} found that mechanical strain in the brain stem and cerebellum are significantly correlated with angular acceleration of the brain, suggesting fibers in this area are susceptible to changes related to TBI. Therefore, our findings of changes in the MCP and SLF are consistent with the hypothesized mechanism and previous findings in TBI.
\begin{figure}
  \centering
    \includegraphics[scale=0.41]{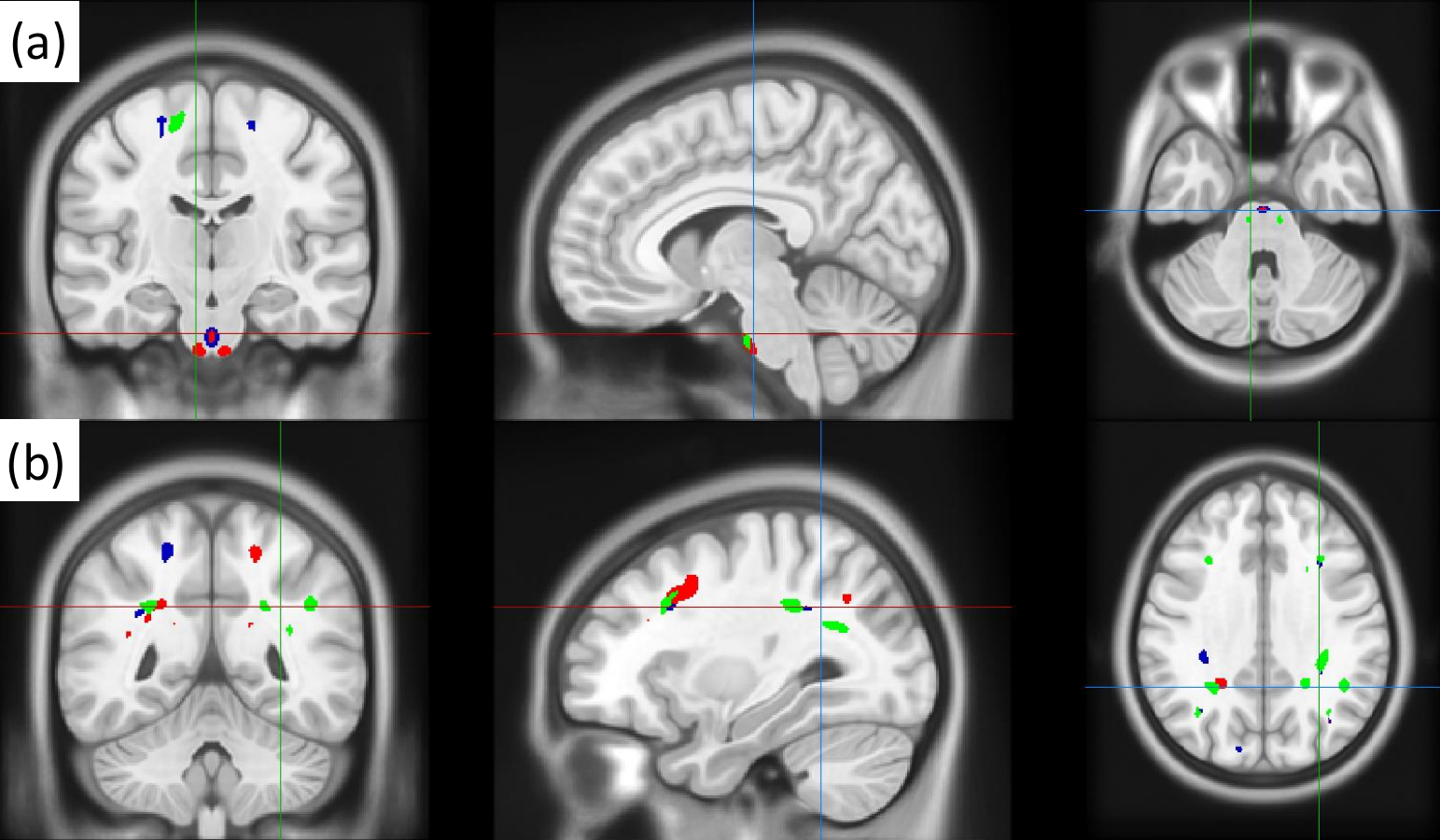}
  \caption{Data-driven ROIs created from thresholding the $0.5\%$ and $99.5\%$ quantiles of the three identified eigenfunctions in blue, red, and green. 
  }
  \label{fig:data_driven_ROI}
\end{figure}

\section{Discussion and Future Work}\label{sec:discussion}
Our work introduces a methodological framework and accompanying estimation algorithm for constructing a flexible and efficient continuous representation of multidimensional functional data. We consider basis functions that exhibit a marginal product structure and prove that an optimal set of such functions can be defined by the penalized tensor decomposition of an appropriate transformation of the raw data tensor. A variety of separable roughness penalties can be used to promote smoothness. Regularized parameter estimation is performed using a block coordinate descent scheme and we describe globally convergent numerical algorithms for solving the subproblems. Using extensive simulation studies, we illustrate the superiority of our proposed method compared to competing alternatives. In a real data application of the group-wise analysis of diffusion MRI, we show that our method can facilitate the prediction of disease status and identify biologically meaningful ROIs.
\par
%A major advantage of the proposed method is it's flexibility. The marginal basis system, differential regularization and coefficient penalty are all relatively generic. This highly modular structure allows for a variety of different specifications that can be customized to the problem at hand. Our methods are implemented in the Python \textbf{eMFDA},  which is available at \url{https://github.com/Will-Consagra/eMFDA}. Our software is built for easy interface with the \texttt{scikit-fda} package: \url{https://github.com/GAA-UAM/scikit-fda}, and hence can take advantage of the large class of differential penalties and basis systems implemented therein.
\par
This work can be extended in several interesting directions. A principled \revised{and computationally efficient approach to} both a  generalized cross validation criteria and information criteria for faster penalty parameter selection and model-based selection of the global rank, respectively, are of interest. %Up to this point, we have considered the $\boldsymbol{\phi}_d$'s as fixed user-supplied parameters, and so developing a computationally feasible method for optimization over the marginal basis systems themselves, e.g., adaptive knot selection for marginal b-splines, is an important extension. 
%From a theoretical perspective, investigation of the statistical accuracy of the local optimum produced by Algorithm~\ref{alg:ss_bcd} is desirable. 
%\revised{Finally, many modern functional datasets are observed irregularly over the domain, rather than the common grid we consider here. A simple workaround to apply our method to dense irregularly observed data is to first bin the data using a common grid and define the observed data tensor to be the bin-specific sample means for each subject. However, this approach is not effective for sparsely sampled irregular data and so more work is necessary to extend the method to such cases.}
\revised{Additionally, many modern functional datasets are observed irregularly over the domain, rather than the common grid we consider here. To use our method on dense irregular data, we can bin the data using a common grid and define the observed data tensor to be the bin-specific sample means for each subject. However, this approach is problematic for sparsely sampled irregular data, requiring further extension of the method.}

\bibliographystyle{chicago}
\bibliography{references}

\clearpage
\pagebreak

\begin{center}
{\large\bf SUPPLEMENTARY MATERIAL}
\end{center}

\setcounter{equation}{0}
\setcounter{figure}{0}
\setcounter{table}{0}
\setcounter{section}{0}
\setcounter{page}{1}
\makeatletter
\renewcommand{\theequation}{S\arabic{equation}}
\renewcommand{\thefigure}{S\arabic{figure}}
\renewcommand{\thetable}{S\arabic{table}}
\renewcommand{\thesection}{S\arabic{section}}
\renewcommand{\bibnumfmt}[1]{[S#1]}
\renewcommand{\citenumfont}[1]{S#1}
\renewcommand{\theequation}{S.\arabic{equation}}
\renewcommand{\thesection}{S\arabic{section}}
\renewcommand{\thesubsection}{S\arabic{section}.\arabic{subsection}}
\renewcommand{\thetable}{S\arabic{table}}
\renewcommand{\thefigure}{S\arabic{figure}}
\renewcommand{\thetheorem}{S\arabic{theorem}}
\renewcommand{\theproposition}{S\arabic{proposition}}
\renewcommand{\thelemma}{S\arabic{lemma}}
\renewcommand{\theassumption}{S\arabic{assumption}}

\section{Theory and Proofs}\label{sec:theory_and_proofs}

\noindent{\textbf{Additional Definitions, Assumptions and Technical Conditions}}
\par\bigskip
\begin{definition}\label{defn:marginal_convergence_rate}
    Let the function $w_{\phi_{d}}(m_d)$ be the $\mathbb{L}^2(\mathcal{M}_d)$ convergence rate of the $d$th marginal basis system $\boldsymbol{\phi}_d$ and $w_{\bm{\tau}_{m}}(\boldsymbol{m})$ be the $\mathbb{L}^2(\mathcal{M})$ convergence rate of the tensor product basis system $\bm{\tau}_{\boldsymbol{m}}$. That is, for any $f_d \in \mathcal{H}_d$, $f \in \mathcal{H}$
    $$
    \left\|P_{\mathcal{H}_{m_{d},d}^{\perp}}(f_d)\right\|_{\mathcal{H}_d} = O(w_{\phi_{d}}(m_d)), \qquad\left\|P_{\mathcal{H}_{\boldsymbol{m}}^{\perp}}(f)\right\|_{\mathcal{H}} = O(w_{\bm{\tau}_{\boldsymbol{m}}}(\boldsymbol{m}))
    $$
    where $P_{\mathcal{H}_{m_{d},d}^{\perp}}$, $P_{\mathcal{H}_{\boldsymbol{m}}^{\perp}}$ are the projection operators onto ${\mathcal{H}_{m_{d},d}^{\perp}}$ and ${\mathcal{H}_{\boldsymbol{m}}^{\perp}}$, the orthogonal complements of $\mathcal{H}_{m_{d},d}$ in $\mathcal{H}_d$ and $\mathcal{H}_{\boldsymbol{m}}$ in $\mathcal{H}$, respectively.
\end{definition}
\begin{definition}
 For ease of presentation, we define the inner product space $(\bigotimes_{d=1}^D\mathbb{R}^{m_{d}}, \langle\cdot,\cdot\rangle_{\tilde{F}})$, where $\langle\mathcal{T}_1,\mathcal{T}_2\rangle_{\tilde{F}} = \langle\mathcal{T}_1,\mathcal{T}_2\times_{1}\boldsymbol{J}_{\boldsymbol{\phi}_{1}}\cdots\times_{D}\boldsymbol{J}_{\boldsymbol{\phi}_{D}} \rangle_{F}$ for $\mathcal{T}_1,\mathcal{T}_2\in\bigotimes_{d=1}^D\mathbb{R}^{m_{d}}$
\end{definition}
We assume the following conditions related to the boundedness and tail-behavior of $U$.
\begin{assumption}\label{asm:spectral_decay_rate}
     $$
     \begin{aligned}
       &\text{(i) } \sum_{k=K+1}^\infty\sqrt{\rho}_k = o(1)  \quad 
       \text{(ii) }  \sum_{k=1}^\infty\mathbb{E}\left[|Z_k|^r\right] < \infty, \text{ for } r=3,4\\
     \end{aligned}
     $$
\end{assumption}
Assumption~\ref{asm:spectral_decay_rate}.i introduces a slightly stronger condition on the decay rate of the eigenvalues than the one that comes for free from 
Assumption 1, 
i.e. $\sum_{k=K+1}^\infty\rho_k = o(1)$. Assumption~\ref{asm:spectral_decay_rate}.ii is a technical moment condition which controls the fatness of the ``high-frequency tail'' of $U$. These conditions are satisfied for many standard distributions and covariance kernels.
\par 
To ensure the existence and uniqueness of $\Breve{\boldsymbol{\zeta}}_{\boldsymbol{m},N}^{*}$, we address the identifiability issues resulting from the inherent ambiguity in the ordering (permutation indeterminacy) of the basis functions. Since $\boldsymbol{\zeta}\in\mathcal{V}_{K,\boldsymbol{m}}$, we have the representation $\xi_{k,d}(x_d) = \sum_{j=1}^{m_{d}} c_{d,k,j}\phi_{d,j}(x_d)$, and hence we can identify any $\boldsymbol{\zeta}\in\mathcal{V}_{K,\boldsymbol{m}}$ with the parameter $(\boldsymbol{C}_1,...,\boldsymbol{C}_{D})$, where $\boldsymbol{C}_d\in\mathbb{R}^{m_{d}\times K}$ and  $\boldsymbol{c}_{d,k}$ is the $k$'th column vector. We can now introduce the reparameterized relaxation of the parameter space $\mathcal{V}_{K,\boldsymbol{m}}$: 
$$
\begin{aligned}
   \boldsymbol{\Theta}_{K,\boldsymbol{m}} := &\{(\boldsymbol{C}_1,...,\boldsymbol{C}_{D}): \boldsymbol{c}^\prime_{d,k} \boldsymbol{J}_{\boldsymbol{\phi}_d}\boldsymbol{c}_{d,k} \le 1\text{ for }d=1,...,D;k=1,...,K;\boldsymbol{C}_{1}(1,1) > ... > \boldsymbol{C}_{D}(1,K)\} 
\end{aligned}
$$
which alleviates this identifiability problem. We must also address the ill-posedness of the best constrained rank approximations for $D > 2$ in general \citep{silva2006}. We invoke a sufficient but not necessary condition on $K$ to resolve this issue \citep{bro2000}:
\begin{assumption}\label{asm:existance_assumption}
Let $\mathcal{A}^{(K)}$ be the mode $D+1$ tensor obtained from stacking $\mathcal{A}_1, ..., \mathcal{A}_{K}$ for some finite integer $K$. Suppose it's rank is $K^{*}$. We assume that $K \ge \left(2K^{*} + D\right)/\left(D+1\right)$.
\end{assumption}

\noindent{\textbf{Consistency}}
\par\bigskip
We establish the point-wise consistency of  $\Breve{\boldsymbol{\zeta}}_{\boldsymbol{m},N}^{*}$. Throughout this section, let $0 < R < \infty$ be a generic constant, that is perhaps different depending on context. For notational convenience, we establish the following definition.
\begin{definition}
  Let the function $h(K)$ be convergence rate of the tail-sum of the eigenvalues of the covariance operator associated with $C(\bm{x}, \bm{y})$, that is
  $$
  \sum_{k=K+1}^\infty\rho_k =  O(h(K))
  $$
\end{definition}
\begin{proposition}\label{prop:kOMPB_projection_coefficients}
The coefficients of the projection $P_{\boldsymbol{\zeta}_{\boldsymbol{m}}}(U) := \sum_{k=1}^Kb_k\xi_k$ are given by
$$
  \boldsymbol{b}_K := (b_1, ..., b_K)^\prime = \sum_{l=1}^\infty Z_l\boldsymbol{b}_l 
$$
where 
\begin{equation}\label{eqn:projected_coefs}
\boldsymbol{b}_{l}  =
  \begin{pmatrix}
    \prod_{d=1}^D\boldsymbol{c}_{d,1}^{\prime}\boldsymbol{J}_{\boldsymbol{\phi}_d}\boldsymbol{c}_{d,1} & \prod_{d=1}^D\boldsymbol{c}_{d,1}^{\prime}\boldsymbol{J}_{\boldsymbol{\phi}_d}\boldsymbol{c}_{d,2} & \cdots & \prod_{d=1}^D\boldsymbol{c}_{d,1}^{\prime}\boldsymbol{J}_{\boldsymbol{\phi}_d}\boldsymbol{c}_{d,K} \\
    \prod_{d=1}^D\boldsymbol{c}_{d,2}^{\prime}\boldsymbol{J}_{\boldsymbol{\phi}_d}\boldsymbol{c}_{d,1} & \prod_{d=1}^D\boldsymbol{c}_{d,2}^{\prime}\boldsymbol{J}_{\boldsymbol{\phi}_d}\boldsymbol{c}_{d,2} & \cdots &  \\
    \vdots & & \ddots & \\
    \prod_{d=1}^D\boldsymbol{c}_{d,K}^{\prime}\boldsymbol{J}_{\boldsymbol{\phi}_d}\boldsymbol{c}_{d,1}  & \cdots & & \prod_{d=1}^D\boldsymbol{c}_{d,K}^{\prime}\boldsymbol{J}_{\boldsymbol{\phi}_d}\boldsymbol{c}_{d,K}
  \end{pmatrix}^{-1}
  \begin{bmatrix}
    \langle\mathcal{A}_l, \bigotimes_{d=1}^D\boldsymbol{c}_{d,1}\rangle_{\tilde{F}} \\
    \langle\mathcal{A}_l, \bigotimes_{d=1}^D\boldsymbol{c}_{d,2}\rangle_{\tilde{F}} \\
    \vdots \\
    \langle\mathcal{A}_l, \bigotimes_{d=1}^D\boldsymbol{c}_{d,K}\rangle_{\tilde{F}}
  \end{bmatrix}
\end{equation}
and $^{-1}$ refers to the (generalized) inverse of the inner product matrix.
\end{proposition}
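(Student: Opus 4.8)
The plan is to invoke the standard closed-form expression for the orthogonal projection onto a finite-dimensional subspace and then identify its two ingredients---the Gram matrix and the vector of inner products of $U$ against the basis---with the quantities in \eqref{eqn:projected_coefs}. Since $\zeta_1,\dots,\zeta_K$ are linearly independent (as $\boldsymbol{\zeta}\in\mathcal{V}_{K,\boldsymbol{m}}$), the projection $P_{\boldsymbol{\zeta}_{\boldsymbol{m}}}(U)=\sum_{k=1}^K b_k\zeta_k$ is characterized by the normal equations $\langle U-\sum_k b_k\zeta_k,\zeta_j\rangle_{\mathcal{H}}=0$, equivalently $\boldsymbol{G}\,\boldsymbol{b}_K=\boldsymbol{v}$ where $\boldsymbol{G}_{ij}=\langle\zeta_i,\zeta_j\rangle_{\mathcal{H}}$ and $v_j=\langle U,\zeta_j\rangle_{\mathcal{H}}$. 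It therefore suffices to show that $\boldsymbol{G}$ equals the matrix being inverted in \eqref{eqn:projected_coefs} and that $\boldsymbol{v}=\sum_{l=1}^\infty Z_l\,\big(\langle\mathcal{A}_l,\bigotimes_d\boldsymbol{c}_{d,k}\rangle_{\tilde{F}}\big)_{k=1}^K$.

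For the Gram matrix, I would use the multiplicative separability of each $\zeta_k=\prod_{d=1}^D\xi_{k,d}$ together with the factorization of the inner product on $\mathcal{H}=\bigotimes_{d=1}^D H_d$, giving $\langle\zeta_i,\zeta_j\rangle_{\mathcal{H}}=\prod_{d=1}^D\langle\xi_{i,d},\xi_{j,d}\rangle_{H_d}$. Expanding each marginal factor via $\xi_{k,d}=\sum_j c_{d,k,j}\phi_{d,j}$ and collecting the marginal inner products into $\boldsymbol{J}_{\boldsymbol{\phi}_d}$ yields $\langle\xi_{i,d},\xi_{j,d}\rangle_{H_d}=\boldsymbol{c}_{d,i}^\prime\boldsymbol{J}_{\boldsymbol{\phi}_d}\boldsymbol{c}_{d,j}$, so that $\boldsymbol{G}_{ij}=\prod_{d=1}^D\boldsymbol{c}_{d,i}^\prime\boldsymbol{J}_{\boldsymbol{\phi}_d}\boldsymbol{c}_{d,j}$, exactly the matrix in \eqref{eqn:projected_coefs}.

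For the vector $\boldsymbol{v}$, I would substitute the Karhunen--Lo\'eve expansion $U=\sum_{l=1}^\infty Z_l\psi_l$ and interchange the (mean-square convergent) series with the inner product to obtain $v_k=\sum_{l=1}^\infty Z_l\langle\psi_l,\zeta_k\rangle_{\mathcal{H}}$. Because $\zeta_k\in\mathcal{H}_{\boldsymbol{m}}$, self-adjointness of the orthogonal projection gives $\langle\psi_l,\zeta_k\rangle_{\mathcal{H}}=\langle P_{\mathcal{H}_{\boldsymbol{m}}}(\psi_l),\zeta_k\rangle_{\mathcal{H}}$; writing $P_{\mathcal{H}_{\boldsymbol{m}}}(\psi_l)$ through its coefficient tensor $\mathcal{A}_l$ and $\zeta_k$ through its coefficient tensor $\bigotimes_{d}\boldsymbol{c}_{d,k}$, and again factoring the tensor-product inner product into the marginal Gram matrices $\boldsymbol{J}_{\boldsymbol{\phi}_d}$, I recognize the resulting double sum as precisely $\langle\mathcal{A}_l,\bigotimes_{d}\boldsymbol{c}_{d,k}\rangle_{\tilde{F}}$. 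Multiplying $\boldsymbol{v}$ on the left by $\boldsymbol{G}^{-1}$ and moving the inverse inside the sum over $l$ then delivers $\boldsymbol{b}_K=\sum_{l=1}^\infty Z_l\boldsymbol{b}_l$ with $\boldsymbol{b}_l$ as in \eqref{eqn:projected_coefs}.

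The main obstacle is rigorously justifying the interchange of the infinite Karhunen--Lo\'eve series with the inner product against $\zeta_k$. I would handle this by noting that the partial sums converge to $U$ in $\mathcal{H}$ with probability one and that $\langle\cdot,\zeta_k\rangle_{\mathcal{H}}$ is a bounded linear functional; absolute summability of $\sum_l|Z_l|\,|\langle\psi_l,\zeta_k\rangle_{\mathcal{H}}|$ then follows from Cauchy--Schwarz together with the Parseval identities $\sum_l|\langle\psi_l,\zeta_k\rangle_{\mathcal{H}}|^2=\|\zeta_k\|_{\mathcal{H}}^2<\infty$ and $\sum_l|Z_l|^2=\|U\|_{\mathcal{H}}^2<\infty$ (almost surely, by Assumption~\ref{asm:random_function_assumptions}), so the series for $v_k$ converges and the interchange is valid. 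Passing $\boldsymbol{G}^{-1}$ inside the sum is immediate, as it is a fixed finite matrix. A secondary point concerns the generalized inverse in \eqref{eqn:projected_coefs}: under $\boldsymbol{\zeta}\in\mathcal{V}_{K,\boldsymbol{m}}$ the $\zeta_k$ are linearly independent, so $\boldsymbol{G}$ is nonsingular and the ordinary inverse applies, but retaining the generalized inverse keeps the expression meaningful in degenerate limits.
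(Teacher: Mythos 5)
Your proof is correct and follows the same route the paper intends: the paper's own proof is simply the one-line remark that the result ``follows from the definition of the $\mathbb{L}^2$ projection operator,'' and your argument is exactly that normal-equations computation carried out in detail, with the Gram matrix factored through the marginal inner products and the right-hand side expanded via the Karhunen--Lo\`eve series and the isometry onto coefficient tensors. The added care about interchanging the series with the inner product and about invertibility of the Gram matrix is sound and goes beyond what the paper records.
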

\begin{proof}
  This follows from the definition of the $\mathbb{L}^2$ projection operator. 
\end{proof}
\begin{proposition}\label{prop:KOMPB_ip_bounded}
Under Assumption 1, we have that (i) $\mathbb{E}\left[ |\left\langle U, P_{\boldsymbol{\zeta}_{\boldsymbol{m}}}(U) \right\rangle_{\mathcal{H}}|\right] < R$ and (ii) $\mathbb{E}\left[\left\| P_{\boldsymbol{\zeta}_{\boldsymbol{m}}}(U)\right\|_{\mathcal{H}}^2\right] < R$ for any $\boldsymbol{\zeta}_{\boldsymbol{m}}\in\mathcal{V}_{K,\boldsymbol{m}}$.
\end{proposition}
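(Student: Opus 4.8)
The plan is to reduce both statements to a single bound on $\mathbb{E}\left[\|P_{\boldsymbol{\zeta}_{\boldsymbol{m}}}(U)\|_{\mathcal{H}}^2\right]$ and then to control that quantity by $\mathbb{E}\left[\|U\|_{\mathcal{H}}^2\right]$, which is finite by Assumption~\ref{asm:random_function_assumptions}(b). The only structural fact I need is that, for each fixed $\boldsymbol{\zeta}_{\boldsymbol{m}}\in\mathcal{V}_{K,\boldsymbol{m}}$, the operator $P_{\boldsymbol{\zeta}_{\boldsymbol{m}}}$ is the orthogonal ($\mathbb{L}^2$) projection onto the $K$-dimensional subspace $\mathrm{span}(\boldsymbol{\zeta}_{\boldsymbol{m}})$; as such it is self-adjoint, idempotent, and non-expansive on $\mathcal{H}$.

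First I would establish (i) via the elementary identity for orthogonal projections. Writing $P := P_{\boldsymbol{\zeta}_{\boldsymbol{m}}}$ and using $P = P^{*} = P^2$, for each realization of $U$ we have $\langle U, P U\rangle_{\mathcal{H}} = \langle U, P^2 U\rangle_{\mathcal{H}} = \langle P U, P U\rangle_{\mathcal{H}} = \|P U\|_{\mathcal{H}}^2 \ge 0$. Hence $|\langle U, P U\rangle_{\mathcal{H}}| = \|P U\|_{\mathcal{H}}^2$ pointwise, so statement (i) is in fact identical to statement (ii), and both reduce to bounding $\mathbb{E}\left[\|P U\|_{\mathcal{H}}^2\right]$. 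Next I would invoke the contraction property: since $P$ has operator norm at most one, $\|P U\|_{\mathcal{H}}^2 \le \|U\|_{\mathcal{H}}^2$ for every realization, so $\|P U\|_{\mathcal{H}}^2$ is dominated by an integrable random variable and $\mathbb{E}\left[\|P U\|_{\mathcal{H}}^2\right] \le \mathbb{E}\left[\|U\|_{\mathcal{H}}^2\right] = \mathbb{E}\left[\int_{\mathcal{M}} U^2(\boldsymbol{x})\,\ud\boldsymbol{x}\right] =: R < \infty$, the final finiteness being exactly Assumption~\ref{asm:random_function_assumptions}(b). Taking this $R$ proves both claims simultaneously.

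There is no genuine analytic obstacle here; the one point worth stating carefully is the uniformity of $R$ over the entire parameter set $\mathcal{V}_{K,\boldsymbol{m}}$. This is immediate because the operator-norm bound $\|P_{\boldsymbol{\zeta}_{\boldsymbol{m}}}\| \le 1$ holds for \emph{every} orthogonal projection irrespective of the subspace onto which it projects, so the chosen constant $R = \mathbb{E}\left[\|U\|_{\mathcal{H}}^2\right]$ does not depend on the particular $\boldsymbol{\zeta}_{\boldsymbol{m}}$. A minor bookkeeping remark is that $\|P_{\boldsymbol{\zeta}_{\boldsymbol{m}}}(U)\|_{\mathcal{H}}^2$ is a bona fide measurable and integrable random variable, which follows from its domination by $\|U\|_{\mathcal{H}}^2$ together with the mean-square integrability of $U$.
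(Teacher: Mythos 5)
Your proposal is correct and follows essentially the same route as the paper: the paper's proof likewise rests on the two facts $\|P_{\boldsymbol{\zeta}_{\boldsymbol{m}}}(U)\|_{\mathcal{H}}^2 \le \|U\|_{\mathcal{H}}^2$ and $\mathbb{E}\left[\|U\|_{\mathcal{H}}^2\right] = \sum_{k=1}^\infty \rho_k < \infty$. You merely make explicit the reduction of (i) to (ii) via $\langle U, P U\rangle_{\mathcal{H}} = \|P U\|_{\mathcal{H}}^2$, which the paper leaves implicit.
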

\begin{proof}
The results follow immediately by noting that $\|U\|_{\mathcal{H}}^2 \ge \|P_{\boldsymbol{\zeta}_{\boldsymbol{m}}}(U)\|_{\mathcal{H}}^2$ and $\mathbb{E}\left[\|U\|_{\mathcal{H}}^2 \right] = \sum_{k=1}^\infty \rho_k< \infty.$
\end{proof}
\begin{lemma}\label{lem:generalization_error}
%Let $\mathcal{A}^{(K)}$ be the mode $D+1$ tensor obtained from stacking $\mathcal{A}_1, ..., \mathcal{A}_{K}$. Define the inner product space $\left(\bigotimes_{d=1}^D\mathbb{R}^{m_{d}}\otimes\mathbb{R}^K, \langle\cdot,\cdot\rangle_{\tilde{F}, C}\right)$, where 
%$$
%\langle\mathcal{T}_1,\mathcal{T}_2\rangle_{\tilde{F}, C} = \sum_{k=1}^K\rho_k\langle \mathcal{T}_1(:,...,:,k), \mathcal{T}_2(:,...,:,k)\rangle_{\tilde{F}}.
%$$
The expected generalization error of $\boldsymbol{\zeta}_{m}\in\mathcal{V}_{K,\boldsymbol{m}}$ can be written as
\begin{equation}\label{eqn:coef_tensor_decomposition}
    \mathbb{E}\left\|U - P_{\boldsymbol{\zeta}_{\boldsymbol{m}}}(U)\right\|_{\mathcal{H}}^2 = \min_{\boldsymbol{B}} \quad 
    \left\|\mathcal{A}^{(K)} - \sum_{k=1}^K \left[\bigotimes_{d=1}^D\boldsymbol{c}_{d,k}\right]\otimes \boldsymbol{B}_{:,k}\right\|_{\tilde{F}, C}^2 + O(w_{\bm{\tau}_{\boldsymbol{m}}}(\boldsymbol{m})) +  O(h(K))
\end{equation}
where $\boldsymbol{B}_{:,k}$ is the $k$'th column of $\boldsymbol{B}\in\mathbb{R}^{K\times K}$, for some $\boldsymbol{c}_{d,k}\in\mathbb{R}^{m_{d}}$.
\end{lemma}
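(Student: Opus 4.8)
The plan is to split the projection error through the intermediate tensor-product subspace $\mathcal{H}_{\boldsymbol m}$, exploit an isometry between $\mathcal{H}_{\boldsymbol m}$ and the coefficient tensor space under $\langle\cdot,\cdot\rangle_{\tilde F}$, and then feed in the Karhunen--Lo\`eve expansion to turn the expected error into a $\rho$-weighted sum of rank-$1$ tensor projection residuals. The structural fact driving everything is that the coefficient map sending $f = \sum_{\mathbf j}\mathcal{T}_f(\mathbf j)\prod_d \phi_{d,j_d}\in\mathcal{H}_{\boldsymbol m}$ to $\mathcal{T}_f$ is an isometry from $(\mathcal{H}_{\boldsymbol m},\langle\cdot,\cdot\rangle_{\mathcal{H}})$ onto $(\bigotimes_d\mathbb{R}^{m_d},\langle\cdot,\cdot\rangle_{\tilde F})$, because $\langle \prod_d\phi_{d,i_d},\prod_d\phi_{d,j_d}\rangle_{\mathcal{H}}=\prod_d[\boldsymbol{J}_{\boldsymbol\phi_d}]_{i_dj_d}$. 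Under this map $P_{\mathcal{H}_{\boldsymbol m}}(\psi_l)$ transports to $\mathcal{A}_l$ and $\zeta_k$ to $\bigotimes_d\boldsymbol{c}_{d,k}$.

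Since $\text{span}(\boldsymbol\zeta_{\boldsymbol m})\subseteq\mathcal{H}_{\boldsymbol m}$, I would first write $U - P_{\boldsymbol\zeta_{\boldsymbol m}}(U) = P_{\mathcal{H}_{\boldsymbol m}^\perp}(U) + \big(P_{\mathcal{H}_{\boldsymbol m}}(U) - P_{\boldsymbol\zeta_{\boldsymbol m}}(U)\big)$, where the first summand lies in $\mathcal{H}_{\boldsymbol m}^\perp$ and the second in $\mathcal{H}_{\boldsymbol m}$, so they are orthogonal. Pythagoras gives $\mathbb{E}\|U - P_{\boldsymbol\zeta_{\boldsymbol m}}(U)\|_{\mathcal{H}}^2 = \mathbb{E}\|P_{\mathcal{H}_{\boldsymbol m}^\perp}(U)\|_{\mathcal{H}}^2 + \mathbb{E}\|P_{\mathcal{H}_{\boldsymbol m}}(U) - P_{\boldsymbol\zeta_{\boldsymbol m}}(U)\|_{\mathcal{H}}^2$. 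Substituting $U=\sum_l Z_l\psi_l$ and using $\mathbb{E}[Z_lZ_{l'}]=\rho_l\mathbb{I}\{l=l'\}$, the first term collapses to $\sum_l\rho_l\|P_{\mathcal{H}_{\boldsymbol m}^\perp}(\psi_l)\|_{\mathcal{H}}^2$, which I would identify with the TPB truncation contribution $O(w_{\boldsymbol\tau_{\boldsymbol m}}(\boldsymbol m))$ via Definition~\ref{defn:marginal_convergence_rate}.

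For the second term, linearity of both projections and the same covariance identity give $\mathbb{E}\|P_{\mathcal{H}_{\boldsymbol m}}(U) - P_{\boldsymbol\zeta_{\boldsymbol m}}(U)\|_{\mathcal{H}}^2 = \sum_l\rho_l\|P_{\mathcal{H}_{\boldsymbol m}}(\psi_l) - P_{\boldsymbol\zeta_{\boldsymbol m}}(\psi_l)\|_{\mathcal{H}}^2$, and since $\text{span}(\boldsymbol\zeta_{\boldsymbol m})\subseteq\mathcal{H}_{\boldsymbol m}$ we have $P_{\boldsymbol\zeta_{\boldsymbol m}}(\psi_l)=P_{\boldsymbol\zeta_{\boldsymbol m}}(P_{\mathcal{H}_{\boldsymbol m}}\psi_l)$, so each summand is the residual of orthogonally projecting $P_{\mathcal{H}_{\boldsymbol m}}(\psi_l)$ onto $\text{span}(\boldsymbol\zeta_{\boldsymbol m})$. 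Transporting through the isometry, it equals $\min_{\boldsymbol a\in\mathbb{R}^K}\|\mathcal{A}_l - \sum_k a_k\bigotimes_d\boldsymbol{c}_{d,k}\|_{\tilde F}^2$. I would then split the sum at $K$: for $l>K$ the feasible choice $\boldsymbol a=0$ bounds each residual by $\|\mathcal{A}_l\|_{\tilde F}^2=\|P_{\mathcal{H}_{\boldsymbol m}}(\psi_l)\|_{\mathcal{H}}^2\le 1$, so $\sum_{l>K}\rho_l(\cdots)\le\sum_{l>K}\rho_l=O(h(K))$. For $l\le K$, writing $a_k=\boldsymbol{B}_{l,k}$ and observing that the $l$-th slice of $\sum_k(\bigotimes_d\boldsymbol{c}_{d,k})\otimes\boldsymbol{B}_{:,k}$ is exactly $\sum_k\boldsymbol{B}_{l,k}\bigotimes_d\boldsymbol{c}_{d,k}$, the definition of $\|\cdot\|_{\tilde F,C}$ makes the row-wise minimizations decouple, whence $\sum_{l\le K}\rho_l\min_{\boldsymbol a}(\cdots)=\min_{\boldsymbol B}\|\mathcal{A}^{(K)} - \sum_k(\bigotimes_d\boldsymbol{c}_{d,k})\otimes\boldsymbol{B}_{:,k}\|_{\tilde F,C}^2$. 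Assembling the three pieces yields the stated identity.

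The main obstacle I anticipate is making the first term rigorous at the advertised rate. Definition~\ref{defn:marginal_convergence_rate} controls $\|P_{\mathcal{H}_{\boldsymbol m}^\perp}(f)\|$ only for a fixed $f$, whereas here I must aggregate $\sum_l\rho_l\|P_{\mathcal{H}_{\boldsymbol m}^\perp}(\psi_l)\|^2$ over infinitely many eigenfunctions. A clean bound demands either a rate uniform across the $\psi_l$ or a head/tail argument balancing the truncation level against $\boldsymbol m$, and it is exactly here that the regularity and summability conditions (Assumptions~S2, S3 and the condition on $\sum_k\sqrt{\rho_k}$) must be invoked. By contrast, the remaining tensor-algebra manipulations are routine once the isometry and the Karhunen--Lo\`eve orthogonality are in hand.
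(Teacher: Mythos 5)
Your proposal follows essentially the same route as the paper's proof: the orthogonal split through $\mathcal{H}_{\boldsymbol m}$, the isometry onto $(\bigotimes_d\mathbb{R}^{m_d},\langle\cdot,\cdot\rangle_{\tilde F})$, the Karhunen--Lo\`eve expansion with $\mathbb{E}[Z_lZ_{l'}]=\rho_l\mathbb{I}\{l=l'\}$ to decouple the sum, and the head/tail split at $K$ yielding the $\min_{\boldsymbol B}$ term and the $O(h(K))$ remainder. Your closing caveat about needing the rate $w_{\boldsymbol\tau_{\boldsymbol m}}(\boldsymbol m)$ to hold uniformly over the eigenfunctions is well taken --- the paper's proof tacitly assumes this when passing from $\sum_l\rho_l\|P_{\mathcal{H}_{\boldsymbol m}^\perp}(\psi_l)\|^2$ to $O(w_{\boldsymbol\tau_{\boldsymbol m}}(\boldsymbol m))$ --- but this does not change the argument's structure.
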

\begin{proof}
$$
    \begin{aligned}
      \mathbb{E}\left\|U - P_{\boldsymbol{\zeta}_{\boldsymbol{m}}}(U)\right\|_{\mathcal{H}}^2 &= \mathbb{E}\left\|P_{\mathcal{H}_{\boldsymbol{m}}}(U) - P_{\boldsymbol{\zeta}_{\boldsymbol{m}}}(U) + P_{\mathcal{H}_{\boldsymbol{m}}^{\perp}}(U) \right\|_{\mathcal{H}}^2  \\ 
      &= \mathbb{E}\left\|P_{\mathcal{H}_{\boldsymbol{m}}}(U) - P_{\boldsymbol{\zeta}_{\boldsymbol{m}}}(U)\right\|_{\mathcal{H}_{\boldsymbol{m}}} + 
      \\ & + \mathbb{E}\left\langle (P_{\mathcal{H}_{\boldsymbol{m}}}(U) - P_{\boldsymbol{\zeta}_{\boldsymbol{m}}}(U)),P_{\mathcal{H}_{\boldsymbol{m}}^{\perp}}(U)\right\rangle_{\mathcal{H}} \\
      & + \mathbb{E}\left\|P_{\mathcal{H}_{\boldsymbol{m}}^{\perp}}(U) \right\|_{\mathcal{H}_{\boldsymbol{m}}^{\perp}}\\
      & :=  \text{Term}_1 + \text{Term}_2 + \text{Term}_3.
    \end{aligned}
$$

$\text{Term}_{3}$ is independent of $\boldsymbol{\zeta}_{\boldsymbol{m}}$ and represents the expected irreducible error due to the finite dimensional truncation of the marginal basis systems. We have
\begin{equation}\label{eqn:irreducible_error_tpb}
\begin{aligned}
     \text{Term}_3 &= \mathbb{E}\left\|P_{\mathcal{H}^{\perp}_{\boldsymbol{m}}}(U)\right\|_{\mathcal{H}^{\perp}_{\boldsymbol{m}}}^2  = \mathbb{E} \left\|\sum_{k=1}^\infty Z_kP_{\mathcal{H}^{\perp}_{\boldsymbol{m}}}(\psi_k) \right\|_{\mathcal{H}^{\perp}_{\boldsymbol{m}}}^2  \\
     &= \sum_{k=1}^\infty \mathbb{E}\left[Z_k^2\right] \cdot \left\|P_{\mathcal{H}^{\perp}_{\boldsymbol{m}}}(\psi_k)\right\|_{\mathcal{H}^{\perp}_{\boldsymbol{m}}}^2 \\
     &= O(w_{\bm{\tau}_{\boldsymbol{m}}}(\boldsymbol{m})),
\end{aligned}
\end{equation}
where the second line follows from the $Z_k$ being uncorrelated and the third line follows since $\sum_{k=1}^\infty \mathbb{E} \left[Z_k^2 \right] = \sum_{k=1}^\infty\rho_k < \infty$. Since $\text{span}(\boldsymbol{\zeta}_{\boldsymbol{m}})  \subset \mathcal{H}_{\boldsymbol{m}}$, it is easy to see that $\text{Term}_2 = 0$ and thus we need only to deal with $\text{Term}_1$.
\par 
The mapping $\iota:\mathcal{H}_{\boldsymbol{m}}\mapsto \bigotimes_{d=1}^D\mathbb{R}^{m_{d}}$ defined by $\iota(u)_{j_{1},...,j_{D}} = a_{j_{1},...,j_{D}}$ is an isometry between inner product spaces $(\mathcal{H}_{\boldsymbol{m}}, \langle\cdot, \cdot\rangle_{\mathcal{H}_{\boldsymbol{m}}})$ and $(\bigotimes_{d=1}^D\mathbb{R}^{m_{d}}, \langle\cdot,\cdot\rangle_{\tilde{F}})$, where $a_{j_{1},...,j_{D}}$ is the coefficient of $u$ associated with basis element $\prod_{d=1}^D\phi_{d,j_{d}}$. Recall that any $u\in \text{span}(\boldsymbol{\zeta}_{\boldsymbol{m}})$ has the representation 
$$
u(\boldsymbol{x}) = \sum_{k=1}^{K}b_{k}\prod_{d=1}^{D}\sum_{j=1}^{m_{d}} c_{k,d,j} \phi_{d,j}(x_{d})
$$
and hence, under $\iota$, is identified with the tensor rank-$K$ tensor $\sum_{k=1}^K b_k\bigotimes_{d=1}^D\boldsymbol{c}_{d,k}$, where $\boldsymbol{c}_{d,k}$ are the $m_d$-vectors of coefficients for the $k$th marginal function in the $d$th dimension. It follows that  
$$
\begin{aligned}
    \mathbb{E}\left\| P_{\mathcal{H}_{\boldsymbol{m}}}(U) - P_{\boldsymbol{\zeta}_{\boldsymbol{m}}}(U)\right\|_{\mathcal{H}_{\boldsymbol{m}}}^2
    &=  \mathbb{E}\left\|\sum_{l=1}^\infty Z_l\mathcal{A}_l - \sum_{k=1}^Kb_k\bigotimes_{d=1}^D\boldsymbol{c}_{d,k} \right\|_{\tilde{F}}^2 \\
   &= \mathbb{E}\left\|\sum_{l=1}^\infty Z_l\mathcal{A}_l - \sum_{k=1}^K \sum_{j=1}^\infty Z_l b_{j,k}\bigotimes_{d=1}^D\boldsymbol{c}_{d,k}\right\|_{\tilde{F}}^2  \\ 
    &= \mathbb{E}\left\|\sum_{l=1}^\infty Z_l\mathcal{A}_l - \sum_{k=1}^KZ_lb_{l,k}\bigotimes_{d=1}^D\boldsymbol{c}_{d,k}\right\|_{\tilde{F}}^2 \\
    &= \sum_{l=1}^\infty \mathbb{E}\left[Z_l^2\right]\left\|\mathcal{A}_l - \sum_{k=1}^Kb_{l,k}\bigotimes_{d=1}^D\boldsymbol{c}_{d,k}\right\|_{\tilde{F}}^2 \\
    & \quad + \sum_{j\neq r}\mathbb{E}\left[Z_jZ_r\right]\left\langle \mathcal{A}_j - \sum_{k=1}^Kb_{j,k}\bigotimes_{d=1}^D\boldsymbol{c}_{d,k}, \mathcal{A}_r - \sum_{k=1}^Kb_{r,k}\bigotimes_{d=1}^D\boldsymbol{c}_{d,k}\right\rangle_{\tilde{F}} \\ 
    &= \sum_{l=1}^\infty \rho_l\left\| \mathcal{A}_l - \sum_{k=1}^Kb_{l,k}\bigotimes_{d=1}^D\boldsymbol{c}_{d,k}\right\|_{\tilde{F}}^2 \\ 
    &= \sum_{l=1}^K \rho_l\left\| \mathcal{A}_l - \sum_{k=1}^Kb_{l,k}\bigotimes_{d=1}^D\boldsymbol{c}_{d,k}\right\|_{\tilde{F}}^2 + O(h(K)) \\
    &= \min_{\boldsymbol{B}}\sum_{k=1}^K\rho_l\left\| \mathcal{A}_l - \sum_{k=1}^K\boldsymbol{B}_{l,k}\bigotimes_{d=1}^D\boldsymbol{c}_{d,k}\right\|_{\tilde{F}}^2 + O(h(K)) \\ 
    &= \min_{\boldsymbol{B}}\left\| \mathcal{A}^{(K)} - \sum_{k=1}^K\left[\bigotimes_{d=1}^D\boldsymbol{c}_{d,k}\right]\otimes \boldsymbol{B}_{:,k}\right\|_{\tilde{F}, C}^2 + O(h(K)).
\end{aligned}
$$
\end{proof}

\begin{lemma}\label{lem:param_converge_prob}
Let $$
    L_N(\boldsymbol{C}) := N^{-1}\sum_{i}^N\left\|U_i - P_{\boldsymbol{C}}(U_i)\right\|_{\mathcal{H}}^2; \quad L(\boldsymbol{C}) := \mathbb{E}\left\|U - P_{\boldsymbol{C}}(U)\right\|_{\mathcal{H}}^2.
    $$ 
   where $P_{\boldsymbol{C}}$ is the reparameterization of the projection operator $P_{\boldsymbol{\zeta}_{\boldsymbol{m}}}$ for $\boldsymbol{\zeta}_{\boldsymbol{m}}$ defined by $\boldsymbol{C} = (\boldsymbol{C}_1,...,\boldsymbol{C}_{D})\in\boldsymbol{\Theta}_{K,\boldsymbol{m}}$. Define $\Breve{\boldsymbol{C}}_{N}, \boldsymbol{C}^{*}\in \boldsymbol{\Theta}_{K,\boldsymbol{m}}$ to be the minimizers of $ L_N(\boldsymbol{C})$ and $ L(\boldsymbol{C})$, respectively. Then 
  $$
  \Breve{\boldsymbol{C}}_{N}\overset{P}{\rightarrow} \boldsymbol{C}^{*}
  $$
\end{lemma}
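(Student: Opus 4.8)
The plan is to treat $\Breve{\boldsymbol{C}}_N$ as an M-estimator and invoke the standard consistency theorem for extremum estimators (e.g. van der Vaart, Thm.~5.7): if $L_N$ converges to $L$ uniformly over the parameter space, $L$ has a well-separated unique minimizer $\boldsymbol{C}^*$, and $\Breve{\boldsymbol{C}}_N$ (nearly) minimizes $L_N$, then $\Breve{\boldsymbol{C}}_N\overset{P}{\rightarrow}\boldsymbol{C}^*$. The argument decomposes into compactness, uniform convergence, and identifiability, and I would carry them out in that order.

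First I would record two structural simplifications. Because $P_{\boldsymbol{C}}$ is an orthogonal projection, $\|U-P_{\boldsymbol{C}}(U)\|_{\mathcal{H}}^2=\|U\|_{\mathcal{H}}^2-\|P_{\boldsymbol{C}}(U)\|_{\mathcal{H}}^2$, so minimizing $L_N$ is equivalent to maximizing $N^{-1}\sum_i\|P_{\boldsymbol{C}}(U_i)\|_{\mathcal{H}}^2$; the per-sample summand is nonnegative and dominated by the integrable envelope $\|U\|_{\mathcal{H}}^2$, whose expectation $\sum_k\rho_k$ is finite and whose boundedness uniformly over $\mathcal{V}_{K,\boldsymbol{m}}$ is furnished by Proposition~\ref{prop:KOMPB_ip_bounded}. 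Second, the relaxed space $\boldsymbol{\Theta}_{K,\boldsymbol{m}}$ is compact: the norm constraints $\boldsymbol{c}_{d,k}'\boldsymbol{J}_{\boldsymbol{\phi}_d}\boldsymbol{c}_{d,k}\le 1$ (with $\boldsymbol{J}_{\boldsymbol{\phi}_d}$ positive definite) confine each factor to a closed ellipsoidal ball, and the ordering constraints cut out a closed subset, so $\boldsymbol{\Theta}_{K,\boldsymbol{m}}$ is closed and bounded in Euclidean space.

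Next I would establish the uniform law of large numbers $\sup_{\boldsymbol{C}\in\boldsymbol{\Theta}_{K,\boldsymbol{m}}}|L_N(\boldsymbol{C})-L(\boldsymbol{C})|\overset{P}{\rightarrow}0$. With the integrable envelope in hand, a Jennrich-type uniform SLLN applies once the map $\boldsymbol{C}\mapsto\|P_{\boldsymbol{C}}(U)\|_{\mathcal{H}}^2$ is continuous for almost every realization of $U$. Pointwise, continuity follows from Proposition~\ref{prop:kOMPB_projection_coefficients}, which expresses the projection coefficients through the (generalized) inverse of the Gram matrix of the $\zeta_k$; on the regular subset where the $\zeta_k$ are linearly independent this inverse, and hence $P_{\boldsymbol{C}}$, depends continuously on $\boldsymbol{C}$. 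This continuity is the crux of the whole argument and the step I expect to be the main obstacle: at rank-deficient configurations the Gram matrix becomes singular, the dimension of $\mathrm{span}(\boldsymbol{\zeta})$ drops, and the orthogonal projection (thus the criterion) can jump discontinuously. I would circumvent this by invoking Assumption~\ref{asm:existance_assumption}: the bound on $K$ guarantees that the best rank-$K$ coefficient tensor exists and is attained, so the population maximizer $\boldsymbol{C}^*$ lies in the interior of the regular set, bounded away from the degenerate boundary; restricting the uniform convergence argument to a compact neighborhood of $\boldsymbol{C}^*$ on which continuity holds therefore suffices.

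Finally I would verify that $\boldsymbol{C}^*$ is a well-separated unique minimizer of $L$. Uniqueness up to the symmetries of the problem is exactly what the reparameterized space $\boldsymbol{\Theta}_{K,\boldsymbol{m}}$ was designed to enforce: the norm normalization removes the scaling indeterminacy and the strict ordering $\boldsymbol{C}_1(1,1)>\dots>\boldsymbol{C}_D(1,K)$ removes the permutation indeterminacy, while Assumption~\ref{asm:existance_assumption} rules out the ill-posedness of the best rank-$K$ approximation, so the essentially unique optimal CPD pins down a single $\boldsymbol{C}^*$. Given uniqueness, continuity of $L$ on the compact $\boldsymbol{\Theta}_{K,\boldsymbol{m}}$ yields well-separation, i.e. $\inf_{\|\boldsymbol{C}-\boldsymbol{C}^*\|\ge\epsilon}L(\boldsymbol{C})>L(\boldsymbol{C}^*)$ for every $\epsilon>0$. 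Combining compactness, the uniform convergence of the previous step, and this well-separation in the extremum-estimator consistency theorem delivers $\Breve{\boldsymbol{C}}_N\overset{P}{\rightarrow}\boldsymbol{C}^*$.
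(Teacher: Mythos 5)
Your overall skeleton matches the paper's: both arguments invoke van der Vaart's Theorem 5.7 and reduce the claim to (i) uniform convergence of $L_N$ to $L$ over $\boldsymbol{\Theta}_{K,\boldsymbol{m}}$, (ii) well-separated uniqueness of $\boldsymbol{C}^*$ via Assumption~\ref{asm:existance_assumption} together with compactness of $\boldsymbol{\Theta}_{K,\boldsymbol{m}}$ and continuity of $L$, and (iii) the trivial near-minimization property of the exact minimizer. Where you diverge is in how (i) is established: the paper expands $l_{\boldsymbol{C}}(u)$ in the $\langle\cdot,\cdot\rangle_{\tilde F}$ inner product, observes that it is (isomorphic to) a finite-degree polynomial in the entries of $\boldsymbol{C}$ with integrable coefficients, and concludes that $\Gamma=\{l_{\boldsymbol{C}}:\boldsymbol{C}\in\boldsymbol{\Theta}_{K,\boldsymbol{m}}\}$ is a VC-class (Lemma 2.6.15 of van der Vaart and Wellner) and hence Glivenko--Cantelli. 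You instead propose a Jennrich-type uniform SLLN resting on continuity of $\boldsymbol{C}\mapsto\|P_{\boldsymbol{C}}(U)\|_{\mathcal{H}}^2$ plus the integrable envelope.

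The gap is in your handling of the discontinuity you yourself flag. You are right that the orthogonal projection can jump at rank-deficient configurations (the span drops dimension and the captured energy falls), so almost-sure continuity on all of $\boldsymbol{\Theta}_{K,\boldsymbol{m}}$ is not available. But the proposed circumvention --- proving uniform convergence only on a compact neighborhood of $\boldsymbol{C}^*$ --- does not deliver the conclusion. The consistency theorem needs $\sup_{\boldsymbol{C}\in\boldsymbol{\Theta}_{K,\boldsymbol{m}}}|L_N(\boldsymbol{C})-L(\boldsymbol{C})|\overset{P}{\rightarrow}0$ over the \emph{entire} parameter space (or, failing that, a separate argument that $\Breve{\boldsymbol{C}}_N$ eventually lies in the good neighborhood); with uniform control only near $\boldsymbol{C}^*$, nothing rules out $L_N$ dipping below $L_N(\boldsymbol{C}^*)$ at some distant, possibly degenerate, configuration, which is exactly the event well-separation plus global uniform convergence is meant to exclude. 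Showing the estimator stays in the neighborhood is itself a consistency statement, so the argument as written is circular. Moreover, the claim that Assumption~\ref{asm:existance_assumption} places $\boldsymbol{C}^*$ in the interior of the regular (full-rank) set is asserted without justification: that assumption is a condition on $K$ relative to the rank of $\mathcal{A}^{(K)}$ guaranteeing existence and essential uniqueness of the best rank-$K$ approximation, and it does not by itself bound the optimal factors away from linear dependence. To close the gap you would either need to prove a global uniform law despite the exceptional set (the paper's polynomial/VC route, which works with the coefficient formula of Proposition~\ref{prop:kOMPB_projection_coefficients} rather than with the projection operator as a set-valued object), or establish an a priori bound showing the empirical minimizers avoid a fixed neighborhood of the degenerate configurations with probability tending to one.
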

\begin{proof}
    The strong law of large numbers ensures $L_N(\boldsymbol{C}) \rightarrow L(\boldsymbol{C})$ for every $\boldsymbol{C}$, almost surely. By Theorem 5.7 of \cite{vaart_1998}, the desired convergence holds if the following conditions are met:
    \begin{enumerate}
        \item \textit{Uniform Convergence}:
        $$
        \sup_{\boldsymbol{C}\in\boldsymbol{\Theta}_{K,\boldsymbol{m}}}\left|L_N(\boldsymbol{C})  - L(\boldsymbol{C})\right| \overset{P}{\rightarrow} 0 
        $$
        \item \textit{Uniqueness}: For any $\epsilon > 0$
        $$
        \sup_{\boldsymbol{C}:\text{dist}(\boldsymbol{C},\boldsymbol{C}^{*})\ge \epsilon} L(\boldsymbol{C}) > L(\boldsymbol{C}^{*})
        $$
        \item \textit{Near Minimum}:
        $$
            L_{N}(\Breve{\boldsymbol{C}}_N) \le L_{N}(\boldsymbol{C}^{*}) + o_P(1)
        $$
    \end{enumerate}
    \textbf{Condition 1}: This can be verified by using Glivenko-Cantelli theory. Denote $l_{\boldsymbol{C}}(U) = \left\|U - P_{\boldsymbol{C}}(U)\right\|_{\mathcal{H}}^2$, i.e. $L(\boldsymbol{C}) = \mathbb{E}\left[l_{\boldsymbol{C}}(U)\right]$. Denote the set of functions 
    $$
        \Gamma = \{l_{\boldsymbol{C}}: \boldsymbol{C}\in\boldsymbol{\Theta}_{K,\boldsymbol{m}}\}.
    $$
    The uniform convergence requirement is equivalent to $\Gamma$ being Glivenko-Cantelli. %(see \cite{vaart_1998} page 46). 
    We can express each element of the function set as 
    $$
        l_{\boldsymbol{C}}(u) = \|u\|_{\mathcal{H}}^2 - 2\langle u, P_{\boldsymbol{C}}(u)\rangle_{\mathcal{H}} + \langle P_{\boldsymbol{C}}(u), P_{\boldsymbol{C}}(u)\rangle_{\mathcal{H}} 
    $$
    Here, we work with the equivalent formulation of 
    \begin{equation}\label{eqn:polynomial_formulation}
    \begin{aligned}
          l_{\boldsymbol{C}}(u) &= \|u\|_{\mathcal{H}}^2 - 2\langle u, P_{\boldsymbol{C}}(u)\rangle_{\mathcal{H}} + \langle P_{\boldsymbol{C}}(u), P_{\boldsymbol{C}}(u)\rangle_{\mathcal{H}}  \\
          &= \sum_{l=1}^\infty Z_q^2 -2\sum_{k=1}^Kb_k\langle \sum_{q=1}^\infty Z_q\mathcal{A}_q, \bigotimes_{d=1}^D\boldsymbol{c}_{d,k}\rangle_{\tilde{F}} + \sum_{k=1}^K\sum_{p=1}^K b_kb_p\langle \bigotimes_{d=1}^D\boldsymbol{c}_{d,k}, \bigotimes_{d=1}^D\boldsymbol{c}_{d,jp}\rangle_{\tilde{F}}
    \end{aligned}
    \end{equation}
    Recalling the definition of $\langle,\rangle_{\tilde{F}}$ and we have  
    $$
    \begin{aligned}
    \left\langle \sum_{q=1}^\infty Z_q\mathcal{A}_q, \bigotimes_{d=1}^D\boldsymbol{c}_{d,k}\right\rangle_{\tilde{F}} &= \sum_{i_{1}=1}^{m_{1}}\cdots\sum_{i_{D}=1}^{m_{d}}\sum_{j_{1}=1}^{m_{1}}\cdots\sum_{j_{D}=1}^{m_{D}}\left(\sum_{q=1}^\infty Z_q\mathcal{A}_q(i_1,...,i_{D})\right)\prod_{d=1}^D\boldsymbol{J}_{\boldsymbol{\phi}_{d}}(i_d,j_d)\prod_{d=1}^D\boldsymbol{c}_{d,k,j_{d}} \\ 
    \left\langle \bigotimes_{d=1}^D\boldsymbol{c}_{d,k}, \bigotimes_{d=1}^D\boldsymbol{c}_{d,p}\right\rangle_{\tilde{F}} &= \sum_{i_{1}=1}^{m_{1}}\cdots\sum_{i_{D}=1}^{m_{d}}\sum_{j_{1}=1}^{m_{1}}\cdots\sum_{j_{D}=1}^{m_{D}}\prod_{d=1}^D\boldsymbol{J}_{\boldsymbol{\phi}_{d}}(i_d,j_d)\prod_{d=1}^D \boldsymbol{c}_{d,k,i_{d}}\boldsymbol{c}_{d,p,j_{d}}
    \end{aligned}
    $$
    which are polynomials of order $D$ and $2D$ in $\boldsymbol{C}$, respectively. From the definition given in proposition~\ref{prop:kOMPB_projection_coefficients}, we can see that each $b_k$ is also a finite degree polynomial in $\boldsymbol{C}$. As a result, $l_{\boldsymbol{C}}(u)$ is isomorphic to a polynomial with finitely many terms. From the boundedness of the sum of the second moments of the $Z_k$'s along with proposition~\ref{prop:KOMPB_ip_bounded}, it follows that $\mathbb{E}\left[l_{\boldsymbol{C}}\right] < \infty$. Therefore, the $\Gamma$ is VC-class, which follows from Lemma 2.6.15 of \cite{vaart1996}, and hence Glivenko-Cantelli.
    \par 
    \textbf{Condition 2:} This condition indicates $\boldsymbol{C}^{*}$ is a well separated minimum of $L$. Using Lemma~\ref{eqn:coef_tensor_decomposition}, we have that 
    $$
    \begin{aligned}
    \min_{\boldsymbol{\zeta}_{\boldsymbol{m}}\in\mathcal{V}_{K,\boldsymbol{m}}}  \mathbb{E}\left\|U - P_{\boldsymbol{\zeta}_{\boldsymbol{m}}}(U)\right\|_{\mathcal{H}}^2 = \min_{\boldsymbol{C}\in\boldsymbol{\Theta}_{K,{\boldsymbol{m}}}}\min_{\boldsymbol{B}} \quad& 
    \left\|\mathcal{A}^{(K)} - \sum_{k=1}^K \left[\bigotimes_{d=1}^D\boldsymbol{c}_{d,k}\right]\otimes \boldsymbol{B}_{:,k}\right\|_{\tilde{F}, C}^2 + \\ &O(w_{\bm{\tau}_{\boldsymbol{m}}}(\boldsymbol{m})) +  O(h(K))
    \end{aligned}
    $$
    and therefore the minimizer of $L$ is given by the rank $K$ decomposition of the tensor $\mathcal{A}^{(K)}$ under the $\|\cdot\|_{\tilde{F},C}$ norm. Under Assumption~\ref{asm:existance_assumption}, this minimizer is unique in  $\boldsymbol{\Theta}_{K,\boldsymbol{m}}$. Coupled with the compactness of $\boldsymbol{\Theta}_{K,\boldsymbol{m}}$ and continuity of $L$, the desired condition follows. 
    \par 
    \textbf{Condition 3:} This follows trivially, as  
    $$
     L_N(\Breve{\boldsymbol{C}}_N) = \min_{\boldsymbol{C}\in\boldsymbol{\Theta}_{K,\boldsymbol{m}}}N^{-1}\sum_{i}^N\left\|U_i - P_{\boldsymbol{C}}(U_i)\right\|_{\mathcal{H}}^2 \le N^{-1}\sum_{i}^N\left\|U_i - P_{\boldsymbol{C}^{*}}(U_i)\right\|_{\mathcal{H}}^2 = L_N(\boldsymbol{C}^{*}) 
    $$
\end{proof}

\begin{theorem}[Consistency]\label{thm:consistency}
Under Assumptions 1 and \ref{asm:existance_assumption}, we have the following (component-wise) convergence result:
$$
\Breve{\boldsymbol{\zeta}}_{\boldsymbol{m},N}^{*}(\boldsymbol{x}) \overset{P}{\rightarrow} \boldsymbol{\zeta}_{\boldsymbol{m}}^{*}(\boldsymbol{x}) \quad \forall \boldsymbol{x}\in\mathcal{M}.
$$
\end{theorem}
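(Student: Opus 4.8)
The plan is to reduce the functional convergence statement to the finite-dimensional parameter convergence already established in Lemma~\ref{lem:param_converge_prob}, and then transfer it through a continuous evaluation map by the continuous mapping theorem. First I would record the identification between a rank-$K$ MPB $\boldsymbol{\zeta}_{\boldsymbol{m}}\in\mathcal{V}_{K,\boldsymbol{m}}$ and its coefficient parameter $\boldsymbol{C}=(\boldsymbol{C}_1,\dots,\boldsymbol{C}_D)\in\boldsymbol{\Theta}_{K,\boldsymbol{m}}$ furnished by the expansion $\xi_{k,d}(x_d)=\sum_{j=1}^{m_d}c_{d,k,j}\phi_{d,j}(x_d)$. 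Under this identification $\Breve{\boldsymbol{\zeta}}_{\boldsymbol{m},N}^{*}$ corresponds to the empirical minimizer $\Breve{\boldsymbol{C}}_N$ and $\boldsymbol{\zeta}_{\boldsymbol{m}}^{*}$ to the population minimizer $\boldsymbol{C}^{*}$, so Lemma~\ref{lem:param_converge_prob} directly supplies $\Breve{\boldsymbol{C}}_N\overset{P}{\rightarrow}\boldsymbol{C}^{*}$ in $\boldsymbol{\Theta}_{K,\boldsymbol{m}}$.

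Next, fix $\boldsymbol{x}=(x_1,\dots,x_D)'\in\mathcal{M}$ and consider the evaluation map $e_{\boldsymbol{x}}:\boldsymbol{\Theta}_{K,\boldsymbol{m}}\rightarrow\mathbb{R}^K$ whose $k$th component is $[e_{\boldsymbol{x}}(\boldsymbol{C})]_k=\prod_{d=1}^D\boldsymbol{c}_{d,k}^\prime\boldsymbol{\phi}_{m_d,d}(x_d)$, i.e.\ the value at $\boldsymbol{x}$ of the $k$th rank-one marginal product function determined by $\boldsymbol{C}$. Because the marginal basis evaluations $\phi_{d,j}(x_d)$ are fixed scalars, each component of $e_{\boldsymbol{x}}$ is a product of $D$ linear forms in the entries of $\boldsymbol{C}$, hence a polynomial in $\boldsymbol{C}$ and in particular continuous on all of $\boldsymbol{\Theta}_{K,\boldsymbol{m}}$.

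I would then apply the continuous mapping theorem: since $e_{\boldsymbol{x}}$ is continuous everywhere and $\Breve{\boldsymbol{C}}_N\overset{P}{\rightarrow}\boldsymbol{C}^{*}$, it follows that $e_{\boldsymbol{x}}(\Breve{\boldsymbol{C}}_N)\overset{P}{\rightarrow}e_{\boldsymbol{x}}(\boldsymbol{C}^{*})$, which is precisely $\Breve{\boldsymbol{\zeta}}_{\boldsymbol{m},N}^{*}(\boldsymbol{x})\overset{P}{\rightarrow}\boldsymbol{\zeta}_{\boldsymbol{m}}^{*}(\boldsymbol{x})$. As $\boldsymbol{x}$ was arbitrary, this yields the claimed component-wise convergence at every point of $\mathcal{M}$.

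The substantive probabilistic content lives entirely in Lemma~\ref{lem:param_converge_prob}, so the only delicate point here is that the reduction is legitimate: I must confirm that passing to the relaxed, order-constrained parameter space $\boldsymbol{\Theta}_{K,\boldsymbol{m}}$ does not corrupt the correspondence between functions and their evaluations. This is where I would spend the care, verifying that the representative of $\boldsymbol{\zeta}_{\boldsymbol{m}}^{*}$ selected by the identifiability constraints in $\boldsymbol{\Theta}_{K,\boldsymbol{m}}$ is indeed the limit point $\boldsymbol{C}^{*}$ of Lemma~\ref{lem:param_converge_prob}, and that the evaluation map $e_{\boldsymbol{x}}$ is insensitive to any residual sign and scale freedom among the marginal factors, so that it is genuinely well defined on $\boldsymbol{\Theta}_{K,\boldsymbol{m}}$. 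Granting this, no further probabilistic work is required and the continuous mapping theorem closes the argument.
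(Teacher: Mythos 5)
Your argument is exactly the paper's: invoke Lemma~\ref{lem:param_converge_prob} for $\Breve{\boldsymbol{C}}_N\overset{P}{\rightarrow}\boldsymbol{C}^{*}$, note that the pointwise evaluation $\boldsymbol{C}\mapsto\boldsymbol{\zeta}_{\boldsymbol{m}}(\boldsymbol{x})$ is continuous (polynomial) on $\boldsymbol{\Theta}_{K,\boldsymbol{m}}$, and conclude by the continuous mapping theorem. Your additional care about the identification between $\mathcal{V}_{K,\boldsymbol{m}}$ and the relaxed parameter space is a reasonable refinement of the same proof, not a different route.
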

\begin{proof}
    Note that $\boldsymbol{\zeta}_{\boldsymbol{m}}(\boldsymbol{x})$ is a continuous functions of $\boldsymbol{C}\in\boldsymbol{\Theta}_{K,\boldsymbol{m}}$ for all $\boldsymbol{x}\in\mathcal{M}$. The desired result follows directly from the convergence established in Lemma~\ref{lem:param_converge_prob} and the continuous mapping theorem.
\end{proof}

\noindent{\textbf{Convergence Rate}}
\par\bigskip

\begin{lemma}[Lipschitz Map]\label{lem:lipshitz}
    Under Assumptions 1, \ref{asm:spectral_decay_rate} and  \ref{asm:existance_assumption}, there exists functional $F(u)$ such that $\mathbb{E}\left[ F^2\right] <\infty$ and 
    $$
    | l_{\boldsymbol{C}^{(1)}}(u) - l_{\boldsymbol{C}^{(2)}}(u)| \le F(u)\text{dist}(\boldsymbol{C}^{(1)}, \boldsymbol{C}^{(2)})
    $$
    for any $\boldsymbol{C}^{(1)}, \boldsymbol{C}^{(2)}\in\Theta_{K,\boldsymbol{m}}$, where $\text{dist}(\boldsymbol{C}_{1}, \boldsymbol{C}_{2}) = \|\text{vec}\left(\boldsymbol{C}_{1}\right) - \text{vec}\left(\boldsymbol{C}_{2}\right)\|_{2}$.
\end{lemma}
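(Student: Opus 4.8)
The plan is to realize $F$ as (a uniform bound on) the norm of the parameter-gradient of $l_{\boldsymbol{C}}(u)$ and then conclude by a mean-value inequality, so the real content is to write $l_{\boldsymbol{C}}(u)$ in a form whose dependence on $\boldsymbol{C}$ is smooth and to control the size of its gradient by a square-integrable functional of $u$. First I would use that $P_{\boldsymbol{C}}$ is an orthogonal projection, so $l_{\boldsymbol{C}}(u)=\|u\|_{\mathcal{H}}^2-\|P_{\boldsymbol{C}}(u)\|_{\mathcal{H}}^2$; the term $\|u\|_{\mathcal{H}}^2$ is free of $\boldsymbol{C}$ and cancels in $l_{\boldsymbol{C}^{(1)}}(u)-l_{\boldsymbol{C}^{(2)}}(u)$, reducing the claim to a Lipschitz bound for the map $\boldsymbol{C}\mapsto\|P_{\boldsymbol{C}}(u)\|_{\mathcal{H}}^2$.

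Next, invoking the isometry $\iota$ of Lemma~\ref{lem:param_converge_prob}, I would identify $P_{\mathcal{H}_{\boldsymbol{m}}}(u)$ with the random coefficient tensor $\mathcal{Z}:=\sum_{q=1}^\infty Z_q\mathcal{A}_q\in\bigotimes_{d=1}^D\mathbb{R}^{m_d}$, so that $\|P_{\boldsymbol{C}}(u)\|_{\mathcal{H}}^2=\langle\mathcal{Z},\Pi_{\boldsymbol{C}}\mathcal{Z}\rangle_{\tilde{F}}$, where $\Pi_{\boldsymbol{C}}$ is the $\langle\cdot,\cdot\rangle_{\tilde{F}}$-orthogonal projection onto $\mathrm{span}\{\bigotimes_{d=1}^D\boldsymbol{c}_{d,k}\}_{k=1}^K$. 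By Proposition~\ref{prop:kOMPB_projection_coefficients} the associated coefficients equal $G(\boldsymbol{C})^{-1}$ applied to the vector with entries $\langle\mathcal{Z},\bigotimes_{d=1}^D\boldsymbol{c}_{d,k}\rangle_{\tilde{F}}$, which are linear in $\mathcal{Z}$ and polynomial (degree $D$) in $\boldsymbol{C}$, where $G(\boldsymbol{C})=[\prod_{d=1}^D\boldsymbol{c}_{d,i}'\boldsymbol{J}_{\boldsymbol{\phi}_d}\boldsymbol{c}_{d,j}]_{ij}$ is the Gram matrix of the rank-one tensors. Hence $\|P_{\boldsymbol{C}}(u)\|_{\mathcal{H}}^2$ is a homogeneous quadratic form in $\mathcal{Z}$ whose matrix is a rational function of $\boldsymbol{C}$ with polynomial numerator and denominator $\det G(\boldsymbol{C})$.

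I would then differentiate in $\boldsymbol{C}$. On any region where $\det G(\boldsymbol{C})$ is bounded away from zero, $\nabla_{\boldsymbol{C}}\|P_{\boldsymbol{C}}(u)\|_{\mathcal{H}}^2=\langle\mathcal{Z},(\nabla_{\boldsymbol{C}}\Pi_{\boldsymbol{C}})\mathcal{Z}\rangle_{\tilde{F}}$ is continuous, is uniformly bounded in $\boldsymbol{C}$ over the compact ellipsoidal constraint set $\{\boldsymbol{c}_{d,k}'\boldsymbol{J}_{\boldsymbol{\phi}_d}\boldsymbol{c}_{d,k}\le1\}$, and is homogeneous quadratic in $\mathcal{Z}$. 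Taking a convex neighborhood of the population minimizer and applying the mean-value inequality along the segment joining $\boldsymbol{C}^{(1)}$ and $\boldsymbol{C}^{(2)}$ then yields the claim with $F(u):=\sup_{\boldsymbol{C}}\big\|\nabla_{\boldsymbol{C}}\|P_{\boldsymbol{C}}(u)\|_{\mathcal{H}}^2\big\|_2\le R\,\|\mathcal{Z}\|_{\tilde{F}}^2\le R\,\|u\|_{\mathcal{H}}^2$, using $\|\mathcal{Z}\|_{\tilde{F}}^2=\|P_{\mathcal{H}_{\boldsymbol{m}}}(u)\|_{\mathcal{H}}^2\le\|u\|_{\mathcal{H}}^2$. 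Finally $\mathbb{E}[F^2]\le R^2\,\mathbb{E}\|u\|_{\mathcal{H}}^4=R^2\,\mathbb{E}[(\sum_q Z_q^2)^2]$, which is finite by the fourth-moment control of Assumption~\ref{asm:spectral_decay_rate}.ii together with $\sum_q\rho_q<\infty$; this is exactly the role that assumption plays.

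The main obstacle is the factor $\det G(\boldsymbol{C})$: on the full relaxed set $\boldsymbol{\Theta}_{K,\boldsymbol{m}}$ the rank-one tensors $\bigotimes_{d=1}^D\boldsymbol{c}_{d,k}$ may become linearly dependent (two columns coalescing, or a column vanishing, since the norm constraints are inequalities), so that $G(\boldsymbol{C})$ degenerates, $\Pi_{\boldsymbol{C}}$ is discontinuous, and the gradient is unbounded. I would resolve this by restricting the argument to a neighborhood of the unique population minimizer $\boldsymbol{C}^*$, which under Assumption~\ref{asm:existance_assumption} exists, is unique, and corresponds to a linearly independent system, so that $\det G(\boldsymbol{C})$ is bounded below and $G(\boldsymbol{C})^{-1}$ and its derivatives are uniformly bounded there; this is the only region relevant to the subsequent convergence-rate argument. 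Producing an explicit lower bound on $\det G$ over this operative domain is the delicate step, whereas the differentiation and moment bookkeeping are routine.
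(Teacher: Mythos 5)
Your route is genuinely different from the paper's. The paper does not differentiate anything: it first reduces to $|l_{\boldsymbol{C}^{(1)}}(u)-l_{\boldsymbol{C}^{(2)}}(u)|\le 4\|u\|_{\mathcal{H}}\,\|P_{\boldsymbol{C}^{(1)}}(u)-P_{\boldsymbol{C}^{(2)}}(u)\|_{\mathcal{H}}$ by Cauchy--Schwarz and $\|P_{\boldsymbol{C}}(u)\|_{\mathcal{H}}\le\|u\|_{\mathcal{H}}$, then expands $P_{\boldsymbol{C}}(u)=\sum_l Z_l\sum_k b_{l,k}\bigotimes_d\boldsymbol{c}_{d,k}$ and bounds the difference by $R\sum_l|Z_l|\sum_k\|\bigotimes_d\boldsymbol{c}^{(1)}_{d,k}-\bigotimes_d\boldsymbol{c}^{(2)}_{d,k}\|_{\tilde F}$, invoking Lipschitzness of the rank-one tensor map on the compact set $\boldsymbol{\Theta}_{K,\boldsymbol{m}}$; its envelope is $F(u)=R\|u\|_{\mathcal{H}}\sum_l|Z_l|$, whose square-integrability consumes both parts of Assumption~\ref{asm:spectral_decay_rate} (the $\sum_k\sqrt{\rho_k}<\infty$ condition enters through $\mathbb{E}|Z_l|\le\sqrt{\rho_l}$). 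Your mean-value argument on the quadratic form $\|P_{\boldsymbol{C}}(u)\|_{\mathcal{H}}^2=\langle\mathcal{Z},\Pi_{\boldsymbol{C}}\mathcal{Z}\rangle_{\tilde F}$ buys a cleaner envelope, $F(u)=R\|u\|_{\mathcal{H}}^2$, which (under the implicit independence used in the paper's own moment computation) needs only fourth moments and not Assumption~\ref{asm:spectral_decay_rate}.i; the cost is that you must control $\nabla_{\boldsymbol{C}}\Pi_{\boldsymbol{C}}$, i.e.\ the inverse Gram matrix $G(\boldsymbol{C})^{-1}$ and its derivatives.

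The one substantive discrepancy is scope: the lemma is asserted for \emph{all} $\boldsymbol{C}^{(1)},\boldsymbol{C}^{(2)}\in\boldsymbol{\Theta}_{K,\boldsymbol{m}}$, and your restriction to a neighborhood of $\boldsymbol{C}^{*}$ where $\det G(\boldsymbol{C})$ is bounded below proves only a local version. You are right that this suffices for the downstream use (Corollary 5.53 of van der Vaart only requires the Lipschitz property near the minimizer), but as a proof of the stated lemma it is incomplete unless you either handle the degenerate strata of $\boldsymbol{\Theta}_{K,\boldsymbol{m}}$ or note that the projection $\Pi_{\boldsymbol{C}}$ can be defined via the pseudoinverse and argue continuity/boundedness across rank drops, which is not automatic. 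It is worth observing, though, that the same degeneracy is lurking unacknowledged in the paper's own proof: the step in \eqref{eqn:Lipshitz_projection_operator} that absorbs the coefficients $b^{(1)}_{l,k},b^{(2)}_{l,k}$ into a generic constant $R$ requires these coefficients (which by Proposition~\ref{prop:kOMPB_projection_coefficients} involve the same inverse Gram matrix) to be uniformly bounded over $\boldsymbol{\Theta}_{K,\boldsymbol{m}}$, and moreover silently replaces a difference of products $b^{(1)}\zeta^{(1)}-b^{(2)}\zeta^{(2)}$ by a multiple of $\zeta^{(1)}-\zeta^{(2)}$, which also needs the $b_{l,k}$ to be Lipschitz in $\boldsymbol{C}$. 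So your proposal is not weaker than the paper's argument in substance; it is more honest about where the difficulty sits, but you should either state the lemma locally or supply the uniform lower bound on $\det G$ that both proofs ultimately need.
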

\begin{proof}
Notice that 
  $$
  \begin{aligned}
    &| l_{\boldsymbol{C}^{(1)}}(u) - l_{ \boldsymbol{C}^{(2)}}(u)| = |-2\langle u, P_{\boldsymbol{C}^{(1)}}(u) - P_{ \boldsymbol{C}^{(2)}}(u) \rangle_{\mathcal{H}} + \|P_{\boldsymbol{C}^{(1)}}(u)\|_{\mathcal{H}}^2 -  \|P_{ \boldsymbol{C}^{(2)}}(u)\|_{\mathcal{H}}^2 | \\
    &\le 2 |\langle u, P_{\boldsymbol{C}^{(1)}}(u) - P_{ \boldsymbol{C}^{(2)}}(u) \rangle_{\mathcal{H}}| + |\|(P_{\boldsymbol{C}^{(1)}}(u)\|_{\mathcal{H}} -  \|P_{ \boldsymbol{C}^{(2)}}(u)\|_{\mathcal{H}}) (P_{\boldsymbol{C}^{(1)}}(u)\|_{\mathcal{H}} + \|P_{ \boldsymbol{C}^{(2)}}(u)\|_{\mathcal{H}})| \\
    & \le 2\|u\|_{\mathcal{H}}\|P_{\boldsymbol{C}^{(1)}}(u) - P_{ \boldsymbol{C}^{(2)}}(u)\|_{\mathcal{H}} + \|P_{\boldsymbol{C}^{(1)}}(u) - P_{ \boldsymbol{C}^{(2)}}(u)\|_{\mathcal{H}}(\|P_{\boldsymbol{C}^{(1)}}(u)\|_{\mathcal{H}} + \|P_{ \boldsymbol{C}^{(2)}}(u)\|_{\mathcal{H}})\\
    & \le 4\|u\|_{\mathcal{H}} \|P_{\boldsymbol{C}^{(1)}}(u) - P_{ \boldsymbol{C}^{(2)}}(u)\|_{\mathcal{H}}.
  \end{aligned}
  $$
Additionally, we have that 
\begin{equation}\label{eqn:Lipshitz_projection_operator}
\begin{aligned}
  \|P_{\boldsymbol{C}_{1}}(u) - P_{\boldsymbol{C}_{2}}(u)\|_{\mathcal{H}} &= \|\sum_{l=1}^\infty Z_l\sum_{k=1}^K (b_{l,k}^{(1)}\bigotimes_{d=1}^D \boldsymbol{c}_{d,k}^{(1)} - b_{l,k}^{(2)}\bigotimes_{d=1}^D \boldsymbol{c}_{d,k}^{(2)})\|_{\tilde{F}} \\
  &\le \sum_{l=1}^\infty |Z_l|\sum_{k=1}^K\|(b_{l,k}^{(1)}\bigotimes_{d=1}^D \boldsymbol{c}_{d,k}^{(1)} - b_{l,k}^{(2)}\bigotimes_{d=1}^D \boldsymbol{c}_{d,k}^{(2)})\|_{\tilde{F}} \\
  & \le R\sum_{l=1}^\infty |Z_l|\sum_{k=1}^K\|\bigotimes_{d=1}^D \boldsymbol{c}_{d,k}^{(1)} - \bigotimes_{d=1}^D \boldsymbol{c}_{d,k}^{(2)}\|_{\tilde{F}}.
\end{aligned}
\end{equation}
Clearly, the mapping defined by $\boldsymbol{C} \mapsto \sum_{k=1}^K\|\bigotimes_{d=1}^D \boldsymbol{c}_{d,k}\|_{\tilde{F}}$ has bounded partial derivatives on $\Theta_{K,\boldsymbol{m}}$ and therefore is Lipschitz and hence  
$$
\sum_{k=1}^K\|\bigotimes_{d=1}^D \boldsymbol{c}_{d,k}^{(1)} - \bigotimes_{d=1}^D \boldsymbol{c}_{d,k}^{(2)}\|_{\tilde{F}} \le R\|\text{vec}\left(\boldsymbol{C}_{1}\right) - \text{vec}\left(\boldsymbol{C}_{2}\right)\|_{2}.
$$
Define $F(u) := R\|u\|_{\mathcal{H}} \sum_{l=1}^\infty|Z_l|$ for generic constant $ 0 < R < \infty$. We have that 
$$
\begin{aligned}
  \mathbb{E}\left[ F^2 \right] &= R \mathbb{E}\left[ \left(\sum_{l=1}^\infty Z_l^2\right) \left(\sum_{l=1}^\infty|Z_l|\right)^2\right] \\ 
  & = R\mathbb{E}\left[\sum_{l=1}^\infty\sum_{j=1}^\infty\sum_{q=1}^\infty Z_l^2|Z_j||Z_q|\right] \\
  &= R\sum_{l=1}^\infty\sum_{j=1}^\infty\sum_{q=1}^\infty \mathbb{E}\left[ Z_l^2\right]\mathbb{E}\left[ |Z_j|\right]\mathbb{E}\left[ |Z_q|\right] \mathbb{I}\{l\neq j, l\neq q, q\neq j\} + \\
  & \qquad\qquad \mathbb{E}\left[ Z_l^2\right]\mathbb{E}\left[ Z_q^2\right] \mathbb{I}\{l\neq j, j = q\} + \\
  & \qquad\qquad \mathbb{E}\left[ |Z_l|^3\right]\mathbb{E}\left[ |Z_j|\right]\mathbb{I}\{l\neq j, l = q\} + \\
  & \qquad\qquad  \mathbb{E}\left[ |Z_l|^3\right]\mathbb{E}\left[ |Z_q|\right]\mathbb{I}\{l = j, l \neq q\} + \\
  & \qquad\qquad  \mathbb{E}\left[ Z_l^4\right]\mathbb{I}\{l = j = q\}. \\
\end{aligned}
$$
Therefore, $\sum_{l=1}^\infty\mathbb{E}\left[|Z_l|^r\right] < \infty\text{ for }  r=1,2,3,4 \Longrightarrow \mathbb{E}\left[ F^2 \right] < \infty$. Since $\mathbb{E}\left[|Z_l|\right] \le \sqrt{\rho_l}$ by Jensen's inequality, Assumptions 1 and \ref{asm:spectral_decay_rate} ensure each of these series are convergent and the desired result follows.
\end{proof}
\begin{theorem}[Convergence Rate]\label{thm:convergence_rate}
    Under Assumptions 1, \ref{asm:spectral_decay_rate} and \ref{asm:existance_assumption}, we have 
    $$
    \text{vec}\left(\Breve{\boldsymbol{C}}_{N}\right) -\text{vec}\left(\boldsymbol{C}^{*}\right) = O_{p}\left(N^{-1/2}\right)
    $$
\end{theorem}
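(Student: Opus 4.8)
The plan is to treat $\Breve{\boldsymbol{C}}_N$ as an M-estimator and invoke the standard rate-of-convergence theorem for such estimators (e.g. Theorem 5.52 of \citeSupp{vaart_1998}, or Theorem 3.2.5 of \citeSupp{vaart1996}). Writing $l_{\boldsymbol{C}}(u) = \|u - P_{\boldsymbol{C}}(u)\|_{\mathcal{H}}^2$ as in Lemma~\ref{lem:param_converge_prob}, so that $L_N(\boldsymbol{C}) = \mathbb{P}_N l_{\boldsymbol{C}}$ and $L(\boldsymbol{C}) = P l_{\boldsymbol{C}}$, the theorem delivers a rate $r_N$ provided two ingredients hold on a neighborhood of $\boldsymbol{C}^{*}$: (i) a quadratic lower bound (curvature) $L(\boldsymbol{C}) - L(\boldsymbol{C}^{*}) \gtrsim \text{dist}(\boldsymbol{C}, \boldsymbol{C}^{*})^2$, and (ii) a modulus-of-continuity bound $\mathbb{E}^{*}\sup_{\text{dist}(\boldsymbol{C},\boldsymbol{C}^{*}) \le \delta}|\mathbb{G}_N(l_{\boldsymbol{C}} - l_{\boldsymbol{C}^{*}})| \lesssim \phi_N(\delta)$, where $\mathbb{G}_N = \sqrt{N}(\mathbb{P}_N - P)$ and $\delta \mapsto \phi_N(\delta)/\delta^{\alpha}$ is decreasing for some $\alpha < 2$. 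Since Theorem~\ref{thm:consistency}, via Lemma~\ref{lem:param_converge_prob}, already gives $\Breve{\boldsymbol{C}}_N \overset{P}{\rightarrow}\boldsymbol{C}^{*}$, it suffices to establish (i) and (ii) locally.

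For the curvature condition (i), I would exploit the explicit representation of $L$ from Lemma~\ref{lem:generalization_error}: modulo the $\boldsymbol{C}$-independent truncation and tail terms, $L(\boldsymbol{C})$ equals the squared $\|\cdot\|_{\tilde{F},C}$-distance from the coefficient tensor $\mathcal{A}^{(K)}$ to the rank-$K$ CP model whose marginal factors are determined by $\boldsymbol{C}$, minimized over $\boldsymbol{B}$. This is a finite-degree polynomial in the entries of $\boldsymbol{C}$ on the compact set $\boldsymbol{\Theta}_{K,\boldsymbol{m}}$, hence smooth. Under Assumption~\ref{asm:existance_assumption} the minimizer $\boldsymbol{C}^{*}$ is unique within $\boldsymbol{\Theta}_{K,\boldsymbol{m}}$, and the constraints defining $\boldsymbol{\Theta}_{K,\boldsymbol{m}}$ remove the scaling and permutation indeterminacies of the CP factorization. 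I would then argue that the Hessian of $L$ at $\boldsymbol{C}^{*}$ is positive definite along the constrained tangent directions, yielding $L(\boldsymbol{C}) - L(\boldsymbol{C}^{*}) \ge c\,\text{dist}(\boldsymbol{C}, \boldsymbol{C}^{*})^2$ on a neighborhood of $\boldsymbol{C}^{*}$.

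For the modulus condition (ii), the key input is the Lipschitz bound of Lemma~\ref{lem:lipshitz}: for all $\boldsymbol{C}^{(1)}, \boldsymbol{C}^{(2)}\in\boldsymbol{\Theta}_{K,\boldsymbol{m}}$ one has $|l_{\boldsymbol{C}^{(1)}}(u) - l_{\boldsymbol{C}^{(2)}}(u)| \le F(u)\,\text{dist}(\boldsymbol{C}^{(1)}, \boldsymbol{C}^{(2)})$ with $\mathbb{E}[F^2] < \infty$. Thus the localized class $\mathcal{F}_\delta := \{l_{\boldsymbol{C}} - l_{\boldsymbol{C}^{*}} : \text{dist}(\boldsymbol{C}, \boldsymbol{C}^{*}) \le \delta\}$ is a Lipschitz-in-parameter family indexed by a ball of radius $\delta$ in the finite-dimensional ($p = K\sum_{d=1}^{D} m_d$) parameter space, with $L_2(P)$-envelope $\delta F$. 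Standard bracketing estimates for such families give $N_{[]}(\epsilon\|F\|_{L_2(P)}, \mathcal{F}_\delta, L_2(P)) \lesssim (\delta/\epsilon)^{p}$, so the bracketing integral $J_{[]}(\delta, \mathcal{F}_\delta)$ is of order $\delta$, and the maximal inequality (e.g. Lemma~3.4.2 of \citeSupp{vaart1996}) yields $\mathbb{E}^{*}\|\mathbb{G}_N\|_{\mathcal{F}_\delta} \lesssim \delta$. Hence we may take $\phi_N(\delta) = \delta$, for which $\delta \mapsto \phi_N(\delta)/\delta^{\alpha}$ is decreasing for any $\alpha \in (1,2)$.

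Combining (i) and (ii), the rate $r_N$ solves $r_N^2\phi_N(r_N^{-1}) \lesssim \sqrt{N}$, i.e. $r_N \lesssim \sqrt{N}$; taking $r_N = N^{1/2}$ gives $N^{1/2}\,\text{dist}(\Breve{\boldsymbol{C}}_N, \boldsymbol{C}^{*}) = O_p(1)$, which is exactly the claim $\text{vec}(\Breve{\boldsymbol{C}}_N) - \text{vec}(\boldsymbol{C}^{*}) = O_p(N^{-1/2})$. The main obstacle is the curvature condition (i): because $\boldsymbol{C}^{*}$ is defined through a constrained CP decomposition, one must rule out flat (degenerate) directions of $L$ at the optimum. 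Uniqueness from Assumption~\ref{asm:existance_assumption} together with the indeterminacy-removing constraints of $\boldsymbol{\Theta}_{K,\boldsymbol{m}}$ are what make strict second-order growth plausible, but verifying positive definiteness of the constrained Hessian rigorously — rather than merely uniqueness of the minimizer — is the delicate step of the argument.
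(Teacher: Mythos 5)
Your proposal is correct and follows essentially the same route as the paper: the paper's entire proof is to combine Lemma~\ref{lem:lipshitz} (the square-integrable Lipschitz envelope) and Lemma~\ref{lem:param_converge_prob} (consistency and near-minimization) with Corollary 5.53 of van der Vaart (1998), which is exactly the packaged form of your two ingredients --- the Lipschitz-in-parameter condition yielding the modulus bound $\phi_N(\delta)\lesssim\delta$, and a second-order expansion of $L$ at $\boldsymbol{C}^{*}$ yielding quadratic growth. The curvature condition you rightly flag as the delicate step is also the one hypothesis the paper leaves unverified, since Corollary 5.53 assumes a nonsingular second derivative of $\boldsymbol{C}\mapsto L(\boldsymbol{C})$ at the optimum and the paper offers no argument beyond the uniqueness of the constrained CP minimizer under Assumption~\ref{asm:existance_assumption}.
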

\begin{proof}
This follows directly from combining lemmas~\ref{lem:lipshitz} and \ref{lem:param_converge_prob} along with corollary 5.53 of \cite{vaart_1998}.
\end{proof}

\noindent{\textbf{Generalization Error}}
\par\bigskip

\textbf{Proof of Theorem 2.1}
\begin{proof}
By the triangle inequality, 
\begin{equation}\label{eqn:bound_original}
    \mathbb{E}\left\| U - P_{\Breve{\boldsymbol{\zeta}}_{\boldsymbol{m},N}^{*}}(U)\right\|_{\mathcal{H}}^2\le     \mathbb{E}\left\| U - P_{\boldsymbol{\zeta}_{\boldsymbol{m}}^{*}}(U)\right\|_{\mathcal{H}}^2 +     \mathbb{E}\left\| P_{\boldsymbol{\zeta}_{\boldsymbol{m}}^{*}}(U) - P_{\Breve{\boldsymbol{\zeta}}_{\boldsymbol{m},N}^{*}}(U)\right\|_{\mathcal{H}}^2 
\end{equation}
Let $\boldsymbol{\psi}_K = (\psi_1, ..., \psi_K)^\prime$ and denote $P_{\boldsymbol{\psi}_K}$ the projection operator onto $\text{span}(\boldsymbol{\psi}_K)$. For the first term in the bound \eqref{eqn:bound_original}, we have that 
\begin{equation}\label{eqn:gen_error_master_eqn}
  \begin{aligned}
  \mathbb{E}\left\| U - P_{\boldsymbol{\zeta}_{\boldsymbol{m}}^{*}}(U)\right\|_{\mathcal{H}}^2 &= \mathbb{E}\left\| U - P_{\boldsymbol{\zeta}_{\boldsymbol{m}}^{*}}(U) + P_{\boldsymbol{\psi}_{K}}(U) - P_{\boldsymbol{\psi}_{K}}(U) \right\|_{\mathcal{H}}^2 \\
  & \le   \mathbb{E}\left\| U - P_{\boldsymbol{\psi}_{K}}(U) \right\|_{\mathcal{H}}^2+ \mathbb{E}\left\| P_{\boldsymbol{\psi}_{K}}(U) - P_{\boldsymbol{\zeta}_{\boldsymbol{m}}^{*}}(U) \right\|_{\mathcal{H}}^2
\end{aligned}  
\end{equation}
Clearly, $\mathbb{E}\left\| U - P_{\boldsymbol{\psi}_{K}}(U) \right\|_{\mathcal{H}}^2 = \sum_{k=K+1}^\infty\rho_k$. Considering now the second term in the sum on line two of \eqref{eqn:gen_error_master_eqn}, observe that 
$$
\begin{aligned}
  \mathbb{E}\left\| P_{\boldsymbol{\psi}_{K}}(U) - P_{\boldsymbol{\zeta}_{\boldsymbol{m}}^{*}}(U) \right\|_{\mathcal{H}}^2 &= \mathbb{E}\left\|P_{\mathcal{H}_{m}}\left( P_{\boldsymbol{\psi}_{K}}(U) - P_{\boldsymbol{\zeta}_{\boldsymbol{m}}^{*}}(U)\right) + P_{\mathcal{H}_{m}^{\perp}}\left(P_{\boldsymbol{\psi}_{K}}(U) - P_{\boldsymbol{\zeta}_{\boldsymbol{m}}^{*}}(U)\right) \right\|_{\mathcal{H}}^2 \\
  &= \mathbb{E}\left\|P_{\mathcal{H}_{m}}\left(P_{\boldsymbol{\psi}_{K}}(U)\right) - P_{\boldsymbol{\zeta}_{\boldsymbol{m}}^{*}}(U)\right\|_{\mathcal{H}_{m}}^2 + \mathbb{E}\left\|P_{\mathcal{H}_{m}^{\perp}}\left(P_{\boldsymbol{\psi}_{K}}(U)\right)\right\|_{\mathcal{H}_{m}^{\perp}}^2\\
  &= \text{Term}_1 + \text{Term}_2
\end{aligned}
$$
Clearly, $\text{Term}_2 = O(w_{\bm{\tau}_{\boldsymbol{m}}}(\boldsymbol{m}))$. In regard to $\text{Term}_1$, using the same logic as in the proof of Lemma~\ref{lem:generalization_error}, we have that 
$$
\begin{aligned}
    \text{Term}_1 &= \min_{\boldsymbol{B}}\sum_{k=1}^K\rho_l\left\| \mathcal{A}_l - \sum_{k=1}^K\boldsymbol{B}_{l,k}\bigotimes_{d=1}^D\boldsymbol{c}_{d,k}^{*}\right\|_{\tilde{F}}^2 \\
  &= \min_{\boldsymbol{C}\in\boldsymbol{\Theta}_{K,{\boldsymbol{m}}}}\min_{\boldsymbol{B}}\sum_{l=1}^K \rho_l\left\| \mathcal{A}_l - \sum_{k=1}^K\boldsymbol{B}_{l,k}\bigotimes_{d=1}^D\boldsymbol{c}_{d,k}\right\|_{\tilde{F}}^2 \\
  &=  \min_{\boldsymbol{C}\in\boldsymbol{\Theta}_{K,{\boldsymbol{m}}}}\min_{\boldsymbol{B}} \left\|\mathcal{A}^{(K)} - \sum_{k=1}^K\boldsymbol{B}_{l,k}\bigotimes_{d=1}^D\boldsymbol{c}_{d,k}\right\|_{\tilde{F},C}^2 \\
  &= \left\|\mathcal{A}^{(K)} - \widehat{\mathcal{A}}^{(K)}_{K}\right\|_{\tilde{F},C}^2
\end{aligned}
$$
where the last equality follows from the definition of the canonical polyadic decomposition, and the $O(h(K))$ term from Lemma~\ref{lem:generalization_error} is avoided due to the finite truncation of $\boldsymbol{\psi}_K$. For the second term in the bound \eqref{eqn:bound_original}, using the Lipshitz property of the projection operators along with the rate in established in Theorem~\ref{thm:convergence_rate}, it is easy to see that 
\begin{equation}\label{eqn:operator_convergence_rate}
\begin{aligned}
  \mathbb{E}\left\| P_{\boldsymbol{\zeta}_{\boldsymbol{m}}^{*}}(U) - P_{\Breve{\boldsymbol{\zeta}}_{\boldsymbol{m},N}^{*}}(U)\right\|_{\mathcal{H}} = O_{p}(N^{-1/2})
\end{aligned}
\end{equation}
The desired result follows from plugging the derived forms of $\mathbb{E}\left\| U - P_{\boldsymbol{\psi}_{K}}(U) \right\|_{\mathcal{H}}^2$, $\text{Term}_1$, $\text{Term}_2$ and \eqref{eqn:operator_convergence_rate} into Equation~\ref{eqn:bound_original}.
\end{proof}
\newpage
\noindent{\textbf{Proof of Theorem 3.1}}
\par\bigskip
\begin{proof}
  $$
  \begin{aligned}
    &\sum_{i=1}^{N} \left\| \mathcal{Y}_i - \sum_{k=1}^{K} \boldsymbol{B}_{ik} \bigotimes_{d=1}^{D} \boldsymbol{\Phi}_{d} \boldsymbol{c}_{d,k}  \right\|_{F}^{2} =  \left\| \mathcal{Y} - \sum_{k=1}^{K} \bigotimes_{d=1}^{D} \boldsymbol{\Phi}_{d} \boldsymbol{c}_{d,k} \otimes \boldsymbol{b}_{k} \right\|_{F}^{2} \\
    &= \left\| \mathcal{Y} - \Big[\sum_{k=1}^{K} \bigotimes_{d=1}^{D} \boldsymbol{D}_d\boldsymbol{V}_d^\prime \boldsymbol{c}_{d,k} \otimes \boldsymbol{b}_{k} \Big] \times_{1} \boldsymbol{U}_{1} \times_{2} \boldsymbol{U}_{2} \dots \times_{D} \boldsymbol{U}_{D} \right\|_{F}^2 \\
    &= \left\| \widehat{\mathcal{G}}- \Big[\sum_{k=1}^{K} \bigotimes_{d=1}^{D} \boldsymbol{D}_d\boldsymbol{V}_d^\prime \boldsymbol{c}_{d,k} \otimes \boldsymbol{b}_{k} \Big] \right\|_{F}^2.
  \end{aligned}
  $$
  Here the first equality is from properties of the Frobenius norm, the second comes from properties of $d$-mode multiplication, and the third from invariance of the Frobenius norm to orthogonal transformation. Therefore, solving Equation (9) is equivalent to solving
  \begin{equation}\label{eqn:Common_Basis_Optimizer_Transform}
    \min_{\boldsymbol{B}, \boldsymbol{C}}\left\|  \widehat{\mathcal{G}}- \sum_{k=1}^{K} \bigotimes_{d=1}^{D} \boldsymbol{D}_d\boldsymbol{V}_d^\prime \boldsymbol{C}_d \otimes \boldsymbol{b}_{k} \right\|_{F}^2.
  \end{equation}
  Using the mapping $\boldsymbol{\tilde{C}}_d = \boldsymbol{D}_d \boldsymbol{V}^\prime_d \boldsymbol{C}_d$, Equation~\eqref{eqn:Common_Basis_Optimizer_Transform} can be reparameterized as
  \begin{equation}\label{eqn:Common_Basis_Optimizer_Transform_Tilde}
    \min_{\boldsymbol{B}, \boldsymbol{\tilde{C}}}\left\| \widehat{\mathcal{G}}- \sum_{k=1}^{K} \bigotimes_{d=1}^{D} \boldsymbol{\tilde{C}}_d \otimes \boldsymbol{b}_{k} \right\|_{F}^2,
  \end{equation}
  which is solved by the rank-$K$ CPD of $\widehat{\mathcal{G}}$. Comparing Equations~\eqref{eqn:Common_Basis_Optimizer_Transform} and \eqref{eqn:Common_Basis_Optimizer_Transform_Tilde}, we see that $\widehat{\boldsymbol{B}}=\boldsymbol{B}$ and
  $\boldsymbol{D}_d\boldsymbol{V}_d\boldsymbol{\widehat{C}}_d = \boldsymbol{\tilde{C}}_d$, or equivalently, $\widehat{\boldsymbol{C}}_d=\boldsymbol{V}_{d}\boldsymbol{D}_{d}^{-1}\boldsymbol{\tilde{C}}_d$.
\end{proof}
\noindent{\textbf{Proof of Proposition 3.2}}
\par\bigskip
%There exists a symmetric positive semi-definite matrix ($S^M_{+}$) $\boldsymbol{T}_d:=\boldsymbol{D}_d^{-1}\boldsymbol{V}_d^{\prime}\boldsymbol{R}_d\boldsymbol{V}_d\boldsymbol{D}_d^{-1}$ such that:
%$\sum_{k=1}^K\mathrm{Pen}(\zeta_k)= \sum_{d=1}^D\lambda_d\mathrm{tr}(\boldsymbol{\tilde{C}}^\prime_d\boldsymbol{T}_{d}\boldsymbol{\tilde{C}}_d)$.
\begin{proof}
  Let $\boldsymbol{T}_d:=\boldsymbol{D}_d^{-1}\boldsymbol{V}_d^{\prime}\boldsymbol{R}_d\boldsymbol{V}_d\boldsymbol{D}_d^{-1}\in S^{M}_{+}$, with $\boldsymbol{R}_d(i,j) = \int_{\mathcal{M}_{d}}L_d(\phi_{d,i})L_d(\phi_{d,j})$, then 
  \begin{equation*}
    \begin{aligned}
      &\sum_{k=1}^K\int_{\mathcal{M}_{d}}\sum_{d=1}^D\lambda_d L^2_d(\xi_{k,d}) = \sum_{d=1}^D\lambda_d\sum_{k=1}^K\boldsymbol{c}_{d,k}^\prime \boldsymbol{R}_d\boldsymbol{c}_{d,k} \\
      &= \sum_{d=1}^D\lambda_d\sum_{k=1}^K\boldsymbol{\tilde{c}}_{d,k}^\prime \boldsymbol{D}_d^{-1}\boldsymbol{V}_d^{\prime}\boldsymbol{R}_d\boldsymbol{V}_d\boldsymbol{D}_d^{-1}\boldsymbol{\tilde{c}}_{d,k} = \sum_{d=1}^D\lambda_d\mathrm{tr}(\boldsymbol{\tilde{C}}^\prime_d\boldsymbol{T}_{d}\boldsymbol{\tilde{C}}_d), \\
      %&\text{ where }\boldsymbol{T}_d:=\boldsymbol{D}_d^{-1}\boldsymbol{V}_d^{\prime}\boldsymbol{R}_d\boldsymbol{V}_d\boldsymbol{D}_d^{-1}\in S^{M}_{+}
    \end{aligned}
  \end{equation*}
\end{proof}
\section{\texttt{MARGARITA} Algorithm}\label{sec:algorithm}
\revised{Algorithm~\ref{alg:ss_bcd} provides pseudocode for \texttt{MARGARITA}. A couple comments are in order.}
\begin{itemize}
    \item \revised{For the ADMM subproblem, we adopt the stopping criteria proposed in \cite{boyd2011} based on the primal and dual residuals at the $r^{th}$ iteration, which have the form
    \begin{equation}\label{eqn:primal_dual_residual}
      r_{primal}^{(r)} =  \|\boldsymbol{B}^{(r)} - \boldsymbol{Z}^{(r)\prime}\|_{F}, \qquad r_{dual}^{(r)} = \|\gamma\left(\boldsymbol{Z}^{(r)} - \boldsymbol{Z}^{(r-1)}\right)\|_F.
    \end{equation}}
    \item \revised{As discussed, we can only guarantee convergence to a local minimum, and thus in practice it may be desirable to run Algorithm~\ref{alg:ss_bcd} for multiple random initializations and keep the best solution, e.g. evaluated using the proportion of variance explained criteria discussed in Section 3.5. In simulation results not reported, we did not find much difference in performance for a single random initialization vs. multiple random initializations, though this may become more important as the dimension of the domain $D$ increases.}
    \item \revised{In theory, the same ADMM scheme can be used to solve the special case of $l()=\|\cdot\|_{F}^2$, but this is not necessary in practice as an analytic solution exits. That said, for very large $N$, it may be the desirable to avoid an analytic solution as well due to the requirement of a large matrix inverse, i.e. see discussion in Section~\ref{apx:penalty_strengths}. In such cases, a more scalable solver, e.g. stochastic gradient descent, may be plugged into solve the $\boldsymbol{B}$ sub-problem.}
\end{itemize}
\begin{algorithm}[t]
  \caption{\texttt{MARGARITA}: \underline{MARG}inal-product b\underline{A}sis \underline{R}epresentation w\underline{I}th \underline{T}ensor \underline{A}nalysis}
  \label{alg:ss_bcd}
   \scriptsize 
  \begin{algorithmic}[1]
    %\State \textbf{Input} $\widehat{\mathcal{G}}$,  $\{\boldsymbol{T}_1, ...,\boldsymbol{T}_D\}$, $\{\lambda_{1}, ..., \lambda_{D+1}\}$
    \State \textbf{Input} $\mathcal{Y}$, $\mathcal{X}$, $\{\boldsymbol{\phi}_{m_{1},1},...,\boldsymbol{\phi}_{m_{D},D}\}$, $\{L_1, ...,\
    L_D\}$, $\{\lambda_{1}, ..., \lambda_{D+1}\}$, $K$
    %\State \textbf{Output} $\boldsymbol{\tilde{C}}_1, ..., \boldsymbol{\tilde{C}}_D$, $\boldsymbol{B}$
    \State \textbf{Output} $\boldsymbol{C}_1, ..., \boldsymbol{C}_D$, $\boldsymbol{B}$
    \For{d = 1,...,D}
    \State Compute $\boldsymbol{T}_d$ using Proposition 3.2 and  $\boldsymbol{\Phi}_d = \boldsymbol{U}_d\boldsymbol{D}_d\boldsymbol{V}_d^{\prime}$ using $\boldsymbol{\phi}_d$ and $\mathcal{X}$ 
    %\State Initialize $\boldsymbol{\tilde{C}}_d$ as zero matrices
    \State \revised{Randomly initialize $\boldsymbol{\tilde{C}}_d$}
    \EndFor
    \State \revised{Compute $\widehat{\mathcal{G}} = \mathcal{Y}\times_1\boldsymbol{U}_{1}^\prime\times_2\cdots\times_{D}\boldsymbol{U}_{D}^{\prime}$, randomly initialize $\boldsymbol{B}$ and set $\boldsymbol{A}^{*}$ as zero matrix} 
    \While {\text{change in $\boldsymbol{\tilde{C}}_1, ..., \boldsymbol{\tilde{C}}_D$, $\boldsymbol{B}$ is non-negligible}}
    \For{d = 1,...,D}
    \State \text{Update $\boldsymbol{\tilde{C}}_d$ according to (12), by way of (14)} and re-scale to unit norm
    \EndFor
    \While{$r_{primal}$ $>$ tol$_{primal}$ or $r_{dual}$ $>$ tol$_{dual}$}
    \State \text{Update $\boldsymbol{B}$ according to (16)}
    \State \text{Update $\boldsymbol{Z}$ according to (19)}
    \State \text{Update $\boldsymbol{A}^{*}$ according to (18)}
    \State \text{Update $r_{primal}$, $r_{dual}$ according to~\eqref{eqn:primal_dual_residual}}
    \EndWhile
    \EndWhile
    \For{d = 1,...,D}
    \State Get coefficient matrices using transformation $\boldsymbol{C}_d=\boldsymbol{V}_d\boldsymbol{D}_d^{-1}\boldsymbol{\tilde{C}}_d$
    \EndFor
  \end{algorithmic}
\end{algorithm}

\section{Hyper-Parameter Selection}\label{sec:hyper_param_selection}

\subsection{Rank Selection}
\revised{Data-driven methods for both marginal and global rank selection are important in practice, as they directly determine the approximation power of the resulting marginal product basis system, see Theorem 2.1. In the following,  we provide an elaboration of and justification for the criteria used for the data-driven rank selection and evaluate their performance on simulated data from Section 5.}
\subsubsection{Marginal Rank}\label{ssec:marginal_rank_selection}

\revised{Recall that the proposed marginal rank selection criteria is defined as: $\text{PVM}(\boldsymbol{m}):=\|\mathcal{Y}\bigtimes_{d=1}^D\boldsymbol{U}_d^\prime\|_{F}^2/\|\mathcal{Y}\|_{F}^2$. Denote tensor $\mathcal{U}_i\in\mathbb{R}^{n_{1}\times\cdots\times n_{D}}$ with element-wise definition $\mathcal{U}_i(i_1,...,i_{D}) = U_i(x_{1,i_{1}},...,x_{D,i_{D}})$. Then the observation model in Equation (7) can be written as 
$$
\mathcal{Y}_i = \mathcal{U}_i + \mathcal{E}_i,
$$
where $\mathcal{E}_i$ is a tensor of isotropic normal errors with variance $\sigma^2$. The regression of $\mathcal{U}_i$ onto the tensor product basis $\boldsymbol{\tau}_{\boldsymbol{m}}$
can be defined as
\begin{equation}\label{eqn:tensor_product_optimization}
\begin{aligned}
    \mathcal{A}_i = \underset{\mathcal{A}\in\mathbb{R}^{m_1\times\cdots\times m_{d}}}{\text{argmin}} \left\|\mathcal{U}_i - \mathcal{A}\bigtimes_{d=1}^D\boldsymbol{\Phi}_d\right\|_{F}^2,
\end{aligned}
\end{equation}
where $\mathcal{A}_i$ is the coefficient tensor of $\boldsymbol{\tau}_{\boldsymbol{m}}$. Recalling the notation of the SVD of the basis evaluation matrix $\boldsymbol{\Phi}_{d} := \boldsymbol{U}_d\boldsymbol{D}_d\boldsymbol{V}_{d}^{\prime}$, using properties of invariance of the norm, the least squares objective can be written as 
$$
\begin{aligned}
   &\left\|\mathcal{U}_i - \mathcal{A}\bigtimes_{d=1}^D\boldsymbol{\Phi}_d\right\|_{F}^2  =\left\|\mathcal{U}_i\bigtimes_{d=1}^{D}\boldsymbol{U}_{d}^{\prime} - \left(\mathcal{A}\bigtimes_{d=1}^D\boldsymbol{\Phi}_d\right)\bigtimes_{d=1}^{D}\boldsymbol{U}_{d}^{\prime}\right\|_{F}^2
    =\left\|\mathcal{U}_i\bigtimes_{d=1}^{D}\boldsymbol{U}_{d}^{\prime} - \left(\mathcal{A}\bigtimes_{d=1}^D\boldsymbol{D}_d\boldsymbol{V}_{d}^\prime\right)\right\|_{F}^2\\
    &= \left\|\boldsymbol{u}^{*}_{i,(d)} - (\boldsymbol{D}_{D}\boldsymbol{V}_{D}^{\prime}\otimes\cdots\otimes\boldsymbol{D}_{d+1}\boldsymbol{V}_{d+1}^{\prime}\otimes\boldsymbol{D}_{d-1}\boldsymbol{V}_{d-1}^{\prime}\otimes\cdots\otimes\boldsymbol{D}_1\boldsymbol{V}_{1}^{\prime})^{\prime}\otimes\boldsymbol{D}_{d}\boldsymbol{V}_{d}^{\prime}\text{vec}(\mathcal{A}_{(d)})\right\|_{F}^2
\end{aligned}
$$
where $\boldsymbol{u}^{*}_{i,(d)}$ is shorthand for the vectorization of the $d$-mode unfolding of tensor, i.e. 
$\boldsymbol{u}^{*}_{i,(d)} := \text{vec}\left([\mathcal{U}_i\bigtimes_{j=1}^{D}\boldsymbol{U}_{j}^{\prime}]_{(d)}\right)$. By the properties of the Kronecker product, the design matrix 
$(\boldsymbol{D}_{D}\boldsymbol{V}_{D}^{\prime}\otimes\cdots\otimes\boldsymbol{D}_{d+1}\boldsymbol{V}_{d+1}^{\prime}\otimes\boldsymbol{D}_{d-1}\boldsymbol{V}_{d-1}^{\prime}\otimes\cdots\otimes\boldsymbol{D}_1\boldsymbol{V}_{1}^{\prime})^{\prime}\otimes\boldsymbol{D}_{d}\boldsymbol{V}_{d}^{\prime}\in\mathbb{R}^{\prod_{d=1}^Dm_{d}\times \prod_{d=1}^Dm_{d}}$ is invertible, and so we have the exact solution:
$$
\text{vec}(\mathcal{A}_{i,(d)}) = ((\boldsymbol{D}_{D}\boldsymbol{V}_{D}^{\prime}\otimes\cdots\otimes\boldsymbol{D}_{d+1}\boldsymbol{V}_{d+1}^{\prime}\otimes\boldsymbol{D}_{d-1}\boldsymbol{V}_{d-1}^{\prime}\otimes\cdots\otimes\boldsymbol{D}_1\boldsymbol{V}_{1}^{\prime})^{\prime}\otimes\boldsymbol{D}_{d}\boldsymbol{V}_{d}^{\prime})^{-1}\boldsymbol{u}^{*}_{i,(d)},
$$
from which it follows that 
$$
 \left\|\left(\mathcal{U}_i- \mathcal{A}_i\bigtimes_{d=1}^D\boldsymbol{\Phi}_d\right)\bigtimes_{d=1}^D\boldsymbol{U}_d^{\prime} \right\|_{F}^2 = 0.
$$
Defining the tensor 
$\mathcal{U}_i^{\perp} := \mathcal{U}_i - \mathcal{A}_i\bigtimes_{d=1}^D\boldsymbol{\Phi}_d$,
the model for the $i$'th subject can be equivalently written as
$$
\mathcal{Y}_i = \mathcal{A}_i\bigtimes_{d=1}^D\boldsymbol{\Phi}_d + \mathcal{U}_i^{\perp} + \mathcal{E}_i. 
$$   
Now, the numerator of $\text{PVM}(\boldsymbol{m})$ can be written as  
$$
\begin{aligned}
  \left\|\mathcal{Y}\bigtimes_{d=1}^D\boldsymbol{U}_d^\prime\right\|_{F}^2 &= \sum_{i=1}^N\left\|\mathcal{Y}_i\left(\bigtimes_{d=1}^D\boldsymbol{U}_d\boldsymbol{U}_d^\prime\right)\right\|_{F}^2 \\
  & =\sum_{i=1}^N\left\|\left(\mathcal{A}_i\bigtimes_{d=1}^D\boldsymbol{\Phi}_d + \mathcal{U}_i^{\perp} + \mathcal{E}_i\right)\bigtimes_{d=1}^D\boldsymbol{U}_d\boldsymbol{U}_d^\prime\right\|_{F}^2\\
    & =\sum_{i=1}^N\left\|\mathcal{A}_i\bigtimes_{d=1}^D\boldsymbol{\Phi}_d + \mathcal{E}_i\bigtimes_{d=1}^D\boldsymbol{U}_d\boldsymbol{U}_d^\prime\right\|_{F}^2.\\
\end{aligned}
$$
Putting this altogether, since $\mathcal{U}_i \perp \mathcal{E}_i$ and independent for all $i$, asymptotically, we have that 
\begin{equation}\label{eqn:marg_rank_selection_criteria}
\begin{aligned}
    \text{PVM}(\boldsymbol{m}) &= \frac{N^{-1}\|\mathcal{Y}\bigtimes_{d=1}^D\boldsymbol{U}_d^\prime\|_{F}^2}{N^{-1}\|\mathcal{Y}\|_{F}^2} \\ 
    &= \frac{N^{-1}\sum_{i=1}^N\left\|\mathcal{A}_i\bigtimes_{d=1}^D\boldsymbol{\Phi}_d + \mathcal{E}_i\bigtimes_{d=1}^D\boldsymbol{U}_d\boldsymbol{U}_d^\prime\right\|_{F}^2}{N^{-1}\sum_{i=1}^N\left\|\mathcal{U}_i + \mathcal{E}_i\right\|_{F}^2} \\
    & \asymp  \frac{N^{-1}\sum_{i=1}^N\left\|\mathcal{A}_i\bigtimes_{d=1}^D\boldsymbol{\Phi}_d\right\|_{F}^2 + N^{-1}\sum_{i=1}^N\left\|\mathcal{E}_i\bigtimes_{d=1}^D\boldsymbol{U}_d\boldsymbol{U}_d^\prime\right\|_{F}^2}{N^{-1}\sum_{i=1}^N\left\|\mathcal{U}_i\right\|_{F}^2+N^{-1}\sum_{i=1}^N\left\|\mathcal{E}_i\right\|_{F}^2} \\
    & \asymp  \frac{N^{-1}\sum_{i=1}^N\left\|\mathcal{A}_i\bigtimes_{d=1}^D\boldsymbol{\Phi}_d\right\|_{F}^2 + \sigma^2\prod_{d=1}^{D}n_{d}}{N^{-1}\sum_{i=1}^N\left\|\mathcal{U}_i\right\|_{F}^2+\sigma^2\prod_{d=1}^{D}n_{d}} \\
\end{aligned}  
\end{equation}
where the final line is due to 
$$
N^{-1}\sum_{i=1}^N\left\|\mathcal{E}_i\right\|_{F}^2 = N^{-1}\sum_{i=1}^N\left\|\mathcal{E}_i\bigtimes_{d=1}^D\boldsymbol{U}_d\boldsymbol{U}_d^\prime\right\|_{F}^2 \asymp \sigma^2\prod_{d=1}^{D}n_{d},
$$
which results from the permutation invaraince of the trace and the strong law of large numbers. }
\par 
\revised{Now, assuming a sequence of equipartitioned grids $\mathcal{X}$,  
$[\prod_{d=1}^{D}n_{d}]^{-1}\left\|\mathcal{U}_i\right\|_{F}^2$ can considered proportional to  a Riemann sum approximation to the integral $\left\|\mathcal{U}_i\right\|_{\mathcal{H}}^2$. Hence, under a fine grid regime, using the strong law of large numbers, recalling the notation from Section 2, we have the approximations
$$
\begin{aligned}
N^{-1}\sum_{i=1}^N[\prod_{d=1}^{D}n_{d}]^{-1}\left\|\mathcal{A}_i\bigtimes_{d=1}^D\boldsymbol{\Phi}_d\right\|_{F}^2 \approx \mathbb{E}[\left\|P_{\mathcal{H}_{\boldsymbol{m}}}(U)\right\|_{\mathcal{H}}^2] \\     
N^{-1}\sum_{i=1}^N[\prod_{d=1}^{D}n_{d}]^{-1}\left\|\mathcal{U}_i\right\|_{F}^2 \approx \mathbb{E}[\left\|U\right\|_{\mathcal{H}}^2].
\end{aligned}
$$
For ease of presentation, from here on we take $m_1 = ... = m_{D}=m$ and $n_1 = ... n_{D}=n$. Then in the fine grid limit, we have the approximation
$$
\text{PVM}(m) \asymp \frac{\mathbb{E}[\left\|P_{\mathcal{H}_{\boldsymbol{m}}}(U)\right\|_{\mathcal{H}}^2] + \sigma^2}{\mathbb{E}[\left\|U\right\|_{\mathcal{H}}^2]+\sigma^2},
$$
which is a monotonically increasing function of $m$, with $\text{PVM}(m) \rightarrow 1$, and measures the degree of irreducible bias incurred by the finite trunctional of the marginal ranks. By analogy, on the discretely observed grid, $\text{PVM}(m)$ is a monotonic function of $m\le n$, with $\text{PVM}(n)=1$ which is an approximation to the proportion of irreducible bias from the finite truncation of ranks under the discrete projection \eqref{eqn:tensor_product_optimization}.}
\par 
\revised{In practice, in the absence of strong a-priori knowledge, we suggest dividing each marginal domain into $m=1,...,M$ equispaced candidate ranks for each $d$ and compute the $\text{PVM}(m)$ for each of the $M$ candidate marginal ranks $\boldsymbol{m}:=(\text{floor}(\frac{m}{M}n_1), ...,\text{floor}(\frac{m}{M}n_{D}))$.
Figure~\ref{fig:marg_rank_selection} shows $\text{PVM}(m)$ as a function of $m$ under several of the simulation set-ups considered in Section 5. Cubic b-splines were used as the marginal basis of the fits for all the experiments. In order to study the effects of noise in both set-ups, we added idd Gaussian noise to the simulated fields in Section 5.2 with $\sigma^2=0.1$ and $\sigma^2=1$ for both relatively high and low SNR:=$\frac{\sum_{k}\rho_k}{\sigma^2}$. We make the following observations:}
\begin{enumerate}
    \item \revised{We see that in the high SNR regimes, a proportion of variance explained or elbow criteria will both work. Alternatively, in the low signal to noise ratio regimes, the proportion of variance explained will select a larger model than necessary. Though ``flatter'' than that of the high SNR, the elbow of $\text{PVM}(m)$ can still be consistently identified (even visually) in the low SNR simulations and thus we suggest using an elbow-like criteria in practice, especially when degrees of freedom need to be conserved.}
    \item \revised{For the simulation setup from Section 5.1 (top two plots), the true marginal basis are the Fourier functions while the marginal basis in fitting are the cubic b-splines. For both SNR's, we see that a marginal b-spline basis with rank $m=15$ is consistently indentified with an elbow in the PVM for both noise levels. This is echoed in Table \ref{tab:mpf_vs_tpb}, where we see that the performance jump from $m_d=8$ to $m_d=15$ is substantial, while the over-parameterized regime $m_d=25$ performs quite similarly to $m_d=15$.}
    \item \revised{For the simulation setup from Section 5.2 (bottom two plots), the true marginal ranks $m_1\neq m_2$, hence, in theory, one direction will be ``smoother'' than the other. We again see a clear elbow at max($m_1, m_2$)=10. We also notice an inflection point at  max($m_1, m_2$)=8, which may be a useful way of identifying when differing ranks are required in different directions, though the issue of which marginal domain requires which rank would need subsequent exploration.}
\end{enumerate}
\revised{We conclude this section by reiterating two significant advantages of the PVM criteria: i) it can be precomputed using only the SVD of a set of relatively small matrices ii) it is independent of the global rank $K$.}
\begin{figure}
  \centering
  \begin{tabular}{@{}c@{}}
    \includegraphics[scale=0.4]{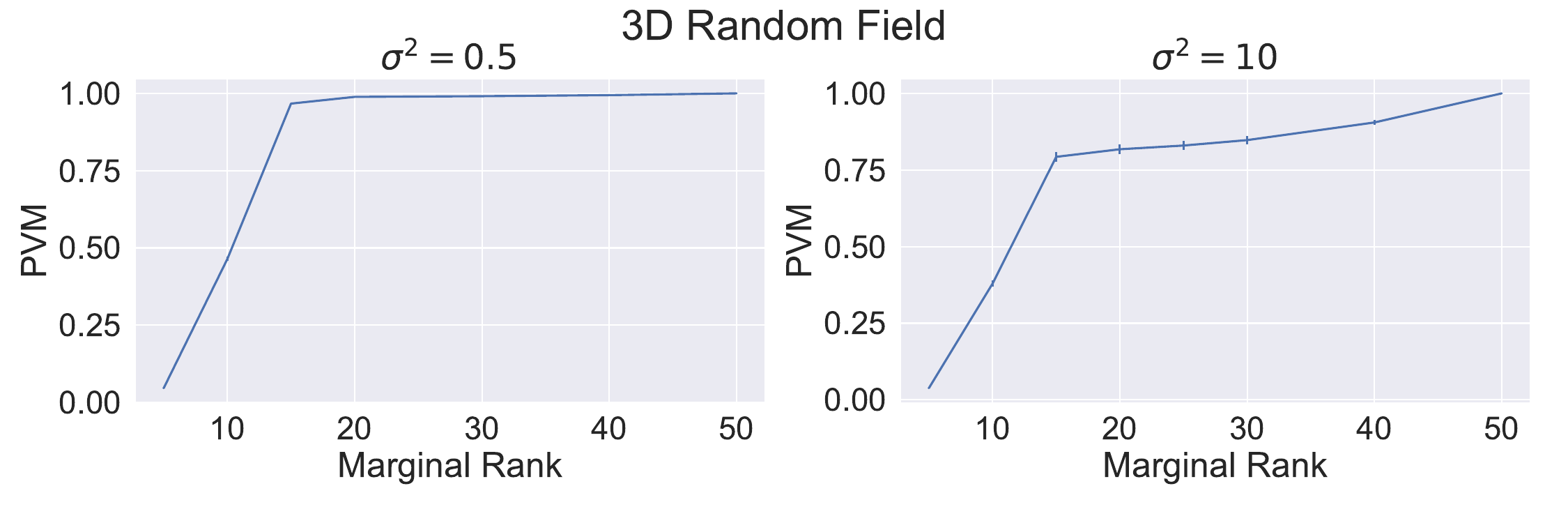} \\
  \end{tabular}
  \vspace{\floatsep}
  \begin{tabular}{@{}c@{}}
    \includegraphics[scale=0.4]{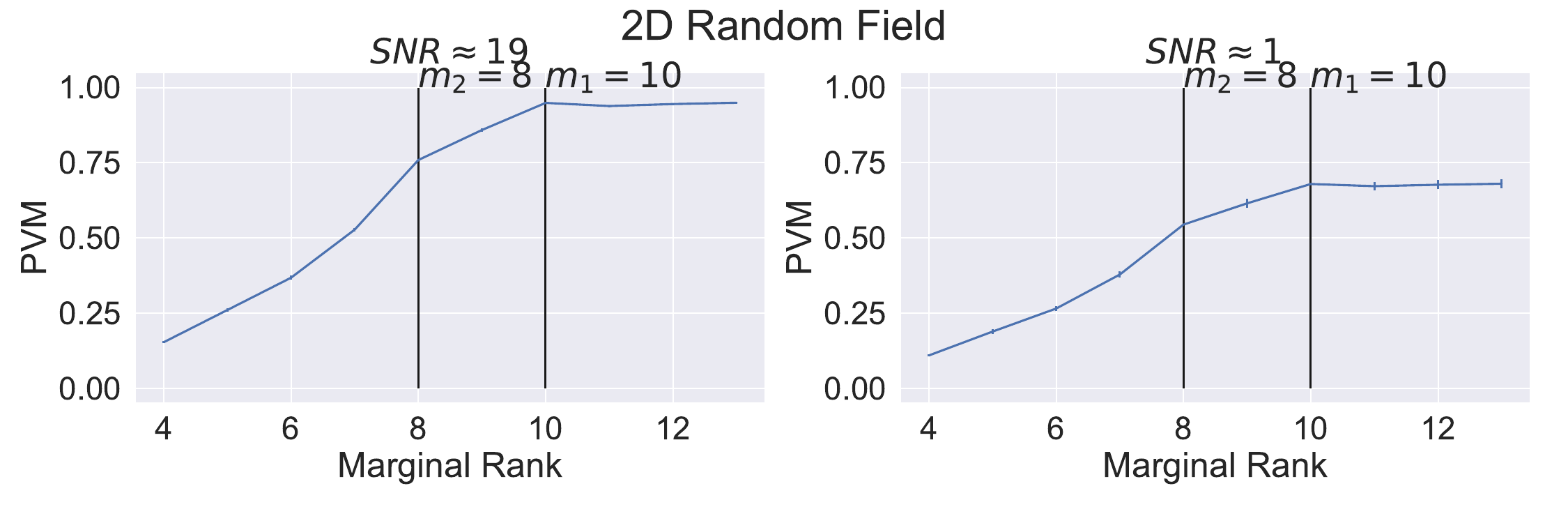} \\
  \end{tabular}
  \caption{\revised{Monte-Carlo average PVM as a function of increasing $m$. The top two plots show the results from the 3D simulation set-up detailed in Section 5.1 with $n_d=50$, $N=50$ and $K_t=20$. The bottom two plots show the results from the 2D simulation set-up from Section 5.2 with $N=100$.} }\label{fig:marg_rank_selection}
\end{figure}
\subsubsection{Global Rank}
\revised{As discussed in the main text, a standard criteria for global rank selection is the proportion of variance explained, which in this context is given by 
$$
PVG(K) = \|\mathcal{G} -\sum_{k=1}^K\boldsymbol{b}_k\otimes(\bigotimes_{d=1}^D\boldsymbol{\tilde{c}_{d,k}})\|_{F}^2/\|\mathcal{G}\|_{F}^2.
$$
Computing the PVG requires multiple runs of \texttt{MARGARITA}, too many of which may want to be avoided in the super high-dimensional case.  To avoid too many runs, one can start with some user-defined maximum rank $K_{max}$ and then step backward in increments until the smallest model which still meets the desired PVG(K) is obtained. In simulations, we find that the PVG criteria is robust to changes in the SNR, see Figure~\ref{fig:global_rank_selection}.  Computational resources permitting, we suggest the user to be generous with the PVG criteria, e.g. setting it very close to 1, to ``soak up'' most of the variance and then regularize via the cross-validation procedure discussed in Section~\ref{apx:penalty_strengths}. This is especially true if subsequent FPCA is desired and was the tactic for the two-stage estimation of the eigenfunctions in Section 5.2, which produces strong estimates for at least the first three eigenfunctions, see Figure~\ref{fig:eigenfunction_estimates}. If desired, model selection in this case can be subsequently performed by selecting the rank on the second stage eigen-basis. 
%Comparing the performance of the models trained with these ranks reported in Section~\ref{sec:simulation_studies} and the automated hyper-parameters selection results from Section~\ref{sec:addition_sims}, we conclude that this criteria provides good performance for both in-sample representation and generalization tasks.
}
\par 
\revised{An interesting potential alternative path for global model selection is to encode the rank selection via a penalty operator and perform rank selection via selecting the penalty parameter $\lambda_{D+1}$. Specifically, we speculate that this may be accomplished by taking $l(\boldsymbol{B}) = \sum_{k=1}^K\|\boldsymbol{B}(:,k)\|_{2}$, the group lasso.  Similar $l_1$-penalty based optimization strategies have been proposed for automatic approximate rank determination in the tensor decomposition literature \citep{wang2015_tdc}. \texttt{MARGARITA} can seamlessly integrate this penalty by simply specifying the corresponding proximal operator of the group lasso. That said, our approach to penalty parameter selection (discussed in Section~\ref{apx:penalty_strengths}) would need to be augmented for this purpose. As the rank of the model is not constant, for this case, it is desirable to adequately penalize a model complexity term in addition to a model fit term. As our method does not invoke a likelihood framework, there is no straightforward application of a standard information criteria for model selection, e.g. AIC or BIC. Therefore, developing a model selection criteria for our case is an important avenue for future research, but is beyond the scope of the current work. }
\begin{figure}
  \centering
  \begin{tabular}{@{}c@{}}
    \includegraphics[scale=0.4]{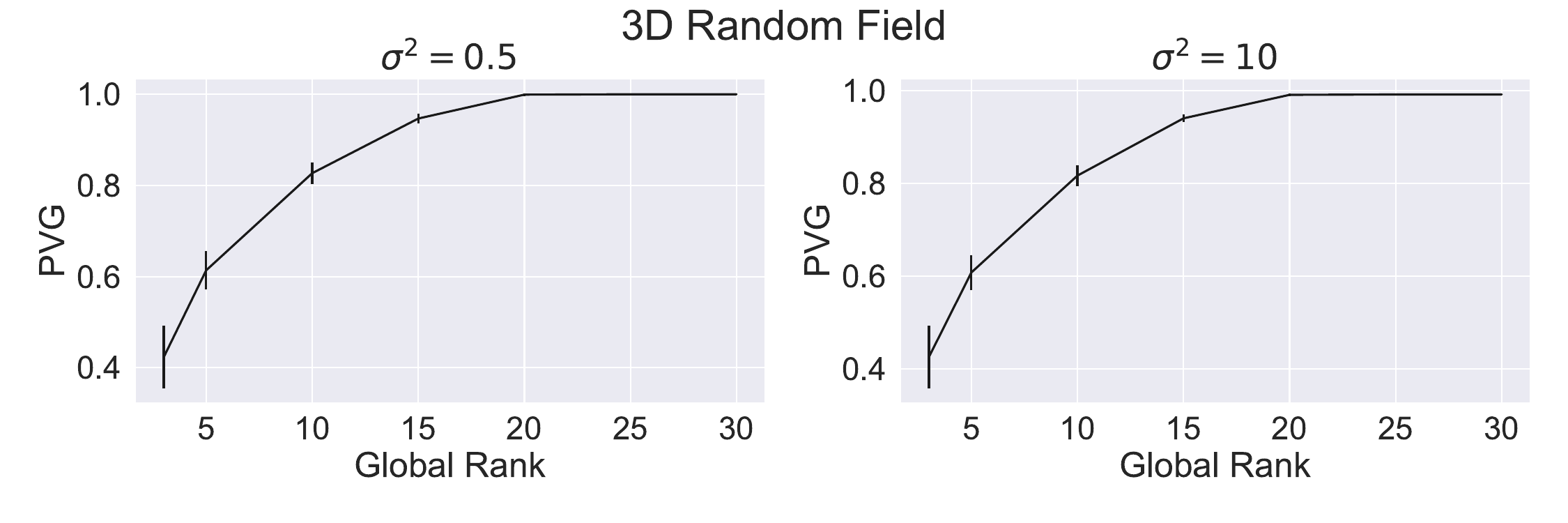} \\
  \end{tabular}
  \vspace{\floatsep}
  \begin{tabular}{@{}c@{}}
    \includegraphics[scale=0.4]{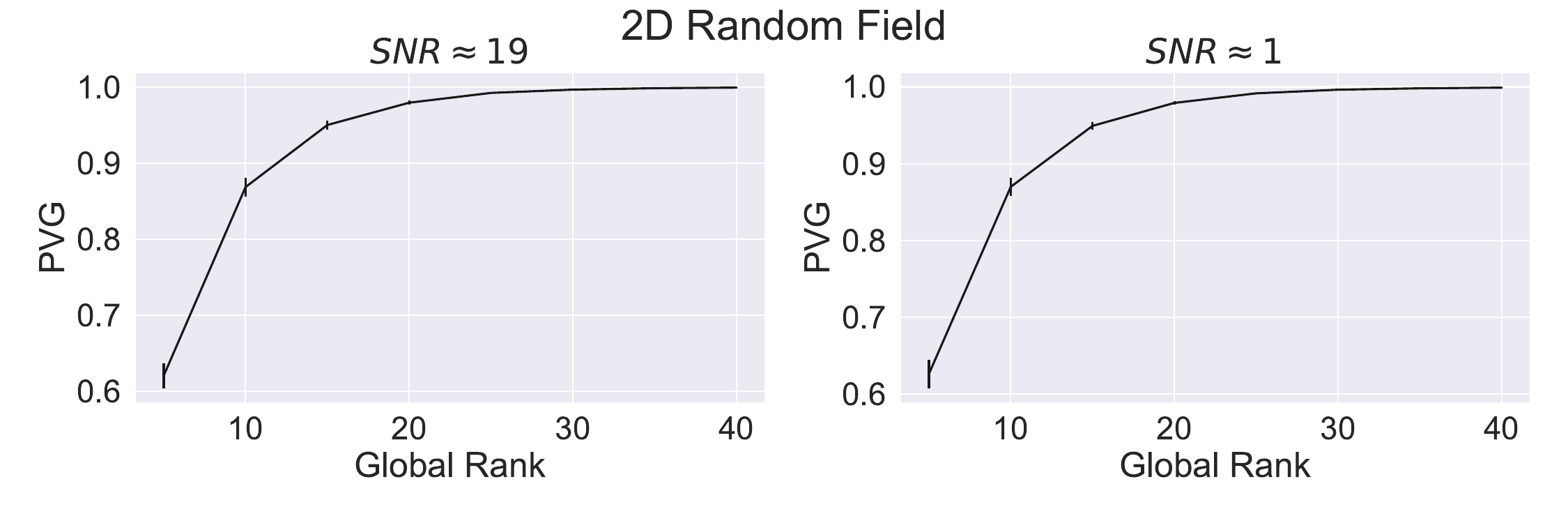}\\
  \end{tabular}
  \caption{\revised{Monte-Carlo average PVG as a function of increasing $K_{fit}$. The top two plots show the results from the 3D simulation set-up detailed in Section 5.1 with $n_d=50, N=50$ and $K^{t}=20$. The bottom two plots show the results from the 2D simulation set-up from Section 5.2 with $N=100$.} }\label{fig:global_rank_selection}
\end{figure}

\subsection{Penalty Strength Selection via Cross Validation}\label{apx:penalty_strengths}
\revised{Multidimensional functional data analysis is notoriously high-dimensional, and therefore the consideration of the statistical validatity of a selection procedure must always be balanced with considerations of the computational efficiency. Although \texttt{MARGARITA} allows all $D+1$ penalty parameters to be specified independently, in the absence of problem specific a-priori information, we suggest to let $\lambda_d = \lambda_f$ for $d=1,...,D$ and then select $(\lambda_f, \lambda_{D+1})^{\prime}$ by minimizing the $T$-fold cross-validation error over a 2-dimensional grid of potential values. To make things explicit, we provide pseudo-code for the automated smoothing parameter selection, referred to as  \texttt{CV-MARGARITA}, in Algorithm~\ref{alg:cv_margarita}. Notice that using the tensor unfolding operator \citep{kolda2009}, the high-dimensional regression \eqref{eqn:tensor_regression} can be re-written as
\begin{equation}\label{eqn:tensor_regression_vectorized}
 \begin{aligned}
     \widehat{\boldsymbol{B}}^{(t)} &:= \min_{\boldsymbol{B}^{(t)}} \|\boldsymbol{Y}_{(D+1)}^{(t)} - \boldsymbol{B}^{(t)}\boldsymbol{Z}^{(-t)}_{(D+1)} \|_{F}^2 + \lambda_{D+1}l(\boldsymbol{B}^{(t)}) \\
     & = \min_{\boldsymbol{B}^{(t)}} \|\boldsymbol{Y}_{(D+1)}^{(t)\prime} - \boldsymbol{Z}^{(-t)\prime}_{(D+1)}\boldsymbol{B}^{(t)\prime} \|_{F}^2 + \lambda_{D+1}l(\boldsymbol{B}^{(t)})
 \end{aligned}   
\end{equation}}
\revised{Now, \eqref{eqn:tensor_regression_vectorized} has the form of Equation (13) from the main text, and hence can be solved (globally) using the same ADMM scheme discussed in Section 3.4 for general $l(\cdot)$, or in closed from for the special case of $l(\cdot) = \|\cdot\|_{F}^2$.}
\begin{algorithm}[t]
  \caption{\texttt{CV-MARGARITA}}
  \label{alg:cv_margarita}
   \scriptsize 
  \begin{algorithmic}[1]
    \State \textbf{Input} $\mathcal{Y}$, $\mathcal{X}$, $\{\boldsymbol{\phi}_{m_{1},1},...,\boldsymbol{\phi}_{m_{D},D}\}$, $\{L_1, ...,\
    L_D\}$, candidate set $\{(\lambda_{f}, \lambda_{D+1}\})$, global rank $K$, number of folds $T$
    \State \textbf{Output} Selected smoothing parameters $(\lambda_f^{*}, \lambda_{D+1}^{*})$
    \State  Partition the observed data tensor into $T$-folds $\{\mathcal{Y}^{(1)}, ..., \mathcal{Y}^{(T)}\}$, where $\mathcal{Y}^{(t)}\in\mathbb{R}^{n_{1}\times\cdots\times n_{D}\times N_{t}}$. Denote $\mathcal{Y}^{(-t)}$ the tensor with all but the $\mathcal{Y}^{(t)}$ fold.
    \State Compute $\boldsymbol{T}_d$ using Proposition 3.2 and  $\boldsymbol{\Phi}_d = \boldsymbol{U}_d\boldsymbol{D}_d\boldsymbol{V}_d^{\prime}$ using $\boldsymbol{\phi}_d$ and $\mathcal{X}$ 
    \For{each candidate parameter $(\lambda_f, \lambda_{D+1})$}
    \For{t = 1, ..., T}
    \State Apply Algorithm~\ref{alg:ss_bcd} with $\mathcal{Y}^{(-t)}$ penalty parameters $(\lambda_1 = \lambda_f, ..., \lambda_{D} = \lambda_{f}, \lambda_{D+1})$ to obtain $\boldsymbol{C}_1^{(-t)}, ..., \boldsymbol{C}_D^{(-t)}$, $\boldsymbol{B}^{(-t)}$
    \State Form tensor $\mathcal{Z}^{(-t)}\in\mathbb{R}^{ n_{1}\times\cdots\times n_{D}\times K}$, with element-wise definition $$\mathcal{Z}^{(-t)}_{k,i_{1},...,i_{D}} = \prod_{d=1}^D\xi_{k,d}^{(-t)}(x_{d,i_{d}}),  \quad \xi_{k,d} = \sum_{j=1}^{m_{d}}\boldsymbol{C}^{(-t)}_{d}(k,j)\phi_{d,j}$$.
    \State Compute the tensor regression on the held-out data:
    \begin{equation}\label{eqn:tensor_regression}
            \widehat{\boldsymbol{B}}^{(t)}:= \min_{\boldsymbol{B}^{(t)}} \left\|\mathcal{Y}^{(t)} - \mathcal{Z}^{(-t)} \times_{D+1} \boldsymbol{B}^{(t)}\right\|_{F}^2 + \lambda_{D+1}l(\boldsymbol{B}^{(t)}),
    \end{equation}
    \State  Form the cross-validation error
    $$
    \text{CV}^{(t)}(\lambda_f, \lambda_{D+1}) =\left\|\mathcal{Y}^{(t)} - \mathcal{Z}^{(-t)} \times_{D+1} \boldsymbol{B}^{(t)}\right\|_{F}^2,
    $$
    \EndFor
    \EndFor
    \State  Set $(\lambda_f^{*}, \lambda_{D+1}^{*})$ which minimizes $\sum_{t=1}^T\text{CV}^{(t)}$
  \end{algorithmic}
\end{algorithm}
\par 
\revised{We now provide some justification for our approach to smoothing parameter selection by considering some of the tactics used in related work.
Several works \citep{huang2009,allen2013a,allen2019} propose the integration of penalty selection into the coordinate-wise updates of the estimation algorithm, using the so-called nested generalized cross validation. In the following, we derive an analogous procedure for coordinate-wise updating for all factor matrices and then discuss why we recommend our \texttt{CV-MARGARITA} instead.}
\par 
\revised{We start with the $\tilde{\boldsymbol{C}}_d$ parameters. Using properties of the $\text{vec}$ operator and the permutation invariance of the trace, it is easy to show that the objective function for the optimization problem in Equation (12) can be vectorized into a high dimensional ridge-type regression as follows:
\begin{equation}\label{eqn:equiv_ridge}
\begin{aligned}
  &\|\text{vec}(\boldsymbol{G}_{(d)}) -  \text{vec}(\boldsymbol{W}_d^{(r)}\boldsymbol{\tilde{C}}_d^{\prime})\|_{F}^2 + \lambda_d\mathrm{tr}(\boldsymbol{\tilde{C}}_d\boldsymbol{T}_{d}\boldsymbol{\tilde{C}}^{\prime}_d) \\
  &= \|\text{vec}(\boldsymbol{G}_{(d)}) -  (\boldsymbol{I}_{m_{d}}^\prime\otimes\boldsymbol{W}_d^{(r)})\text{vec}(\boldsymbol{\tilde{C}}_d^{\prime})\|_{F}^2 + \lambda_d\mathrm{tr}((\boldsymbol{T}_{d}^{1/2}\boldsymbol{\tilde{C}}_d^{\prime})^{\prime}\boldsymbol{T}_{d}\boldsymbol{\tilde{C}}^{\prime}_d) \\
   & = \|\text{vec}(\boldsymbol{G}_{(d)}) -  (\boldsymbol{I}_{m_{d}}^\prime\otimes\boldsymbol{W}_d^{(r)})\text{vec}(\boldsymbol{\tilde{C}}_d^{\prime})\|_{F}^2 + \lambda_d\text{vec}(\boldsymbol{\tilde{C}}_d^{\prime})^{\prime}(\boldsymbol{I}_{K}\otimes \boldsymbol{T}_{d})\text{vec}(\boldsymbol{\tilde{C}}_d^{\prime}).
\end{aligned}
\end{equation}}
\revised{The corresponding ``hat'' matrix is given by
$$
\boldsymbol{H}_d(\lambda) = (\boldsymbol{I}_{m_{d}}^\prime\otimes\boldsymbol{W}_d)\left[(\boldsymbol{I}_{m_{d}}^\prime\otimes\boldsymbol{W}_d)^{\prime}(\boldsymbol{I}_{m_{d}}^\prime\otimes\boldsymbol{W}_d) + \lambda_d(\boldsymbol{I}_{K}\otimes \boldsymbol{T}_{d})\right]^{-1}(\boldsymbol{I}_{m_{d}}^\prime\otimes\boldsymbol{W}_d^{\prime})
$$
In order to compute a generalized cross validation selection criteria, we need to obtain a measure of the degrees of freedom of the model. In the ridge regression set-up, this requires the computation of the trace, given by
\begin{equation}\label{eqn:trace}
    \begin{aligned}
  \text{tr}(\boldsymbol{H}_d(\lambda)) = \text{tr}((\boldsymbol{I}_{m_{d}}\otimes \boldsymbol{W}_d^{\prime}\boldsymbol{W}_{d})\left[(\boldsymbol{I}_{m_{d}}\otimes\boldsymbol{W}_d^{\prime}\boldsymbol{W}_d) + \lambda_d(\boldsymbol{I}_{K}\otimes \boldsymbol{T}_{d})\right]^{-1}).
\end{aligned}
\end{equation}}
\par 
\revised{For the $\boldsymbol{B}$ parameters, under the special case of $l(\cdot) = \|\cdot\|_2^2$, we have a similar structure to \eqref{eqn:equiv_ridge}, with $\boldsymbol{T}_{D+1} := \boldsymbol{I}_{N}$, and hence the previous analysis holds. For the case when $l(\cdot) = \|\cdot\|_{1}$, Equation (13) can be unfolded into a high-dimensional Lasso regression problem, and hence the degrees of freedom can be quantified using, e.g. the measure from \cite{tibshirani2012}. }
\par 
\revised{All this said, integrating this nested procedure to solve the hyper-parameter selection problem is non-trivial. First, notice that computation of the trace requires the inversion of a $m_{d}K \times m_{d}K$ matrix, for each candidate $\lambda_{d}$, at each iteration of \texttt{MARGARITA}. This may not be a problem for relatively small models, but is undesirable for large models. Note that the approaches in \citep{allen2013a,huang2009,allen2019} are deflationary approaches which perform a series of rank-$1$ approximations, and thus do not encounter this issue. Furthermore, care must be taken to avoid convergence issues, as the convergence of the sub-problems in \texttt{MARGARITA}, which is currently guaranteed as discussed in Section 3.4 of the main text, may no longer hold. This potential issue is noted in \cite{allen2019}, though no rigorous proposal is made in order to guard against it. Integrating a nested procedure may be an interesting avenue for future work, but the potential convergence problems as well as the previously highlighted potential computational issues related to the matrix inversion for the large $m_d,K$ case will need to be rigorously handled, which is well beyond the scope of the current work.}

\section{Brief Overview of Competing Methods}\label{sec:competing_methods}
The so-called \textit{sandwich smoother}, introduced by \cite{xiao2013}, is a method for estimating the coefficients of a tensor product approximation to an unknown deterministic function from noisy observations on a grid. The main contribution is in a clever formulation of the penalty term, which allows for the fast computation of the GCV statistic and hence a computationally efficient technique for selecting the roughness penalty strength. For more information, see the aforementioned paper or the \texttt{hero} package in R \citep{hero2020}.
\par 
We give a brief overview of the FCP-TPA algorithm \citep{allen2013a}, which is essentially a $D$-dimensional extension of the 2-dimensional regularization scheme from \cite{huang2009}. Using our notation, the FCP-TPA estimates the $k$th \textit{MPB basis evaluation vectors} $\boldsymbol{\Xi}_{d,k}$ and associated coefficient vector $\boldsymbol{b}_{k}$ by solving a series of $K$ rank-one penalized decompositions of the residual tensor. That is, at the $k$th iteration, FCP-TPA solves problem 
\begin{equation}\label{eqn:FPC_TPA_objective}
    \min_{\boldsymbol{\Xi}_{1,k},...,\boldsymbol{\Xi}_{D,K},\boldsymbol{b}_{k}}\Big\|\mathcal{Y}_{resid} - \bigotimes_{d=1}^D\boldsymbol{\Xi}_{d,k}\otimes\boldsymbol{b}_{k}\Big\|_{F}^2 - \prod_{d=1}^D\Big\|\boldsymbol{\Xi}_{d,k}\Big\|_2^2 + \prod_{d=1}^D\boldsymbol{\Xi}_{d,k}^\prime\boldsymbol{P}_d^{-1}\boldsymbol{\Xi}_{d,k}
\end{equation}
where $\mathcal{Y}_{resid} = \mathcal{Y} - \sum_{j=1}^{k-1}\bigotimes \boldsymbol{\Xi}_{d,j}\otimes \boldsymbol{b}_{j}$ and $\boldsymbol{P}_d\in\mathbb{R}^{n_d\times n_d}$ is a smoothing matrix, e.g. derived using squared second order differences. The solution to \eqref{eqn:FPC_TPA_objective} is approximated using a series of rank-1 approximations, each of which are solved using tensor power iterations which are shown to converge to a stationary point. 
\par 
Note that FCP-TPA does not directly construct a continuous representation but rather the discrete evaluations of the optimal marginal product functions on the observed marginal grid, i.e. the $\boldsymbol{\Xi}_d$'s. In order to obtain a continuous representation from the output of FCP-TPA, a ``decompose-then-represent'' approach is used in which the marginal basis functions are estimated from the basis expansion of the $\boldsymbol{\Xi}_d$'s. 

\section{Additional Simulation Studies and Details}\label{sec:addition_sims}
\revised{All simulations were performed using R/4.0.2 and Python/3.8.16 on a Linux machine equipped with a 2.4 GHz Intel Xeon CPU E5-2695 and 24GB of RAM.}
\subsection{Additional Comparisons to TPB and FCP-TPA}
\revised{
%A fair comparison between the TPB and \texttt{MARGARITA} should be based on enforcing (roughly) equivalently sized parameter spaces, i.e. total number of degrees of freedom (DF), which are $\prod_{d=1}^D m_d$ and $K_{\text{fit}}\sum_{d=1}^D m_d$ for the TPB and \texttt{MARGARITA}, respectively. Hence, we consider a tensor product of marginal cubic B-splines with marginal rank set equal to the smallest integer $m_d$ such that $\prod_{d=1}^D m_d \ge K_{\text{fit}}\sum_{d=1}^D m_d$. 
Figure \ref{fig:dof_mise_comparison} and Table \ref{tab:mpf_vs_tpb} display comparisons of the performance between \texttt{MARGARITA} and the tensor product basis (TPB) estimated by the sandwhich smoother for all simulation settings considered. We note substantially better performance for \texttt{MARGARITA} over TPB for comparable degrees of freedom, given by $\prod_{d=1}^D m_d$ and $K_{\text{fit}}\sum_{d=1}^D m_d$ for the TPB and \texttt{MARGARITA}, respectively.}

\begin{figure}
  \centering
  \includegraphics[scale=0.4]{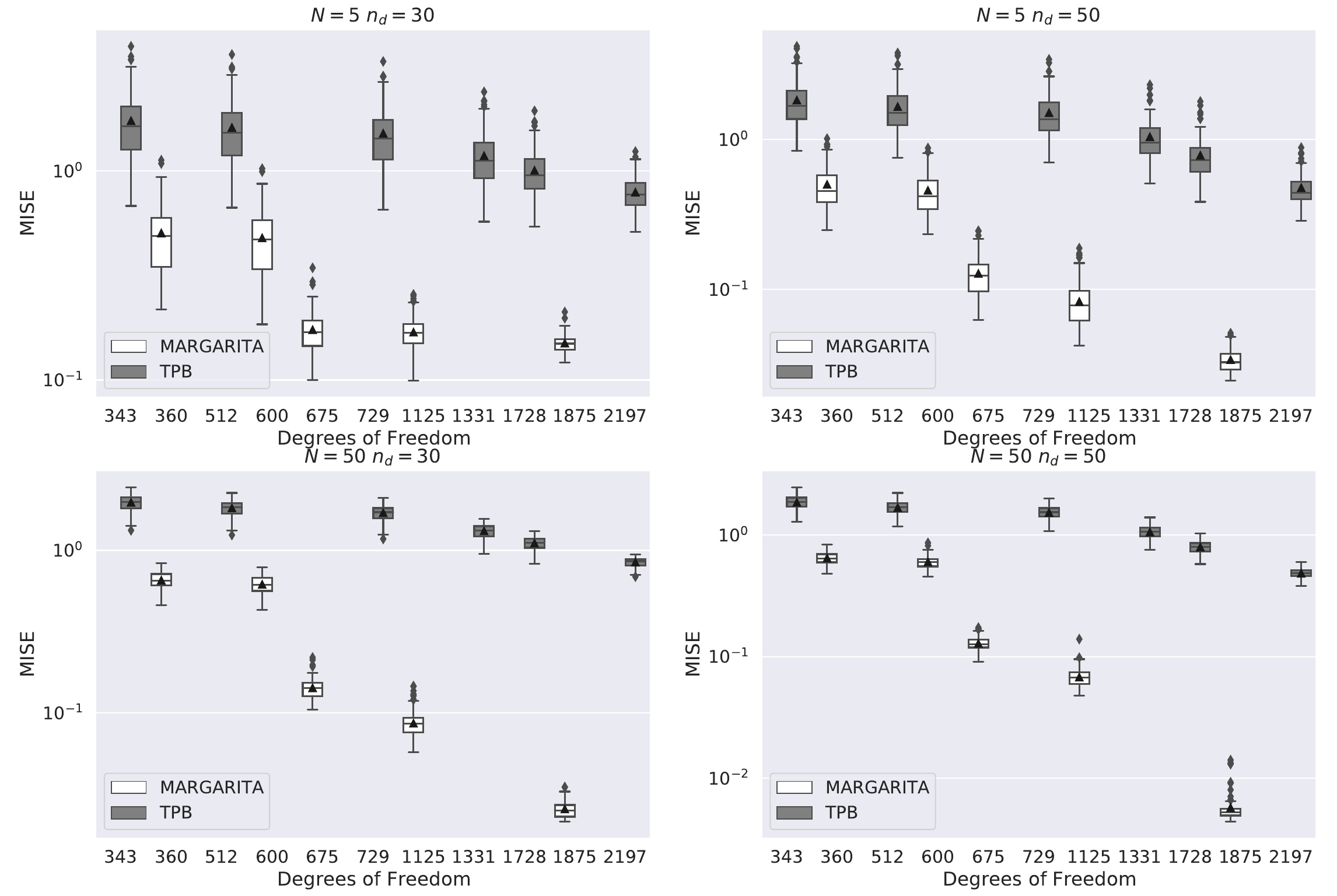}
  \caption{Comparison of the fit performance measured by MISE between the TPB estimated by the sandwhich smoother (gray) and \texttt{MARGARITA} (white) as a function of the total number of degrees of freedom. For each panel $K_t=20$ and $\sigma^2=10$. The Y-axis is plotted on log-scale for clarity.}
  \label{fig:dof_mise_comparison}
\end{figure}
\begin{table}
\centering
\scriptsize
\caption{moMISE comparison of \texttt{MARGARITA} (a), to the tensor product basis estimated by sandwhich smoother (b).}
\subcaption*{(a) Marginal Product Basis}
\begin{tabular}{rrrr|rrrrrr}
\toprule
 \multicolumn{4}{l}{} & \multicolumn{3}{l}{$K_{fit}=15$} & \multicolumn{3}{l}{$K_{fit}=25$} \\
 \multicolumn{4}{l}{} & \multicolumn{6}{l}{$m_{d}$ (model d.o.f.)}\\
$K_{true}$ & $\sigma^2$ &   $N$ &  $n_d$ &  8 (360) &  15 (675) &  25 (1,125) & 8 (600) &  15 (1,125) & 25 (1,875) \\
\midrule
     10 &     0.5 &   5 &  30 &       0.0890 &        0.0433 &        0.0433 &       0.0511 &        0.0030 &        0.0033 \\
     10 &     0.5 &   5 &  50 &       0.0932 &        0.0405 &        0.0402 &       0.0560 &        0.0006 &        0.0006 \\
     10 &     0.5 &  50 &  30 &       0.1186 &        0.0426 &        0.0411 &       0.0790 &        0.0017 &        0.0007 \\
     10 &     0.5 &  50 &  50 &       0.1196 &        0.0398 &        0.0394 &       0.0757 &        0.0001 &        0.0001 \\
     10 &    10.0 &   5 &  30 &       0.1148 &        0.1049 &        0.1634 &       0.1039 &        0.0919 &        0.1579 \\
     10 &    10.0 &   5 &  50 &       0.0976 &        0.0530 &        0.0647 &       0.0616 &        0.0180 &        0.0312 \\
     10 &    10.0 &  50 &  30 &       0.1166 &        0.0525 &        0.0632 &       0.0800 &        0.0146 &        0.0263 \\
     10 &    10.0 &  50 &  50 &       0.1149 &        0.0423 &        0.0441 &       0.0745 &        0.0019 &        0.0033 \\
     20 &     0.5 &   5 &  30 &       0.4910 &        0.1158 &        0.0684 &       0.4418 &        0.0511 &        0.0059 \\
     20 &     0.5 &   5 &  50 &       0.4872 &        0.1075 &        0.0627 &       0.4429 &        0.0475 &        0.0015 \\
     20 &     0.5 &  50 &  30 &       0.6334 &        0.1239 &        0.0679 &       0.5850 &        0.0564 &        0.0025 \\
     20 &     0.5 &  50 &  50 &       0.6388 &        0.1182 &        0.0646 &       0.5998 &        0.0539 &        0.0009 \\
     20 &    10.0 &   5 &  30 &       0.5058 &        0.1587 &        0.1723 &       0.4736 &        0.1200 &        0.1492 \\
     20 &    10.0 &   5 &  50 &       0.4994 &        0.1185 &        0.0841 &       0.4534 &        0.0594 &        0.0286 \\
     20 &    10.0 &  50 &  30 &       0.6544 &        0.1353 &        0.0867 &       0.6008 &        0.0682 &        0.0220 \\
     20 &    10.0 &  50 &  50 &       0.6415 &        0.1231 &        0.0674 &       0.5915 &        0.0579 &        0.0039 \\
\bottomrule
\end{tabular}
\bigskip
\subcaption*{(b) Tensor Product Basis}
\begin{tabular}{rrrr|rrrrrr}
\toprule
 \multicolumn{4}{l}{} & \multicolumn{6}{l}{$m_{d}$ (model d.o.f.)}\\
$K_{true}$ & $\sigma^2$ &   $N$ &  $n_d$ &   7 (343) &   8 (512) &   9 (729) &  11 (1,331) &  12 (1,728) & 13 (2,197) \\
\midrule
     10 &     0.5 &   5 &  30 &  1.1409 &  0.9520 &  0.9096 &  0.5327 &  0.4817 &  0.2136 \\
     10 &     0.5 &   5 &  50 &  1.1251 &  0.9326 &  0.8870 &  0.5067 &  0.4581 &  0.1823 \\
     10 &     0.5 &  50 &  30 &  1.1457 &  0.9556 &  0.9130 &  0.5338 &  0.4840 &  0.2134 \\
     10 &     0.5 &  50 &  50 &  1.1170 &  0.9251 &  0.8789 &  0.5020 &  0.4516 &  0.1832 \\
     10 &    10.0 &   5 &  30 &  1.1579 &  1.0224 &  1.0136 &  0.7705 &  0.7559 &  0.6051 \\
     10 &    10.0 &   5 &  50 &  1.1476 &  0.9677 &  0.9300 &  0.5822 &  0.5492 &  0.3027 \\
     10 &    10.0 &  50 &  30 &  1.1822 &  1.0465 &  1.0369 &  0.7861 &  0.7721 &  0.6109 \\
     10 &    10.0 &  50 &  50 &  1.1518 &  0.9679 &  0.9334 &  0.5857 &  0.5517 &  0.3036 \\
     20 &     0.5 &   5 &  30 &  1.8024 &  1.6254 &  1.4738 &  0.9819 &  0.7065 &  0.3806 \\
     20 &     0.5 &   5 &  50 &  1.7445 &  1.5693 &  1.4194 &  0.9342 &  0.6623 &  0.3464 \\
     20 &     0.5 &  50 &  30 &  1.8352 &  1.6530 &  1.4973 &  1.0030 &  0.7218 &  0.3842 \\
     20 &     0.5 &  50 &  50 &  1.8207 &  1.6349 &  1.4739 &  0.9748 &  0.6911 &  0.3580 \\
     20 &    10.0 &   5 &  30 &  1.8076 &  1.6757 &  1.5688 &  1.2221 &  1.0375 &  0.8092 \\
     20 &    10.0 &   5 &  50 &  1.8592 &  1.6807 &  1.5290 &  1.0597 &  0.7971 &  0.4817 \\
     20 &    10.0 &  50 &  30 &  1.9789 &  1.8289 &  1.7073 &  1.3210 &  1.1113 &  0.8472 \\
     20 &    10.0 &  50 &  50 &  1.8527 &  1.6765 &  1.5280 &  1.0560 &  0.7946 &  0.4833 \\
\bottomrule
\end{tabular}
\label{tab:mpf_vs_tpb}
\end{table}
\par
\revised{Table \ref{tab:F_ADMM_vs_FCP_TPA} displays the relative difference in moMISE, defined as 
\begin{equation}\label{eqn:rel_moMISE}
    \frac{\text{moMISE}_{\text{FCP-TPA}} - \text{moMISE}_{\mathtt{MARGARITA}}}{\text{moMISE}_{\text{FCP-TPA}}}
\end{equation}
between the fits resulting from the FCP-TPA algorithm and \texttt{MARGARITA} for all marginal and global ranks and simulation settings. As all but one of the entries in the table is positive, \texttt{MARGARITA} is nearly uniformly outperforming the FCP-TPA. We see that the relative boost in performance from \texttt{MARGARITA} generally increases with $K_{fit}$ and marginal rank $m_d$}.
\begin{table}
    \centering
    \small
    \caption[Relative difference in moMISE for FCP-TPA and \texttt{MARGARITA} for marginal ranks 15 and 25 and $K_{fit}=8, 15, 25$.]{Relative difference in moMISE for FCP-TPA and \texttt{MARGARITA} for marginal ranks 15 and 25 and $K_{fit}=8, 15, 25$. Positive values indicate lower moMISE for \texttt{MARGARITA}. A grid search to select $\lambda_d$ was performed for each fit and the results from the optimal value are reported. The entry in bold face indicates the \textbf{only case} that FCP-TPA outperformed \texttt{MARGARITA}.}
    \centering
    \begin{tabular}{rrrr|rrrrrr}
    \toprule
     \multicolumn{4}{l}{} & \multicolumn{3}{l}{$K_{fit}=15$} & \multicolumn{3}{l}{$K_{fit}=25$} \\
     \multicolumn{4}{l}{} & \multicolumn{6}{l}{$m_{d}$}\\
    $K_{true}$ & $\sigma^2$ &   $N$ &  $n_d$ &  8  &  15 &  25 & 8 &  15 & 25  \\
    \midrule
     10 &     0.5 &   5 &  30 &  0.2113 &  0.2506 &  0.1738 &  0.3062 &  0.7949 &  0.7160 \\
     10 &     0.5 &   5 &  50 &  0.2544 &  0.2433 &  0.1534 &  0.3634 &  0.9328 &  0.8518 \\
     10 &     0.5 &  50 &  30 &  0.2062 &  0.2118 &  0.1555 &  0.2323 &  0.9228 &  0.8734 \\
     10 &     0.5 &  50 &  50 &  0.1597 &  0.1850 &  0.1167 &  0.2954 &  0.9757 &  0.9437 \\
     10 &    10.0 &   5 &  30 &  0.3684 &  0.2533 & \textbf{-0.0770} &  0.4413 &  0.4426 &  0.4895 \\
     10 &    10.0 &   5 &  50 &  0.3086 &  0.3824 &  0.2120 &  0.4409 &  0.6580 &  0.4486 \\
     10 &    10.0 &  50 &  30 &  0.3487 &  0.3509 &  0.1540 &  0.3529 &  0.6836 &  0.5607 \\
     10 &    10.0 &  50 &  50 &  0.2080 &  0.3300 &  0.2446 &  0.3079 &  0.7229 &  0.7629 \\
     20 &     0.5 &   5 &  30 &  0.0946 &  0.4304 &  0.4429 &  0.1152 &  0.6157 &  0.8706 \\
     20 &     0.5 &   5 &  50 &  0.1139 &  0.4149 &  0.4505 &  0.1243 &  0.6088 &  0.9428 \\
     20 &     0.5 &  50 &  30 &  0.0708 &  0.4452 &  0.4624 &  0.0743 &  0.6107 &  0.9699 \\
     20 &     0.5 &  50 &  50 &  0.0835 &  0.4274 &  0.4554 &  0.0880 &  0.6092 &  0.9816 \\
     20 &    10.0 &   5 &  30 &  0.1235 &  0.4651 &  0.3642 &  0.1385 &  0.5306 &  0.6020 \\
     20 &    10.0 &   5 &  50 &  0.1329 &  0.4600 &  0.5117 &  0.1382 &  0.5928 &  0.7476 \\
     20 &    10.0 &  50 &  30 &  0.0954 &  0.4489 &  0.5313 &  0.0753 &  0.5856 &  0.8161 \\
     20 &    10.0 &  50 &  50 &  0.0915 &  0.4349 &  0.5089 &  0.0983 &  0.6155 &  0.9299 \\
\bottomrule
    \end{tabular}
   \label{tab:F_ADMM_vs_FCP_TPA}
\end{table}
\par 
\revised{Figure~\ref{fig:mpf_computational_time_comparison} displays a comparison of the computational time between FCP-TPA and \texttt{MARGARITA} for a variety of model and data sizes. We note that the methods exhibit comparable performance for the small sample, small domain case, while \texttt{MARGARITA} is much faster for the large sample, large domain case.}
\begin{figure}[t]
    \centering
    \includegraphics[scale=0.37]{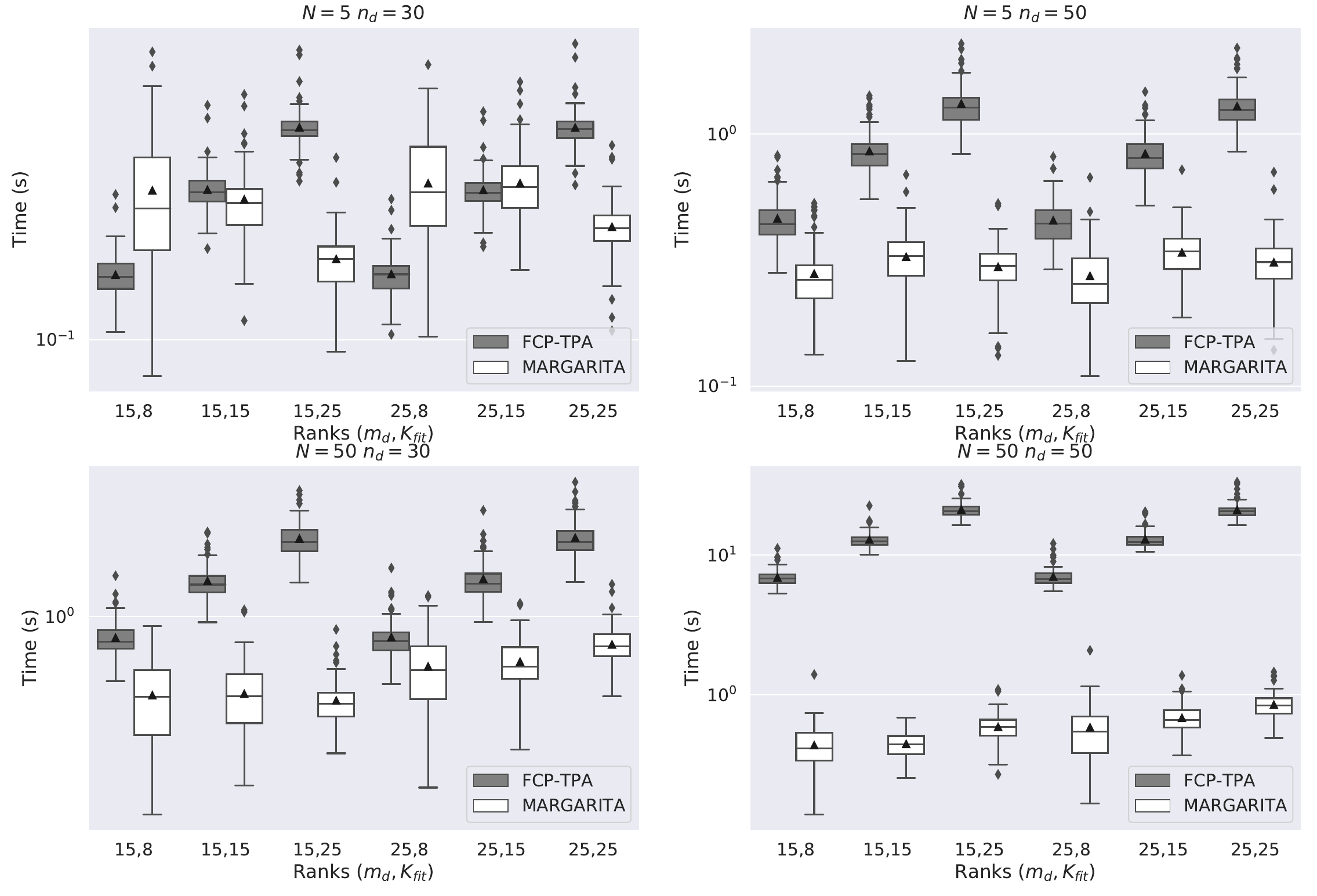}
    \caption[Computational time comparison between FCP-TPA and \texttt{MARGARITA}]{Computational time comparison between FCP-TPA and \texttt{MARGARITA}. The Y-axis is plotted on log-scale for clarity.}
    \label{fig:mpf_computational_time_comparison}
\end{figure}
\subsection{MargFPCA and Two-Stage FPCA Analysis}\label{ssec:margfpca_vs_margarita}
\revised{The eigenfunctions determining a non-stationary, non-separable anisotropic covariance function $C(\mathbf{x},\mathbf{y})$ over $\mathcal{M}$ are defined as follows. Denote the eigen-decomposition of the pairwise $\mathbb{L}^2$ inner product matrix of the tensor product basis system $\boldsymbol{J}_{\boldsymbol{\phi}_{1}\otimes\boldsymbol{\phi}_{2}} = \boldsymbol{P}\boldsymbol{\Gamma}\boldsymbol{P}^\prime$. The collection of $m_1m_2$ orthonormal eigenfunctions are defined according to $\boldsymbol{\psi} = \boldsymbol{\Gamma}^{-1/2}\boldsymbol{P}^\prime\text{vec}(\boldsymbol{\phi}_1\otimes\boldsymbol{\phi}_2)$. The eigenvalue corresponding to the $k$th eigenfunction is given by an exponential decay model $\rho_k = \text{exp}(-0.5k)$. Realizations of the random function $U_i \sim U$ are simulated using a Gaussian process assumption and then evaluated on an equispaced $200\times 200$ grid on $\mathcal{M}$. $\boldsymbol{\phi}_{1}$ and $\boldsymbol{\phi}_{2}$ are used as the marginal basis for fitting and are taken to be equispaced cubic b-splines with $m_1=10$ and $m_2=8$. Note that these marginal ranks can be consistently identified via our rank selection procedure (see the bottom panel of Figure~\ref{fig:marg_rank_selection}).}
%The second order derivative is used to define the marginal roughness penalty and the coefficients are penalized with a ridge penalty. 
\par 
\revised{Table~\ref{tab:MARGARITA_vs_MARG_FPCA} shows the results of the simulation study in Section 5.2 for additional training sample sizes and ranks. The interpretation of the results echo those in the main text. Table~\ref{tab:two_stage_eigenfunction_estimates} shows the performance of the proposed two-stage FPCA procedure for estimating the first three eigenfunctions. The estimation performance is quantified using the angular error:
$AE(\hat{\psi}_i) :=  1 - |\langle\psi_i,\hat{\psi}_i\rangle_{\mathbb{L}^2(\mathcal{M})}|$. As the eigenfunctions are unit norm, the $AE(\hat{\psi}_i) \in [0,1]$, with $0$ indicating perfect recovery and $1$ indicating orthogonality. As desired, we see that the $AE$ converging to $0$ as $N$ increases. Figure~\ref{fig:eigenfunction_estimates} shows the first three eigenfunctions (right column) and estimates from the two stage FPCA for $N_{train}=100$ (left column). Notice that the eigenfunctions are extremely high frequency, making this a challenging estimation problem. Despite this, the fits are nearly visually identical.}
\begin{table}
\centering
\small
\caption{Monte Carlo average MISE for representing a new realization for both \texttt{MARGARITA} and \texttt{MargFPCA}, for a variety of ranks and training sample sizes.}
\begin{tabular}{rr|rrrr|rrrr}
\toprule
 & & \mc{MARGARITA} & \mc{MargFPCA} \\
&  $N_{train}$ & $10$ & $20$ & $50$ & $100$ & $10$ & $20$ & $50$ & $100$  \\
\midrule
\multirow{5}{0.1mm}{$K$} & 5  &     1.5582 &     1.4204 &     1.3652 &      1.3607 &    2.2583 &    2.1277 &    2.1387 &     2.1317 \\
& 10 &     0.8247 &     0.7333 &     0.6373 &      0.6104 &    1.9119 &    1.8598 &    1.8380 &     1.8800 \\
& 15 &     0.5177 &     0.3554 &     0.3037 &      0.3009 &    1.6870 &    1.6224 &    1.5539 &     1.6279 \\
& 20 &     0.3230 &     0.1904 &     0.1443 &      0.1324 &    1.4540 &    1.4180 &    1.4030 &     1.3785 \\
& 30 &     0.1205 &     0.0535 &     0.0319 &      0.0288 &    1.0688 &    1.0780 &    1.0772 &     1.0796 \\
\bottomrule
\end{tabular}
\label{tab:MARGARITA_vs_MARG_FPCA}
\end{table}
\begin{table}[]
\centering
\begin{tabular}{lrrr}
\toprule
$N_{\text{train}}$ &     20  &     50 &     100 \\
\midrule
$\psi_1$ & 0.2248 $\pm$ 0.0566 &  0.0929 $\pm$ 0.0231&  0.0341 $\pm$ 0.0099\\
$\psi_2$ &  0.3401 $\pm$ 0.0530&  0.1677 $\pm$ 0.0306 &  0.0751 $\pm$ 0.0150 \\
$\psi_3$ & 0.3898 $\pm$ 0.0537 &  0.1861 $\pm$ 0.0387 &  0.0842 $\pm$  0.0153\\
\bottomrule
\end{tabular}
    \caption{\revised{Average angular distance with standard errors for the two stage eigenfunction estimates.}}
    \label{tab:two_stage_eigenfunction_estimates}
\end{table}
\begin{figure}
    \centering
\includegraphics[scale=0.4]{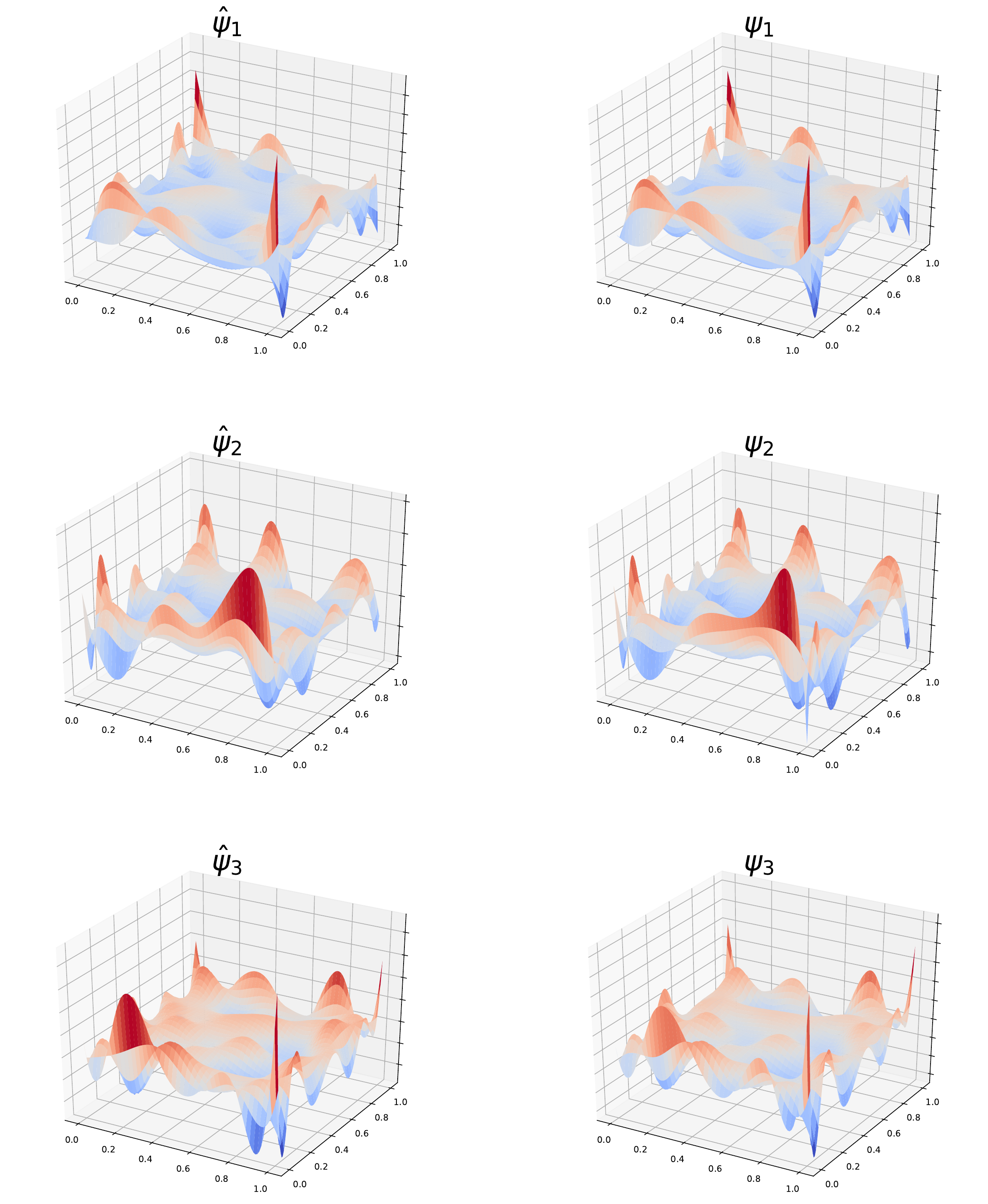}
    \caption{\revised{First three true eigenfunctions (right column) and their estimates using the two-stage FPCA with $N=100$ (left column).}}
    \label{fig:eigenfunction_estimates}
\end{figure}
\section{TBI Data Description and Preprocessing}\label{sec:additional_real_data_analysis}

All subjects in our study were referred to the University of Rochester Medical Imaging Center and imaged on the same 3T MRI scanner. Study inclusion criteria included history of concussion, while exclusion criteria included dental braces, prior brain surgery, ventricular shunt, skull fractures, or other standard contraindications for MR imaging. Diagnosis of concussion was made by neurologists, physical medicine and rehabilitation physicians, and sports medicine physicians. The control group consisted of young athletes with no history of concussion. The University Institutional Review Board approved this retrospective study. All MRI examinations were reviewed by an experienced neuroradiologist for any artifacts that might affect the quality of the study, as well as for the presence of recent or remote intracranial hemorrhage, signal abnormalities in the brain, hydrocephalus, congenital or developmental anomalies.
\par 
The diffusion MRI data was collected on a single 3T scanner using a 20-channel head coil (Siemens Skyra, Erlangen, Germany). Diffusion imaging was performed with a b-value of $1000 \frac{s}{mm^2}$, using 64 diffusion-encoding directions. In addition, a $b=0 \frac{s}{mm^2}$ image was collected for signal normalization. Additional dMRI parameters included: $FOV = 256 \times 256 mm$, number of slices = $70$, image resolution = $2\times 2 \times 2 mm^3$, $TR/TE = 9000/88 ms$, Generalized autocalibrating partially parallel acquisition (GRAPPA) factor $= 2$. Acquisition of dMRI data took 10 minutes and 14 seconds. A Gradient-recalled echo (GRE) sequence was also collected with $TEs = 4.92, 7.38 ms$ at the same resolution of the dMRI to correct for susceptibility-induced distortion effects. A diffusion tensor model (DTI) was fit to each subject's diffusion data and used to compute the per-voxel FA. Registration of the FA images to to the ICBM 2009c Nonlinear Symmetric 1mm template \citep{fonov2009} was then performed using the popular ANTS software \citep{avants2009advanced}. The domain of analysis was constrained to be the convex hull of a rectangular $115\times 140 \times 120$ voxel grid in the template space covering the white matter, i.e. the raw data tensor $\mathcal{Y} \in \mathbb{R}^{115\times 140 \times 120 \times 50}$. A white matter mask was also applied to the aligned data.

\end{document}